\newcommand{\wbar}{\br{w}} 
\renewcommand{\sl}{\mathfrak{sl}}
\newcommand{\note}[1]{{\color{red}{#1}}} 
\newcommand{\zbar}{\br{z}}
\newcommand{\so}{\mathfrak{so}}
\newcommand{\dpa}[1]{\frac{\partial}{\partial #1}}
\newcommand{\Obs}{\op{Obs}}
\newcommand{\eps}{\epsilon}
\newcommand{\g}{\mathfrak{g}}
\newcommand{\Ool}{\Oo_{loc}}
\newcommand{\E}{\mscr{E}}
\newcommand{\what}{\widehat}
\newcommand{\tr}{\triangle}
\newcommand{\K}{\mbb K}
\newcommand{\til}{\widetilde}
\newcommand{\mscr}{\mathscr}
\newcommand{\br}{\overline}
\newcommand{\iso}{\cong}
\newcommand{\C}{\mathbb C}
\newcommand{\norm}[1]{\left\| #1 \right\|}
\newcommand{\Oo}{\mscr O}
\newcommand{\Z}{\mathbb Z}
\newcommand{\op}{\operatorname}
\newcommand{\mbf}{\mathbf}
\newcommand{\mbb}{\mathbb}
\newcommand{\mf}{\mathfrak}
\newcommand{\mc}{\mathcal}
\newcommand{\ip}[1]{\left\langle #1 \right\rangle}
\newcommand{\abs}[1]{\left| #1 \right|}
\newcommand{\R}{\mbb R}
\renewcommand{\d}{\mathrm{d}}
\newcommand{\dbar}{\br{\partial}}
\DeclareMathOperator{\Sym}{Sym}
\DeclareMathOperator{\Tr}{Tr}
\newtheoremstyle{thm}
  {7pt}
  {7pt}
  {\itshape}
  {}
  {\bf}
  {.}
  {5pt}
  {\thmnumber{#2 }\thmname{#1}\thmnote{ (#3)}}
\newtheoremstyle{def}
  {7pt}
  {10pt}
  {\itshape}
  {}
  {\bf}
  {.}
  {5pt}
  {\thmnumber{#2} \thmname{#1}\thmnote{ (#3)}}
\newtheoremstyle{rem}
  {4pt}
  {10pt}
  {}
  {}
  {\itshape}
  {:}
  {3pt}
  {}
\newtheoremstyle{texttheorem}
  {8pt}
  {8pt}
  {\itshape}
  {}
  {\bf}
  {. \hspace{5pt}}
  {3pt}
  {}
\newtheorem*{theorem*}{Theorem}
\newtheorem*{lemma*}{Lemma}
\newtheorem*{corollary*}{Corollary}
\newtheorem*{proposition*}{Proposition}
\newtheorem*{definition*}{Definition}
\newtheorem{theorem}{Theorem}[subsection]
\newtheorem{thm-def}{Theorem/Definition}[theorem]
\newtheorem{proposition}[theorem]{Proposition}
\newtheorem{lemma}[theorem]{Lemma}
\numberwithin{equation}{subsection}
\theoremstyle{def}
\newtheorem{definition}[theorem]{Definition}
\theoremstyle{rem}
\newcommand{\cinfty}{C^{\infty}}
\newcommand{\Lap}{\bigtriangleup}
\newcommand{\vbar}{\br{v}}
\newcommand{\ubar}{\br{u}}
\newcommand{\PT}{\mbb{PT}}
\newcommand{\CP}{\mbb{CP}}
\newcommand{\half}{\tfrac{1}{2}}
\newcommand{\sixth}{\tfrac{1}{6}}
\newcommand{\sh}{\sf h}
\renewcommand{\i}{\mathrm{i}}
\renewcommand{\c}{\mathrm{c}}
\renewcommand{\t}{\mathrm{t}}
\title{Quantizing holomorphic field theories on twistor space}
\author{Kevin Costello}
\address{Perimeter Institue for Theoretical Physics}
\email{kcostello@perimeterinstitute.ca}
\begin{document}

\begin{abstract}
This paper studies a class of four-dimensional quantum field theories which arise by quantizing local holomorphic field theories on twistor space.   These theories have some remarkable properties: in particular, all correlation functions are rational functions. The two main examples are the $WZW_4$ model of Donaldson and Losev, Moore, Nekrasov and Shatashvili, and self-dual Yang-Mills theory. In each case, anomalies on twistor space must be cancelled by a Green-Schwarz mechanism, which introduces additional fields.  For $WZW_4$, this only works for $G = SO(8)$ and the additional field is gravitational. For self-dual Yang-Mills, this works for $SU(2)$, $SU(3)$, $SO(8)$ and the exceptional groups, and the additional field is an axion.  
\end{abstract}

\maketitle%

\section{Introduction}
This paper studies a class of quantum field theories in dimension $4$ which come from local holomorphic quantum field theories on twistor space. We call them \emph{twistorial} theories.  Twistorial theories enjoy many remarkable properties.  Let me highlight some of them:
\begin{enumerate} 
	\item All correlation functions
	$$\ip{\Oo_1(x_1) \dots \Oo_n(x_n)}$$
		of local operators  are entire analytic functions of the positions $x_i \in \C^4$, with singularities only on the analytically-continued light cone $\norm{x_i - x_j}^2 = 0$. 
\item  The renormalization group trajectory of a twistorial theory, in the space of analytically-continued perturbative QFTs, is periodic with period $2 \pi \i$. 	
\item Twistorial theories often have integrable properties, similar to those of two-dimensional integrable field theories (although we will not develop this aspect of twistorial theories in detail in this paper). 
\end{enumerate}
Some twistorial theories are non-renormalizable by power-counting, and yet all counter-terms are fixed uniquely by requiring that they come from local expressions on twistor space. 

There are very few twistorial theories. The properties listed above immediately fail for generic Lagrangians in dimension $4$, including Yang-Mills theory and the $\phi^4$ theory\footnote{ Indeed, property (2) precludes any logarithmic dependence on the scale, whereas both Yang-Mills theory and $\phi^4$ have a one-loop log divergence. Property (1) implies that the OPE between any two operators at $x_1,x_2$ can not contain any terms involving $\log \norm{x_1 - x_2}$, whereas any interacting scalar field theory in dimension $4$ where the interaction has fewer than $2$ derivatives will have a logarithmic OPE.} .  A Lagrangian needs to be very finely tuned for these properties to hold, and it is surprising that there are any QFTs at all with these properties. The main result of this paper is a (rigorous) construction of several $4d$ QFTs which satisfy these properties.

I will sketch the construction of a number of twistorial field theories, but focus in detail on two examples.  

\subsection{$WZW_4$ as a twistorial theory}
The first example studied in detail in this paper is the four-dimensional WZW model of Donaldson \cite{Donaldson:1985zz}  and Losev, Moore, Nekrasov and Shatashvili \cite{Losev:1995cr}. The fundamental field of this model is a map
\begin{equation} 
	\sigma : \R^4 \to G 
\end{equation}
where $G$ is a Lie group. In terms of the current $J = \sigma^{-1} \d \sigma$, the Lagrangian is  
\begin{equation}
		 \int_{\R^4}  \op{tr} J\wedge \ast J + \tfrac{1}{3}\int_{\R^4 } A \wedge \op{tr} J\wedge[J,J].
\end{equation}
Here $A$ is a background $U(1)$ gauge field on $\R^4$ whose fields strength is the K\"ahler form $\omega$. 
	
The first term in the Lagrangian is the standard $\sigma$-model Lagrangian.  The $\sigma$-model has a topological $U(1)$ symmetry in dimension $4$, because $\pi_3(G) = \Z$.  The current for this symmetry is $\op{tr}(J,[J,J])$.  The second term in the Lagrangian simply means that we have a background $U(1)$ gauge field for this topological symmetry. 

It was stated in \cite{WH} that this model arises from holomorphic Chern-Simons on twistor space; this result was proved in \cite{2011.04638,2011.05831} with several interesting generalizations presented in \cite{2011.04638}. 

Our results apply to this model with $G = SO(8)$, 
coupled to some additional ``gravitational'' fields which we refer to as ``closed string fields'' because of their origin in a certain twistor string theory. The restriction to $SO(8)$ is forced on us by a Green-Schwarz anomaly cancellation on twistor space. 

Without the addition of this closed-string field, the model is not twistorial.  In section \ref{section_twistor_construction} we show that an anomaly prevents one constructing the model as a local QFT on twistor space. In section \ref{sec:wzw4_gs} we show that property (1) above fails without the addition of the closed-string field: there are logarithmic terms in the two-loop operator product expansion which are canceled by a Green-Schwarz mechanism when we introduce the closed-string field.

	The closed string field is a scalar field $\rho$, which we view as the K\"ahler potential for a closed $(1,1)$-form $\partial \dbar \rho. $  The field $\rho$ is defined up to the addition of a constant; it might be better to think of $\rho$ as being a closed $1$-form.

The $\sigma$-model can be defined on K\"ahler manifolds of complex dimension $2$, and so has a K\"ahler stress-energy tensor $T_{\text{K\"ahler}}$ which is a $(1,1)$-form that couples to the variation of the K\"ahler form (in a fixed complex structure).  The K\"ahler potential $\rho$ couples to this by
\begin{equation} 
	 \int T_{\text{K\"ahler}} \partial \dbar \rho. 
\end{equation}
This makes it clear that we should treat $\rho$ as as gravitational, because it couples to the stress-energy tensor of the $4d$ WZW model.  

The Lagrangian for the K\"ahler potential, to the order we compute it, is 
\begin{equation}
	\begin{split}
		\half   \int (\Lap \rho)^2 +   \frac{1}{16 \pi}    
		\int_{u,\ubar} \op{tr}( J^{0,1} \partial J^{0,1}  ) \partial \rho + C \left( (\Lap \rho)  (\partial \dbar \rho)^2  - \frac{2}{3} (\Lap \rho)^3 \omega^2\right) + O(\rho^3).
	\end{split}
\end{equation}
Note the unusual kinetic term, involving the square of the Laplace operator. The constant $C$ is non-zero, and its value is determined in principle by anomaly cancellation on twistor space, but we do not determine it.   

To quadratic order in $\rho$, the equations of motion are that  the scalar curvature of the K\"ahler metric given by $\omega + \partial \dbar\rho$ vanishes. It is known \cite{derdzinski1983self} that every scalar flat K\"ahler manifold is anti-self-dual,  and so has a description in terms of an integrable twistor space\footnote{I'm very grateful to David Skinner for pointing out the result in \cite{derdzinski1983self} and for other helpful discussions.}   .   This suggests that the Lagrangian enforces vanishing of the scalar curvature to all orders. Unfortunately I was unable to prove this.

\subsection{Self-dual Yang-Mills theory }
A simpler, but more physically relevant, twistorial field theory we study is a variant of self-dual Yang-Mills theory. 

Self-dual Yang-Mills theory is a degenerate limit of ordinary Yang-Mills theory, where the fields are a gauge field $A \in \Omega^1(\R^4,\mf{g})$ and an adjoint-valued self-dual two-form $B \in \Omega^2_+(\R^4,\mf{g})$. The Lagrangian is
\begin{equation} 
	\int \op{Tr} ( B F(A)_+ ). 
\end{equation}
It is well-known \cite{Ward:1977ta,Boels:2006ir} that, at the classical level, this Lagrangian comes from a local Lagrangian on twistor space.

In section \ref{sec:sdym_twistor} we demonstrate that this fails at the quantum level, because of an anomaly on twistor space.    

In some cases, we can use a Green-Schwarz mechanism to cancel the anomaly, making self-dual Yang-Mills into a twistorial theory at the quantum level.   We can do this for any Lie algebra $\g$ for which, for $X \in \g$, 
\begin{equation} 
	\op{Tr}(X^4 ) = \lambda_{\g}^2 \op{tr}(X^2)^2 
\end{equation}
where $\op{Tr}$, $\op{tr}$ indicate trace in the adjoint and fundamental representations respectively\footnote{More generally, one can add matter in any real representation $R$, in which case the left hand side must be modified by subtracting $\op{tr}_R(X^4)$.} and $\lambda_{\g}$ is a $\g$-dependent constant.

This identity holds for $\sl_2$, $\mf{sl}_3$, $\mf{so}(8)$ and all exceptional groups, with some values of $\lambda_{\g}$ being
\begin{equation} 
	\lambda_{\sl_2}^2 = 8 \ \ \lambda_{\sl_3}^2 = 9 \ \ \lambda_{\so_8}^2 =  \frac{3} {2} 
\end{equation}
The Green-Schwarz mechanism also holds for $\g = \so(N_c)$ with matter in $N_f = N_c - 8$ copies of the fundamental representation, and for\footnote{We write $\sl(N)$ instead of $\mf{su}(N)$ as we will work with analytically continued fields; a choice of path integral contour gives us $\mf{su}(N)$ gauge theory, and in perturbation theory all results are independent of the contour.  } $\g = \sl(N_c)$ with $N_f = N_c$.  

The closed-string field we need to adjoin is $\rho$ as before, where the full Lagrangian is
\begin{equation} 
	\int \op{Tr} (B F(A)_+ ) + \half \int (\Lap \rho)^2  +  \lambda_{\g} \frac{1}{8 \pi  \sqrt{3} }\int  \d \rho CS(A) \label{eqn_sdym_axion}
\end{equation}
where $CS(A)$ is the Chern-Simons three-form normalized so $\d CS(A) = \op{tr}(F(A)^2)$.    The field $\rho$ is a kind of axion; we can take it to be periodic because of the quantization of $\op{tr}(F(A)^2)$.

\subsection{Results about $WZW_4$}
	Our main result about the $WZW_4$ model, plus the closed-string field $\rho$,   is that the counter-terms can all be fixed \emph{uniquely} in a scheme-independent way.  That is, the model is well-defined at the quantum level, despite being non-renormalizable.  

	Since this seems to dramatically contradict the usual philosophy, let me sketch the constraints that fix the counter-terms before discussing the models in more detail. 

The most direct way to fix the counter-terms is to use the fact, which we discuss in more detail in section \ref{twistor_generalities},  that the classical Lagrangian arises as the dimensional reduction of a local Lagrangian on twistor space, $\PT = \Oo(1)^2 \to \CP^1$.   At the quantum level, we can ask that the counter-terms are expressed as local functionals on twistor space.  Here, being \emph{local} on twistor space is essential: any counter-term on $\R^4$ can be lifted to non-local expressions on twistor space, but only very special counter-terms come from local expressions on twistor space.   This constraint, which is equivalent to asking that the quantum theory is dimensionally reduced from a local field theory twistor space, turns out to fix all counter-terms uniquely.    

The twistor-space theory is a type I topological string \cite{1905.09269} , and the topological string version of the Green-Schwarz mechanism fixes the group to be $SO(8)$.

It would be more satisfying to have a direct space-time way to constrain the counter-terms.    In section \ref{twistor_generalities} we will argue in detail that any local holomorphic field theory on twistor space gives rise to a theory on $\R^4$ satisfying properties (1) and (2) above. 
Both of these properties constrain the counter terms very strongly, and we conjecture that they are enough to uniquely fix all counter-terms.

As mentioned above, this argument gives a no-go theorem:
\begin{theorem} 
	Any field theory on $\R^4$ which has a one-loop logarithmic divergence does not arise from a local holomorphic field theory on twistor space. 
\end{theorem}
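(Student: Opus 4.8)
The plan is to deduce this from property (2) of the introduction — that the renormalization-group trajectory of a twistorial theory is periodic, with period $2\pi\i$, in the complexified RG ``time'' — which is established in Section \ref{twistor_generalities}. Concretely, suppose a theory on $\R^4$ arises by the twistor construction of Section \ref{twistor_generalities} from a local holomorphic field theory on $\PT = \Oo(1)^2\to\CP^1$. The dilation action of $\R_{>0}$ on $\R^4$ complexifies to a holomorphic $\C^{\times}$-action, which lifts to the holomorphic $\C^{\times}$-action on $\PT$ rescaling the fibres of $\Oo(1)^2$. Since the twistor theory and its renormalization are holomorphic and $\C^{\times}$-equivariant, the renormalized effective interaction of the $4d$ theory is a \emph{single-valued holomorphic} function of the complexified scale $\lambda\in\C^{\times}$; writing $\lambda = e^{t}$, this is exactly the statement that the RG trajectory $t\mapsto I[t]$ is invariant under $t\mapsto t+2\pi\i$.

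Granting this, the deduction is immediate. A one-loop logarithmic divergence in a $4d$ theory is, by the standard relation between divergences and the beta function, equivalent to a nonzero one-loop beta coefficient: the one-loop counterterm contains a $\log$ of the cutoff, so after renormalization some coupling runs as $g(\mu) = g(\mu_0) - \beta_1\log(\mu/\mu_0) + O(\hbar^2)$ with $\beta_1\neq 0$, and hence the renormalized effective interaction contains a term linear in the RG time $t = \log\mu$. Such a term cannot be $2\pi\i$-periodic in $t$: under $t\mapsto t+2\pi\i$ the coupling shifts by $-2\pi\i\,\beta_1\neq 0$. Equivalently, a $\log\mu$ term makes the effective interaction a multivalued function on $\C^{\times}$, which is impossible for the restriction of a single-valued holomorphic construction on $\PT$. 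Either way this contradicts property (2), so no $4d$ theory with a one-loop logarithmic divergence can come from a local holomorphic field theory on twistor space.

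The entire content therefore sits in the $\C^{\times}$-equivariant renormalization statement underlying property (2), and this is where I expect the real work to be. The key technical point is that the counterterms needed to renormalize a local holomorphic field theory are themselves \emph{holomorphic local functionals} on $\PT$, and that the $\C^{\times}$-scaling acts on the (finite-dimensional) space of candidate counterterms of fixed cohomological degree and scaling weight through an algebraic character $\lambda\mapsto\lambda^{n}$, $n\in\Z$ — never through $\log\lambda$. Consequently no counterterm proportional to the logarithm of the scale can ever appear on twistor space, and the induced counterterms on $\R^4$ are correspondingly rigid. Once one has set up holomorphic (e.g. $\dbar$-heat-kernel) gauge fixing on $\PT$ and checked that the renormalization of Section \ref{twistor_generalities} can be carried out equivariantly for the $\C^{\times}$-action, the periodicity — and with it the theorem — follows. (As a sanity check, the twistor-space theories of interest are holomorphic Chern–Simons-type theories, which are power-counting finite and whose only quantum obstruction is the anomaly treated by the Green–Schwarz mechanism, so indeed there is never a running of a coupling on $\PT$.)
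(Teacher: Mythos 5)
Your reduction of the theorem to property (2) matches the paper's own framing: the no-go statement is presented there as a corollary of the result, proved in Section \ref{twistor_generalities}, that a classical holomorphic theory on $\PT$ invariant under the $\C^\times$ action scaling the $\Oo(1)^2$ fibres yields a theory on $\R^4$ that is scale invariant at one loop. The first half of your argument --- a one-loop logarithmic divergence forces a coupling to run linearly in $t=\log\mu$, which is incompatible with single-valuedness under $t\mapsto t+2\pi\i$ --- is fine and is exactly how the paper deploys that result.

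The gap is in the step you yourself flag as ``where the real work is.'' You assert that because the finite-dimensional space of candidate local counterterms on $\PT$ carries an algebraic $\C^\times$-action (characters $\lambda\mapsto\lambda^n$, never $\log\lambda$), no counterterm proportional to the logarithm of the scale can appear. This does not follow: the representation theory of the space of counterterms constrains nothing about the cutoff-dependence of their \emph{coefficients}. A $\log\epsilon$ coefficient multiplying a weight-zero local functional is perfectly compatible with the space of functionals decomposing into characters --- this is precisely the situation for $\phi^4$, which is classically scale invariant and still has a one-loop log. What must be shown is that the renormalized one-loop theory admits a \emph{holomorphic} $\C^\times$ action, and the obstacle is that no $\C^\times$-invariant gauge fixing or cutoff exists on $\PT$ (the group is non-compact, so there is no invariant metric), so equivariance of the regularized amplitudes fails on the nose and must be restored up to homotopy. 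The paper closes exactly this gap with a cohomological lemma: a holomorphic action of $\g_{\C}=\C$ is encoded as an action of the dg Lie algebra $\g_{\dbar}=\g^{0,1}[1]\oplus(\g^{1,0}\oplus\g^{0,1})$, which is quasi-isomorphic to $\g_{\R}\otimes_{\R}\C$ for any real form $\g_{\R}$; hence the obstruction group for the holomorphic $\C^\times$ action is identified with that for an $S^1$ action, and the latter vanishes because an $S^1$-invariant gauge and cutoff can be chosen. Your argument, as written, omits this identification, and so the key equivariance statement --- and with it the theorem --- remains unproved.
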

This applies, in particular, to  Yang-Mills theory \footnote{This result appears, at first sight, to be in tension with the interesting recent paper \cite{Popov:2021mfl}.  There is no contradiction, however.   The Lagrangian proposed in \cite{Popov:2021mfl} is based on holomorphic Chern-Simons for a  non-integrable complex structure on twistor space.  Because the $\dbar$ operator for a non-integrable complex structure does not square to zero, the Lagrangian is not invariant under the usual gauge transformations of holomorphic Chern-Simons.   If we do not impose the usual gauge transformations, the theory is not holomorphic. Just like a topological theory is one in which all operators are killed by all translations, a holomorphic theory is one in which all operators are killed by $\partial_{\zbar_i}$ in local holomorphic coordinates.  If we study holomorphic Chern-Simons theory without imposing gauge invariance, this holomorphic condition no longer holds.  } and to the $\phi^4$ theory.

We find in section \ref{sec:wzw4_gs} that the property (2) -- that the OPEs are entire analytic functions -- requires a Green-Schwarz type anomaly cancellation and fails unless we use the closed-string field $\rho$.

\subsection{Results about self-dual Yang-Mills theory}
We have already stated our main result about self-dual Yang-Mills theory: all correlation functions are rational functions, for the gauge groups mentioned above and when we couple to an axion field as in equation \eqref{eqn_sdym_axion}.  

We also show that \emph{without} the axion field there are logarithmic correlation functions.  We find that at two loops there is a logarithm appearing in the OPE of the stress-energy tensor with itself. We also find that at three loops there is a logarithm in the two-point function of the operator $B^2$:
\begin{equation} 
	\ip{B^2(0), B^2(x)} = \tfrac{4}{3}  \frac{\lambda_{\g}^2}{(8 \pi \norm{x})^8   }   \log \norm{x} + \text{ rational }  
\end{equation}

Such logarithmic two-point function can not appear in a local theory on twistor space. Therefore it is a reflection of the anomaly on twistor space.  This logarithmic OPE must be canceled when we couple to the field $\rho$. 

This gives us a technique for calculating the precise coefficient of non-rational correlation functions, which we used in the three-loop computation above. Any logarithmic correlation functions in the theory without axions must be canceled by an exchange of axion fields, as in Figure \ref{Figure_axion_exchange}.    This vastly simplifies computations, because each axion line cancels the log divergences in a loop of gluons.

For the computation of the two-point function of $B^2$, there is only one diagram involving an exchange of axion fields which can contribute a logarithm. 
\begin{figure}
	\includegraphics[scale=0.22]{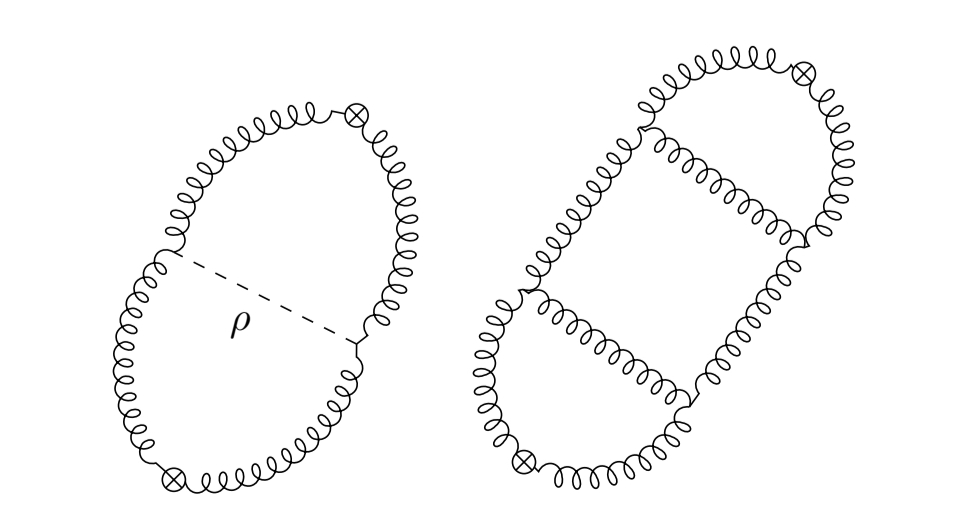}
	\caption{These diagrams contribute to the two-point function of the operator $B^2$, labeled by the vertices $\otimes$, in the theory with $\rho$ included. The exchange of the field $\rho$, as in the diagram on the left, cancels all logarithmic terms from three loop diagrams such as the one on the right. This allows us to compute the non-rational term in the two-point function at three loops using only a single diagram, that on the left. \label{Figure_axion_exchange}} 
\end{figure}

Although I didn't perform any computations of correlation functions beyond the one described above, one can expect a pattern like this to continue. For instance, we can contemplate computing  the connected $2n$ point function of the operator $B^2$ in self-dual Yang-Mills.  This involves Feynman diagrams with $2n+1$ loops.  One can expect that the term which is the most transcendental as a function of the position of the operators is canceled, when we include the axion field, by an exchange of the maximum number of axions, which is $n$.  Since each axion corresponds to a loop of gluons, this means that we could compute the maximally transcendental term by using diagrams with $n+1$ loops instead of $2n+1$.  

Full Yang-Mills theory is obtained by adjoining to the self-dual Lagrangian $B F(A)_+$ the term $-\tfrac{1}{4} g^2 B^2$. (The $B^2$ term was realized as a non-local term on twistor space in \cite{Mason:2005zm, Boels:2006ir}).  The field $B$ in first-order Yang-Mills becomes $2 g^{-2} F_+$.  The computation of the two-point function of $B^2$ in self-dual Yang-Mills immediately gives us a term in the two-point function of $F_+^2$ in full Yang-Mills: 
\begin{equation} 
	\ip{F_+^2(0) F_+^2(x)} = g^{8} \tfrac{1}{3}  \frac{\lambda_{\g}^2}{(8 \pi \norm{x})^8   }   \log \norm{x} + \text{ rational }  + O(g^{10}).  
\end{equation}
In this direction, our results also imply that if we take full Yang-Mills plus the axion field $\rho$, the first non-rational term in the two-point function of $F_+^2$ occurs at $4$ loops, or order $g^{10}$.  More generally, we have:
\begin{theorem} 
	Consider the correlation function of $n$ gauge invariant local operators $\mc{O}_1 \dots \mc{O}_n$ in full Yang-Mills theory plus the axion field $\rho$ coming from anomaly cancellation on twistor space.

	Suppose $\mc{O}_i$ contains $n_i$ copies of $F_+$. Then, the first non-rational term in the correlation function
	\begin{equation} 
		\ip{\mc{O}_1 (x_1) \dots \mc{O}_n(x_n)}  
	\end{equation}
	occurs at order $g^{2 + 2 \sum n_i}$. 
\end{theorem}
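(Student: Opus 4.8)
The plan is to realize full Yang-Mills plus the axion as a deformation of self-dual Yang-Mills plus $\rho$ by the term $-\tfrac14 g^2\!\int\Tr(B^2)$, and to exploit the fact, established earlier, that self-dual Yang-Mills plus $\rho$ is twistorial, so that all of its correlators are rational. For connected correlators the standard perturbative expansion in this deformation reads
\begin{multline*}
\ip{\mc O_1(x_1)\cdots\mc O_n(x_n)}_{\text{full}+\rho}
= \sum_{k\ge 0}\frac{(-g^2/4)^k}{k!}\int \prod_{j=1}^{k}\d^4 y_j\\
\times\;\Big\langle \mc O_1(x_1)\cdots\mc O_n(x_n)\prod_{j=1}^{k}\Tr\!\big(B(y_j)^2\big)\Big\rangle_{\text{SDYM}+\rho}.
\end{multline*}
By twistoriality of the undeformed theory, every correlator under the integral is a rational function of $x_1,\dots,x_n,y_1,\dots,y_k$, with poles only on light cones; hence non-rationality can enter the $k$-th term only through the integrations $\int\d^4 y_j$, and that term has transcendental weight at most $k$. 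So it is enough to show that these integrations collapse in every term with $k\le\sum_i n_i$.

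The mechanism is the $B$-equation of motion. In self-dual Yang-Mills plus $\rho$ one has $F_+(A)=\delta S/\delta B$ (the axion terms contain no $B$), so the Schwinger-Dyson identity gives $\ip{F_+(A)(x)\,\Phi}=\i\hbar\,\ip{\delta\Phi/\delta B(x)}$ for any product $\Phi$ of the remaining inserted operators. Iterating this over the $N:=\sum_i n_i$ factors of $F_+$ sitting inside $\mc O_1,\dots,\mc O_n$, each such $F_+$ acts as $\delta/\delta B(x_i)$, and for $x_1,\dots,x_n$ pairwise distinct it must land on one of the $B$'s in a factor $\Tr(B(y_j)^2)$: landing on a $B$ in another $\mc O_{i'}$ produces a $\delta(x_i-x_{i'})$, and a coincident self-contraction inside a (suitably defined) $\mc O_i$ a $\delta(0)$, both discarded away from the diagonal. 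Each hit yields a factor $\delta(x_i-y_j)$ that removes the integration $\int\d^4 y_j$. Therefore: if $k<N$ some $\Tr(B(y_j)^2)$ is hit twice, giving a squared delta or a diagonal contact term, so the term vanishes at separated points; if $k=N$ the $N$ factors of $F_+$ localize all $N$ integrations onto the $x_i$, leaving a plain correlator of the surviving $B$'s (together with whatever $F_-$'s and covariant derivatives the $\mc O_i$ contain) in the twistorial theory, hence rational; and if $k=N+m$ with $m\ge 1$, exactly $m$ integrations survive, so the term has weight at most $m$. The first potentially non-rational contribution is thus the $k=N+1$ term, of order $g^{2(N+1)}=g^{2+2\sum_i n_i}$, consisting of a single surviving integral $\int\d^4 y\,\big\langle\cdots\Tr(B(y)^2)\big\rangle_{\text{SDYM}+\rho}$ of a rational integrand; that this is genuinely non-rational rather than accidentally rational is the same kind of statement as the logarithm exhibited above in $\langle B^2(0)\,B^2(x)\rangle$, visible already from the single diagram of Figure~\ref{Figure_axion_exchange}. (The same argument run without the axion localizes to $\langle B^2(x_1)B^2(x_2)\rangle_{\text{SDYM}}$ at $k=N$, which is non-rational, reproducing the order-$g^{8}$ logarithm in $\ip{F_+^2(0)F_+^2(x)}$ quoted above.)

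The hard part will be the bookkeeping around the Schwinger-Dyson step: one must check that the pieces discarded as ``contact terms'' --- squared deltas from two $F_+$'s hitting one $\Tr(B^2)$, the $\delta(x_i-x_{i'})$'s, and coincident-point subtractions inside the composite $\mc O_i$ --- really contribute nothing for $x_1,\dots,x_n$ pairwise distinct, and, just as importantly, that the $\int\d^4 y$-integrals which do survive are absolutely convergent at separated external points, so that no UV or IR subtraction sneaks a logarithm back in at lower order in $g^2$. A secondary point, needed only for the claim that the order $g^{2+2\sum n_i}$ is actually attained rather than merely an upper bound, is to verify that the weight-one part of the $k=N+1$ term does not vanish identically; this is where one genuinely has to evaluate a diagram.
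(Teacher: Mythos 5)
Your proposal is correct and follows essentially the same route as the paper, which only sketches the argument (via the remark that $B$ becomes $2g^{-2}F_+$ together with the rationality of all correlators of the twistorial theory, self-dual Yang--Mills plus $\rho$): expand in the deformation $-\tfrac{1}{4}g^2\int\op{Tr}(B^2)$, let each external $F_+$ absorb one deformation vertex via the $B$ equation of motion, and note that the first term with a surviving $\d^4 y$ integration is $k=\sum n_i+1$, i.e.\ order $g^{2+2\sum n_i}$. One inessential slip: a single $\d^4y$ integration of a rational integrand can raise the transcendental weight by more than one (the one-loop box already produces dilogarithms), so ``weight at most $k$'' is not quite right, but your argument only uses that zero surviving integrations yield a rational function, which is all that matters.
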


Another result we find using our method is a surprising statement about the RG flow of self-dual Yang-Mills deformed by $-\tfrac{1}{4} g^2 B^2$.   The standard RG flow in Yang-Mills theory occurs at second order in $g^2$.   This also happens when we view full Yang-Mills as a deformation of the self-dual theory. It has been proven \cite{Elliott:2017ahb} that  when we deform self-dual Yang-Mills by $- g^2 B^2$, then the RG flow modifies the deformation to $-\tfrac{1}{4}g^2 B^2 + C g^4 B^2$ for some constant $C$.

What we find here is that there is another term which appears at a lower order.  The very first term in the RG flow changes the coefficient of the toplogical term $F\wedge F$: 
\begin{equation} 
	-\tfrac{1}{4} g^2 B^2 \mapsto -\tfrac{1}{4}g^2 B^2 +  \frac{\lambda_{\g}^2}{(2 \pi)^4 3 \cdot 2^5} F^2  
\end{equation}
If the coupling constant $g$ doesn't depend on the space-time coordinates, this term in the RG flow is trivial in perturbation theory, because $F^2$ is is a total derivative. However, if $g$ is non-constant, then this analysis means that the leading term in the RG flow is given by the Chern-Simons term $\d g CS(A)$.  

\subsection*{Acknowledgments}
I am grateful to Roland Bittleston, Lionel Mason and  David Skinner for some very helpful conversations; to Greg Moore and Samson Shatashvili for explaining their work on $WZW_4$ and for correcting an important error I made in an early talk on this work; and to Yehao Zhou for collaborating in the early stages of this paper.  The author is supported by the NSERC Discovery Grant program and by the Perimeter Institute for Theoretical Physics. Research at Perimeter Institute is supported by the Government of Canada through Industry Canada and by the Province of Ontario through the Ministry of Research and
Innovation.

\section{Constructing local holomorphic field theories on twistor space}
\label{section_twistor_construction}

Let $\PT$ denote the twistor space of $\R^4 $. This is the total space of the rank two vector bundle $\Oo(1)^2$ over $\CP^1$.  As a real manifold, there is an isomorphism
$$\PT \iso \R^4 \times   \CP^1.$$
This gives the projection
$$
\pi : \PT \to \R^4.
$$
For every $ x \in \R^4$ the fibre of this map is a copy of $\CP^1$. The points in the fibre $\CP^1_x$ parameterize complex structures on $\R^4$ compatible with a given orientation.

Our primary interest in this paper is in local quantum field theories on twistor space.  As we will review below, every such theory gives rise to a local quantum field theory on $\R^4$.  A number of local holomorphic QFTs on twistor have been studied in the literature before: 
\begin{enumerate} 
	\item Witten and Berkovits \cite{Witten:2003nn, Berkovits:2004hg} show that holomorphic Chern-Simons on super twistor-space $\PT^{3 \mid 4}$ gives rise to the self-dual limit of $N=4$ Yang-Mills on $\R^4$.
	\item This was generalized by Boels, Mason and Skinner \cite{Boels:2006ir} who showed how to construct self-dual supersymmetric Yang-Mills theories with various amount of supersymmetry and matter from certain holomorphic field theories on twistor space. Here, and in \cite{Mason:2005zm} it was also shown how certain non-local terms move us away from the self-dual limit.   
	\item The Penrose-Ward correspondence  \cite{Ward:1977ta} can be viewed as the special case of this with no supersymmetry. In this case, holomorphic BF theory on twistor space gives rise to self-dual Yang-Mills theory on $\R^4$.
	\item Penrose's \cite{Penrose:1976js} non-linear graviton construction relates solutions of self-dual Einstein equations with certain geometrical structures on twistor space.  This was uplifted to a twistor Lagrangian describing self-dual gravity in \cite{Mason:2007ct} and \cite{Sharma:2021pkl}.  
\end{enumerate}
From our perspective, these classical theories suffer from two problems.  Firstly, in all of these examples, perturbation theory stops at one loop. Thus, they are not very interesting as quantum field theories on $\R^4$: they are richer than free theories, but not that much richer.   

To get something really interesting -- such as full $N=4$ Yang-Mills and not the self-dual limit -- one typically needs to deform these Lagrangians by terms which are non-local on twistor space.

The second difficulty with these models is that they often suffer from gauge anomalies, so that they tend to be ill-defined at the quantum level. We will discuss anomalies on twistor space in section \ref{subsec_anomalies}. 

Our goal in this section is to construct holomorphic theories on twistor space which are not one-loop exact, and which are anomaly free.  To construct a model where perturbation theory does not stop at one loop, we will use holomorphic Chern-Simons theory.                                                                                           
Recall that on a Calabi-Yau threefold $X$, the fields of holomorphic Chern-Simons theory with gauge Lie algebra $\g$ are
$$
\mc{A} \in \Omega^{ 0,1} ( X) \otimes \g
$$
The Lagrangian is $\int \Omega CS (\mc{A}) $ where $CS$ is the Chern-Simons three-form.

Of course, $\PT$ is not a Calabi-Yau manifold; the canonical bundle is $\Oo(-4)$. We will define the holomorphic Chern-Simons Lagrangian by choosing a volume form with poles, and then modifying the gauge field and gauge transformations at the location of the poles. If $z$ is the coordinate on the $\CP^1$ twistor fibre above the origin, we can choose a meromorphic volume form  on twistor space with second order poles when $z$ is zero or infinity. This
determines the volume form uniquely up to scale; in coordinates it is 
$$
 \Omega = z^{-2} \d z  \d v_1 \d v_2.
 $$
For the Chern-Simons Lagrangian to be well-defined, we need the gauge field to go like $z$ at the point $z = 0$, and like $1/z$ near $z = \infty$.  The zeroes in the gauge field cancel the poles in the volume form. To ensure gauge invariance, we also ask that gauge transformations also go like $z$ or $1/z$ at these points.

It is natural to think of the locus $ z= 0, \infty$  as being at the boundary of twistor space.
Asking that our gauge field, and gauge transformations, vanish at $z = 0,\infty$ can be thought of as imposing Dirichlet boundary conditions.

\subsection{Anomalies on twistor space}
\label{subsec_anomalies}
Holomorphic Chern-Simons theory, or holomorphic BF theory, has a one-loop anomaly on any Calabi-Yau manifold \cite{1505.06703,1905.09269,1910.04120}, associated to the diagram in Figure \ref{fig:anomaly}.  If $\c$ is the parameter for the gauge transformation, then the gauge variation of this diagram is 
\begin{equation} 
	  	\frac{1}{(2 \pi)^4 48 }  \op{Tr}_{\g} (\c (\partial \mc{A})^3) 
\end{equation}
with $\mc{A}$ being the holomorphic Chern-Simons gauge field.

\begin{figure}
\includegraphics[scale=0.22]{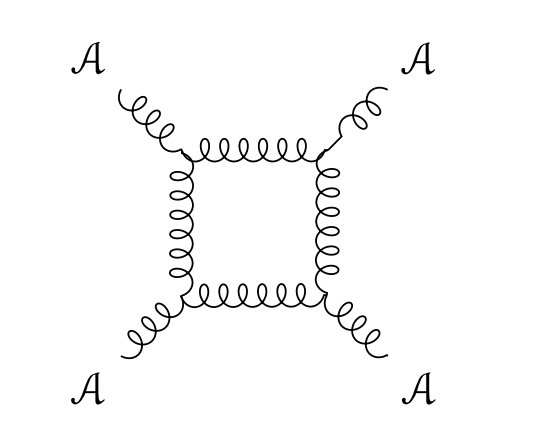}
	\caption{The gauge anomaly for holomorphic Chern-Simons. In our conventions for Feynman diagrams, external lines are labeled by arbitrary on-shell field configurations, and there are no propagators on external lines. \label{fig:anomaly}}
\end{figure}

Since anomalies are local, this means that holomorphic Chern-Simons on twistor space also has a one-loop anomaly.  The same anomaly appears in holomorphic BF theory on twistor space, meaning that any naive quantization of the Penrose-Ward correspondence suffers from a one-loop anomaly.  

Many, if not most, if the twistorial field theories mentioned above have anomalies.  One of the few examples that doesn't is the twistor representation of self-dual $N=4$ Yang-Mills, where the contribution from the bosonic and fermionic fields in the box diagram cancels. 

\subsection{The Green-Schwarz mechanism}
\label{section_typeI_intro}
Fortunately, in very special cases, it is possible to cancel the anomaly from the box diagram.  In \cite{1905.09269} Si Li and the author introduced a topological string version of the Green-Schwarz mechanism which cancels the anomaly for holomorphic Chern-Simons theory for $G = SO(8)$.   Let us explain this briefly. 

 The cancellation involves the closed-string fields of the type I $B$-model, which is a variant of the topological $B$-model whose world-sheet is un-oriented.  The closed string fields of the type I $B$-model are a subsector of those of Kodaira-Spencer theory \cite{hep-th/9309140}: on a Calabi-Yau manifold $X$, they consist of a Beltrami differential field
\begin{equation} 
	\mu \in \Omega^{0,1}(X, TX)  
\end{equation}
which is constrained to satisfy $\op{Div}\mu = 0$, where $\op{Div}$ is the holomorphic divergence operator
\begin{equation} 
	\op{Div}: \Omega^{0,1}(X,TX) \to \Omega^{0,1}(X).   
\end{equation}
(In practice, it is often better to implement this constraint homologically, by asking that $\op{Div}\mu$ is $\dbar$ exact).  The field $\mu $ describes deformations of $X$ as a Calabi-Yau manifold. 

Under the isomorphism between $\Omega^{0,\ast}(X,TX)$ and $\Omega^{2,\ast}(X)$, the holomorphic divergence operator becomes $\partial$, the holomorphic part of the de Rham operator.  As such, we will often use the symbol $\partial$ to refer to it. 

The Lagrangian for the closed-string fields of the type I $B$-model is
\begin{equation} 
	 \half\int (\dbar \mu ) \partial^{ -1}( \mu \vee \Omega ) + \sixth \int \mu \vee \mu \vee \mu \vee \Omega
\end{equation}
This is a variant of the Lagrangian introduced in \cite{hep-th/9309140}.

Note the unusual kinetic term, which only makes sense because of the constraint $\partial \mu = 0$. The gauge symmetries of this model are given by vector fields which are sections of the holomorphic tangent bundle $ TX$ and which are divergence free.

The Beltrami differential $\mu$ is coupled to the holomorphic Chern-Simons gauge field $ \mc{A}$ by replacing the $\dbar$ by the $ \dbar$ operator in the new complex structure. Explicitly, if we write 
\begin{equation} 
	\eta = \mu \vee \Omega 
\end{equation}
then the term in the Lagrangian is
\begin{equation} 
  \frac{1}{8 (2 \pi \i)^{3/2} }   \int \eta \op{tr}( \mc{A} \partial \mc{A} ) 
\end{equation} 
(The normalization is computed in the appendix \ref{sec:normalization_twistor}).

When we work on twistor space, we need to take into account the behaviour at the poles of the meromorphic volume form. Recall that there are second order poles along the divisors $ z= 0$, $ z=\infty$.  

The boundary conditions for the Beltrami differential $\mu$ are best written in terms of the $(2,1)$ form $\eta = \mu \vee \Omega$.  Our boundary condition is simply that the $(2,1)$ form  $\eta$ is regular on all of $\PT$ \footnote{In \cite{Costello:2018zrm} we studied a related model where the boundary condition asked that $\eta$ has log poles at the poles of $\Omega$.  That also works, but we find the boundary condition chosen here a little more convenient.}.   In terms of the Beltrami differential $\mu$, this condition means that it vanishes to second order near $z = 0$ and $z = \infty$ (gauge transformations behave in the same way).  Near $z = 0$, the cubic term in the action has a factor of $z^6$ from the zeroes in the fields, and of $z^{-4}$ from the second-order pole in $\Omega$ (recalling that $\Omega$ appears twice in the cubic term).   Therefore, overall,  the cubic term has a coefficient of $z^2$, and vanishes to order $2$. Similarly, it vanishes to order $2$ at $z = \infty$.

To understand the kinetic term, we note that it can be written entirely in terms of the $(2,1)$ form $\eta$, which is $\partial$ closed.  The kinetic term is 
\begin{equation} 
	\int \dbar \eta \partial^{-1} \eta 
\end{equation}
which can be written down on any complex three-fold.

Now that we understand the closed string fields, we can discuss the Green-Schwarz mechanism. The following result is proved in \cite{1905.09269}. 
\begin{theorem} 
There is a tree-level anomaly to coupling holomorphic Chern- Simons theory to type I Kodaira- Spencer theory, proportional to
\begin{equation} 
\int\op{tr} ( \c \partial \mc{A}) \op{tr} ( \partial \mc{A} \partial \mc{A})
\end{equation} 
\end{theorem}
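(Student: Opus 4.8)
The plan is to identify this anomaly as the tree-level obstruction to the classical master equation of the coupled theory. The free theory is holomorphic Chern--Simons --- gauge field $\mc{A}\in\Omega^{0,1}(\PT,\g)$ and ghost $\c$ --- together with type I Kodaira--Spencer theory, whose field $\eta=\mu\vee\Omega\in\Omega^{2,1}(\PT)$ obeys the constraint $\partial\eta=0$ and has kinetic term $\half\int\dbar\eta\,\partial^{-1}\eta$. Besides the cubic self-interactions of each sector there is the coupling
\[
	I_{\mathrm{cpl}}\;=\;\frac{1}{8(2\pi\i)^{3/2}}\int_{\PT}\eta\;\op{tr}(\mc{A}\,\partial\mc{A}).
\]
Since holomorphic Chern--Simons and type I Kodaira--Spencer theory are each consistent classical theories whose fields are disjoint, the only obstruction to gauge invariance of the coupled action comes from $I_{\mathrm{cpl}}$. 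The computation then has two steps: vary $I_{\mathrm{cpl}}$ under the holomorphic Chern--Simons gauge transformation and reduce it, modulo removable terms, to something linear in $\eta$; then integrate $\eta$ out via its equation of motion --- the single Kodaira--Spencer propagator that makes the effect tree-level.

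For the first step I would compute $\delta_{\c}\op{tr}(\mc{A}\,\partial\mc{A})$ with $\delta_{\c}\mc{A}=\dbar\c+[\mc{A},\c]$, using the graded Leibniz rule, cyclicity of the trace, and integration by parts. The result organizes into three kinds of terms. Those proportional to the holomorphic Chern--Simons equation of motion $F^{0,2}(\mc{A})=\dbar\mc{A}+\mc{A}^2=0$ are removable by a local counterterm (equivalently, an antifield redefinition of $I_{\mathrm{cpl}}$) and do not affect the genuine anomaly. Terms of the form $\int\eta\,\partial(\,\cdot\,)$ integrate to zero because of the constraint $\partial\eta=0$. What remains is a nonzero multiple of $\int\eta\,\dbar\,\op{tr}(\c\,\partial\mc{A})$, which after moving $\dbar$ onto $\eta$ is a multiple of $\int_{\PT}(\dbar\eta)\,\op{tr}(\c\,\partial\mc{A})$. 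On $\PT$ one must also check that the prescribed vanishing of $\mc{A},\c$ and $\eta$ at $z=0,\infty$ makes every integration by parts legitimate, with no boundary contributions.

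For the second step, the equation of motion for $\eta$ coming from $\half\int\dbar\eta\,\partial^{-1}\eta+I_{\mathrm{cpl}}$ (the cubic Kodaira--Spencer vertex contributes at higher order and is irrelevant for the leading anomaly) reads $\dbar\partial^{-1}\eta\propto\op{tr}(\mc{A}\,\partial\mc{A})$; applying $\partial$ and using $\partial\dbar=-\dbar\partial$ together with $\partial^2=0$ turns this into the sourced Bianchi identity $\dbar\eta\propto\op{tr}(\partial\mc{A}\,\partial\mc{A})$. Substituting into the previous expression gives the claimed anomaly, proportional to
\[
	\int_{\PT}\op{tr}(\c\,\partial\mc{A})\;\op{tr}(\partial\mc{A}\,\partial\mc{A}),
\]
a $(3,3)$-form, hence integrable. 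This is precisely the topological-string Green--Schwarz mechanism: the coupling makes $\dbar\eta$ play the role of $\op{tr}(F^2)$, so that the magnetic coupling $\int\eta\wedge(\text{Chern--Simons descendant of }\op{tr}(F^2))$ has variation $\int\op{tr}(F^2)\wedge(\text{descendant})\sim\int\op{tr}(\partial\mc{A}\,\partial\mc{A})\,\op{tr}(\c\,\partial\mc{A})$, which is exactly what can cancel the one-loop box anomaly $\tfrac{1}{(2\pi)^4 48}\op{Tr}_{\g}(\c(\partial\mc{A})^3)$ once $\op{Tr}_{\g}(X^4)$ is proportional to $\op{tr}(X^2)^2$.

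The main obstacles are bookkeeping and a triviality check rather than any deep idea. The bookkeeping: keeping the signs straight in $\delta_{\c}\op{tr}(\mc{A}\,\partial\mc{A})$ --- the terms feed back into one another through repeated use of $\partial\dbar=-\dbar\partial$ and the equation of motion, and one must verify that after all cancellations the only surviving piece is the multiple of $\dbar\,\op{tr}(\c\,\partial\mc{A})$ above (with nonzero coefficient). One also has to check that the coupling does not simultaneously break the Kodaira--Spencer (divergence-free vector field) gauge symmetry; this may require a compensating correction to the transformation of $\mc{A}$, and one must confirm it produces no further obstruction, so that the stated functional is the complete tree-level anomaly. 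Finally comes the triviality check: the functional $\int\op{tr}(\c\,\partial\mc{A})\,\op{tr}(\partial\mc{A}\,\partial\mc{A})$ must be shown not to be the gauge variation of any local functional --- it does not vanish on the holomorphic Chern--Simons equations of motion and is not built from $F^{0,2}(\mc{A})$, so it is a genuine ``$\partial$-type'' obstruction of the kind that can only occur in holomorphic theories, and hence represents a true anomaly on twistor space.
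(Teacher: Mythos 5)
Your proposal is correct and follows essentially the same route as the paper: in Appendix \ref{sec:normalization_twistor} the author computes exactly this tree-level exchange by noting that the $\eta$ equation of motion sourced by the coupling gives $\dbar\eta \propto \op{tr}(\partial\mc{A}\,\partial\mc{A})$, and that the gauge variation $\mc{A}\mapsto\mc{A}+\dbar\chi$ of the two-vertex diagram, after integration by parts, yields $-2(\lambda_{\g}C)^2\int\op{tr}(\chi\,\partial\mc{A})\op{tr}((\partial\mc{A})^2)$. Your additional checks (removing terms proportional to $F^{0,2}(\mc{A})$, using $\partial\eta=0$, and verifying non-triviality of the resulting cocycle) are consistent with the cohomological analysis deferred to the cited reference and to Appendix \ref{section_twistor_quantization}.
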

The tree level anomaly is drawn in Figure \ref{Figure:closed_anomaly}.
\begin{figure}

\includegraphics[scale=0.21]{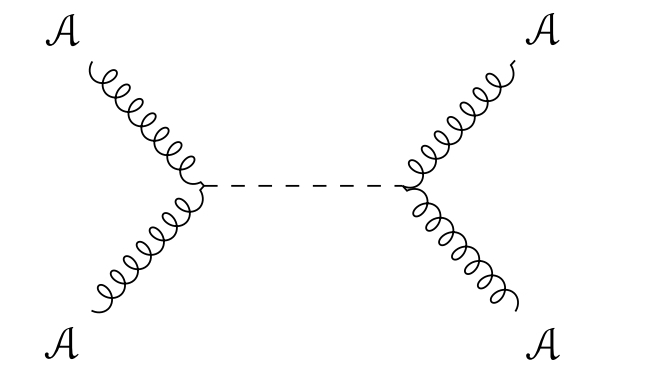}
	\caption{Two open-string fields exchanging a closed-string field gives a tree-level anomaly. \label{Figure:closed_anomaly}  }  
\end{figure}

As a consequence of this, we find that the one-loop anomaly for holomorphic Chern-Simons cancels for any Lie algebra $\g$ for which the identity
\begin{equation} 
\Tr( X^4) \propto \op{tr}( X^2)^2
\end{equation} 
holds.

This identity holds for a number of Lie algebras, including $\so(8)$, $\sl(2)$, $\sl(3)$, and all exceptional simple Lie algebras.   In \cite{1905.09269} it was shown that for $\g=\so (8)$ all anomalies at all orders in the loop  expansion vanish, and further, all higher loop counter-terms are uniquely fixed by the requirement that all anomalies vanish.   A small variant of this cohomological argument applies on twistor space:
\begin{theorem} 
	Holomorphic Chern-Simons theory for $SO(8)$, coupled to type I Kodaira-Spencer theory, admits a unique quantization on twistor space with the boundary conditions discussed above. 
\end{theorem}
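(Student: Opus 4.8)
The plan is to run the obstruction-theoretic argument of \cite{1905.09269} in Costello's BV formalism for perturbative field theory, modified to account for the non-compactness of $\PT$ and the boundary conditions at $z=0,\infty$. Recall that a quantization of a classical BV theory is a family of effective actions satisfying the quantum master equation, built inductively in powers of $\hbar$; the obstruction to extending a given quantization from order $\hbar^{n}$ to order $\hbar^{n+1}$ is a cohomology class of degree $1$ in the complex $\Ool$ of local functionals on $\PT$ with the classical BV differential $\{S_{\mathrm{cl}},-\}$, where $S_{\mathrm{cl}}$ is the action of holomorphic Chern--Simons theory for $SO(8)$ coupled to type I Kodaira--Spencer theory. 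Once an obstruction vanishes, the set of extensions is a torsor over $H^{0}(\Ool)$ (in the relevant weight). So the theorem reduces to two cohomological statements: (i) every obstruction class vanishes, so a quantization exists; and (ii) $H^{0}(\Ool)$ vanishes in the weights relevant to quantization, so it is unique.

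First I would compute $H^{\bullet}(\Ool)$ by reducing it to a finite-dimensional Lie-algebra cohomology problem, exactly as in the Calabi--Yau case. Local functionals are built from jets of the fields at a point, and the classical theory is equivariant for a large symmetry group: the biholomorphisms of $\PT$ preserving the meromorphic volume form $\Omega = z^{-2}\,\d z\,\d v_{1}\,\d v_{2}$ up to scale, the scaling symmetries of the twistor fibration, and the lift of translations of $\R^{4}$. A descent argument of Gelfand--Fuks type along $\PT\to\CP^{1}$ identifies the equivariant part of $\Ool$ with a complex computing the Lie-algebra cohomology of $\so(8)$ (together with the abelian symmetry algebra of Kodaira--Spencer theory) with coefficients in an explicit jet module. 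At one loop, the structure of holomorphic Chern--Simons plus Kodaira--Spencer theory forces the only possible obstruction to be the box anomaly of Figure \ref{fig:anomaly}, whose gauge variation is proportional to $\op{Tr}_{\g}(\c(\partial\mc{A})^{3})$; by the tree-level anomaly of the Green--Schwarz coupling (Figure \ref{Figure:closed_anomaly}) this is cancelled precisely when $\op{Tr}(X^{4})\propto\op{tr}(X^{2})^{2}$, which holds for $\so(8)$. Beyond one loop, the vanishing of the would-be obstructions and of $H^{0}$ follows from the vanishing of the relevant cohomology groups of $\so(8)$ --- a consequence of triality and of the quartic-Casimir identity --- just as in \cite{1905.09269}.

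The main obstacle, and the reason this is a ``small variant'' of the Calabi--Yau argument rather than an immediate corollary, is controlling the contributions of the boundary divisors $z=0$ and $z=\infty$, where $\Omega$ has second-order poles and the fields and gauge parameters are required to vanish to prescribed order. One must verify that no local functionals supported near these divisors appear which could serve as extra obstructions or extra deformations; equivalently, that the local BV cohomology on the open twistor space with the imposed vanishing agrees, in degrees $0$ and $1$ and in the relevant weights, with the answer on a genuinely compact Calabi--Yau threefold. I would handle this by stratifying $\PT$ into the open part and the two boundary divisors, computing the local cohomology along each stratum, and gluing: near $z=0,\infty$ the imposed vanishing of the fields is strong enough that the boundary strata contribute nothing in the dangerous degrees. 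A related bookkeeping point is that $\Omega$ breaks the $\op{PGL}_{2}(\C)$ acting on the twistor line down to a one-parameter subgroup together with the $0\leftrightarrow\infty$ swap, so one must check that imposing only this reduced symmetry still kills every dangerous class. These two points are where the real work lies; once they are in place the induction on $\hbar$ closes exactly as in the compact case.
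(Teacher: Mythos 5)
Your high-level framework (quantization by induction on $\hbar$, obstructions in $H^{1}$ of local functionals, uniqueness from vanishing of $H^{0}$, reduction to the cohomology computation of \cite{1905.09269} on a Calabi--Yau threefold) matches the paper, and you correctly isolate the genuinely new difficulty: the divisors $z=0,\infty$ where $\Omega$ has poles and the fields are constrained to vanish. But at exactly that point your argument has a gap. The space of admissible Lagrangians with the boundary condition imposed is a \emph{subcomplex} of the space of all local Lagrangians (schematically, those of the form $\int z^{k}\,\d z\,\d v_{1}\d v_{2}\,D_{1}\boldsymbol{\til{A}}\cdots D_{l+m}\boldsymbol{\eta}$ with $k\ge 0$ rather than $k$ arbitrary), and the cohomology of a subcomplex can be strictly larger than that of the ambient complex. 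Your assertion that ``the boundary strata contribute nothing in the dangerous degrees'' is precisely the statement that needs proof, and a stratify-and-glue computation of local cohomology does not obviously apply, because the boundary condition is an order-of-vanishing condition on jets of the fields, not a support condition on the functionals.

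The missing idea is the one you brush past as a ``bookkeeping point'': the residual $\C^\times$ rotating the $\CP^{1}$ base is not a nuisance to be checked but the engine of the proof. With $z$ of weight $1$ and the rescaled open-string field $\boldsymbol{\til{A}}=\mbf{A}/z$ of weight $-1$, the classical interactions carry definite charges (open-string vertex $-1$, open--closed coupling and closed-string kinetic term $0$, closed-string vertex $+1$), so a connected contribution at loop number $L$ (in the string-theoretic counting) must have charge exactly $L$. Every factor in a general Lagrangian other than $z^{k}\,\d z$ contributes non-positive charge, so positive charge forces $k>0$; hence in charge $>0$ the constrained subcomplex \emph{coincides} with the unconstrained one, and the vanishing theorem of \cite{1905.09269} (after $z\mapsto 1/z$ near $\infty$) applies verbatim. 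The remaining charge-zero sector is finite: its only degree-one classes are the single-trace and double-trace box anomalies, killed respectively by the $SO(8)$ (more precisely $OSp(8+2N\mid 2N)$) condition and the Green--Schwarz exchange, and it contains no classes in degree $\le 0$, giving uniqueness. Without this charge selection rule, or an explicit computation showing the quotient complex (terms with $k<0$) is acyclic in the relevant degrees, your induction does not close.
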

We will give the proof of this, along with related results, in the appendix \ref{section_twistor_quantization}.  We have chosen our normalization of the open-closed interaction to cancel the anomaly.

We do not currently know\footnote{As in the usual Green-Schwarz mechanism, a one-loop diagram with only closed string fields on the external lines should force the gauge algebra to have dimension $28$.  This raises the possibility that the $\mf{g}_2 \oplus \mf{g}_2$ theory might also be anomaly free to all orders. } whether or not higher loop anomalies can be canceled for any other groups.  

It is worth mentioning that the type I topological string should admit an embedding in the physical string, in a way similar to the standard embedding of the topological A and B models in the type IIA and IIB string theories. For the type I topological string, the relevant set-up is type IIB with an $O7-$ plane and $4$ $ D7$' s. This system is placed in an $\Omega$ background and subject to supersymmetric localization, leaving us with an effective $6$ dimensional theory with gauge group $SO (8)$.

\subsection{Canceling the anomaly with a non-local term}
The $WZW_4$ model with target a general group $G$ is not a gauge theory, and therefore can not suffer from a gauge anomaly.  The gauge anomaly on twistor space can not be canceled by a local counter-term.  However, because there is no gauge anomaly on $\R^4$, we see that it must be possible to cancel the twistor-space anomaly with a term which is non-local from the point of view of  twistor space, but still local from the point of view of $\R^4$.

Here we will write an expression that does this.  We will coordinatize twistor space as $\R^4 \times \CP^1$, with coordinates $x,z$. We will write down an expression where we integrate over one $x$ variable but two $z$ variables -- so it is bi-local on twistor space but local on $\R^4$. The expression is
\begin{equation} 
	\int_{x \in \R^4} \int_{z,z' \in \CP^1} \mc{A}(x,z) \d \mc{A}(x,z)\frac{\d z \d z'}{(z - z')^2}  \mc{A}(x,z') \d \mc{A}(x,z')  + O(\mc{A}^5). 
\end{equation}
This expression is interpreted by viewing $\mc{A}(x,z)$ as a $1$-form on twistor space. We check in the appendix \ref{sec:nonlocal} that the gauge variation of this expression cancels the anomaly.

\subsection{The Green-Schwarz mechanism for self-dual Yang-Mills}
Holomorphic BF theory on twistor space corresponds to self-dual Yang-Mills theory on $\R^4$.    We let $\mc{B} \in \Omega^{3,1}(\PT, \g)$ and $\mc{A} \in \Omega^{0,1}(\PT,\g)$ be the fields of holomorphic BF theory. The Lagrangian is
\begin{equation} 
	\int \op{tr} (\mc{B} F^{0,2}(\mc{A})). 
\end{equation}
However, holomorphic BF theory has a one-loop anomaly, which is the same as that associated to holomorphic Chern-Simons theory,  associated to Figure \ref{fig:anomaly}.   

Take any simple Lie algebra $\g$, and let $\op{Tr}$ denote trace in the adjoint representation and $\op{tr}$ that in the fundamental representation. Let us assume that there is some $\lambda_{\g}$ such that
\begin{equation} 
	\op{Tr}(X^4) = \lambda_{\g}^2 \op{tr}(X^2)^2. 
\end{equation}
We mentioned above when this happens. 

Then, we can cancel the anomaly  coupling to the \emph{free} limit of the Kodaira-Spencer theory we used to cancel the anomaly for holomorphic Chern-Simons theory. The new field we introduce is $\eta \in \Omega^{2,1}(\PT)$, satisfying the constraint $\partial\eta = 0$ and subject to the gauge transformation $\eta \mapsto \eta + \dbar \chi$, for $\chi \in \Omega^{2,0}(\PT)$.  The Lagrangian is
\begin{equation} 
	\half \int \dbar \eta \partial^{-1} \eta 
\end{equation}
which is the kinetic term of the Lagrangian we used before.

We couple this field to the gauge field $\mc{A}$ by  
\begin{equation} 
	  \frac{\lambda_{\g} }{(2 \pi \i)^{3/2} \sqrt{48} } \int \eta \op{tr}( \mc{A} \partial \mc{A} ) 
\end{equation} 

In section \ref{sec:sdym_twistor},  we will determine the theory on $\R^4$ corresponding to this Lagrangian.

\section{Other examples of twistorial field theories}
So far, we have only discussed two twistorial field theories. Let me sketch the construction of other models, leaving detailed investigation to future work. 

\subsection{Model build from a genus $3$ curve}
One variant of the model we have introduced is obtained by using the same type I topological string, but on a different geometry. The simplest case is when we take a hyperelliptic genus $3$ curve $\Sigma \to \CP^1$.  The pull-back of $\Oo(2)$ is the canonical bundle, so that the three-fold 
\begin{equation} 
	\PT_\Sigma = \pi^\ast(\Oo(1))^2 \to \Sigma 
\end{equation}
is Calabi-Yau.  This fibres over $\R^4$ with fibres $\Sigma$.

The field theory on $\R^4$ corresponding to the type I topological string theory with this geometry includes a $\sigma$-model with target the moduli of $SO(8)$-bundles on $\Sigma$.   The fields coming from the closed-string sector include a map from $\R^4$ to the moduli space of genus $3$ hyperelliptic curves, which is a divisor in the moduli of genus $3$ curves.   As always, these theories are analytically-continued theories, and the path integral should be defined by the choice of a contour.

\subsection{Compactifying from $10$ dimensions to twistor space}

Other examples are obtained by starting with supersymmetric sectors of $10$ dimensional string theories. These supersymmetric sectors are themselves described by $10$-dimensional topological strings \cite{1606.00365, Baulieu:2010ch, Berkovits:2003pq} .  We can often use these holomorphic twists of string theory to build holomorphic field theories on twistor space. By compactifying to $\R^4$ we find a mechanism to produce non-supersymmetric field theories on $\R^4$ from string theory.

This will give a broad class of non-renormalizable and non-supersymmetric four-dimensional theories whose UV properties are nonetheless under control, because they come from holomorphic anomaly-free theories on twistor space.

The simplest such models arise by taking the holomorphic twist \cite{1606.00365} of the type I or type IIB strings, compactified to twistor space.  The geometry we need to compactify these models to twistor space is a complex $3$-fold $X$, together with a non-constant map $\pi :X \to \CP^1$, and an isomorphism of line bundles 
\begin{equation} 
	K_X \iso \pi^\ast \Oo(2). 
\end{equation}
(Note that this implies that the fibres of $\pi$ are Calabi-Yau surfaces).  We can then build a Calabi-Yau $5$-fold 
\begin{equation} 
	\PT_X = \pi^\ast \left( \Oo(1) \oplus \Oo(1)  \right) \to X. 
\end{equation}
As a real manifold, there is an isomorphism
\begin{equation} 
	\PT_X = X \times \R^4. 
\end{equation}
Complexified space-time $\C^4$ consists of the moduli space of the $3$-fold $X$ in the $5$-fold $\PT_X$.  Thus, $\PT_X$ is a higher-dimensional variation of twistor space, in which the $\CP^1$ twistor lines are replaced by the $3$-fold $X$.  

One way to construct such a variety $X$ (suggested to me by Davesh Maulik) is to start with a Calabi-Yau $3$-fold $X_0$, fibred over $\CP^1$ in $K3$ surface.  Then, we take $X$ to be the pull-back of $X_0$ along a double branched cover $\CP^1 \to \CP^1$, whose branch points consist of smooth fibres of the fibration $X_0 \to \CP^1$.  It is then elementary to check that $\pi^\ast O(2)$ is $K_X$.

The construction of field theories on $\R^4$ from field theories on $\PT$ carries through when we compactify holomorphic twists of string theories from the $5$-fold $\PT_X$ to $\R^4$.    I will leave it as an open problem to carefully work out the Lagrangian of the theory on $\R^4$ corresponding to a holomorphic twist of a ten-dimensional string theory in some geometry.  The computation seems to me to be quite challenging.

\section{Compactifying theories on twistor space}
\label{twistor_generalities}
Now that we have constructed some interesting local and holomorphic theories on twistor space, one can ask, why are they so interesting?  In this section we will explain, from an axiomatic perspective, how quantum field theories on twistor space give rise to quantum field theories on $\R^4$ with the special properties we have mentioned earlier.

\subsection{Twistor space reduction vs. Kaluza-Klein reduction}
Let us first emphasize that passing from a theory on twistor space to one on $\R^4$ is somewhat better-behaved than the familiar Kaluza-Klein reduction.  Consider, for example, KK reduction of a field theory on $S^1 \times \R^d$ to $\R^d$.  We obtain a local field theory on $\R^d$ only in the limit as the radius of the circle goes to zero, or equivalently when we work in the far IR.   Without taking this limit, we find a field theory on $\R^d$ with an infinite number of massive KK modes.  These disappear in the infra-red, so that the local field theory on $\R^d$ is an effective low-energy description of the $d+1$ dimensional theory.

When passing from a theory on twistor space to one on $\R^4$, the procedure is more robust.  The description as a local field theory on $\R^4$ is valid at all energy scales. Without passing to the IR, there are only a finite number of fields.

Let us explain how this happens, first using physical language and then formalizing the idea using the concept of factorization algebra.

It is helpful to choice a simple concrete example.  We will choose the free scalar field, which is represented on twistor space by a field
\begin{equation} 
	\mc{A} \in \Omega^{0,1}(\PT, \Oo(-2)) = \Omega^{0,1} (\PT, K^{\tfrac{1}{2}}) \label{eqn_fft_twistor} 
\end{equation}
with Lagrangian $\mc{A} \dbar \mc{A}$ and gauge symmetry $\mc{A} \mapsto \mc{A} + \dbar \chi$, where $\chi \in \Omega^{0,0}(\PT, K^{\tfrac{1}{2}})$.  

Using the isomorphism $\PT \iso \R^4 \times \CP^1$, we can write this as a field theory on $\R^4$ with an infinite number of fields.  We will see that these fields are all (except for a single scalar field) either \emph{auxiliary} fields, i.e.\ fields whose kinetic term contains only a mass term and no space-time derivatives; or else, they are \emph{unphysical} fields, because there are gauge transformations which shift their values at every point.  

To see this, let us  decompose the field $\mc{A}$ into $\mc{A}_{\zbar}$ and a doublet $\mc{A}_{\vbar_i}$. The component $\mc{A}_{\zbar}$ is a $(0,1)$ form on each twistor fibre twisted by $\Oo(-2)$, so we can view it as a $(1,1)$ form. Under the $SU(2)$ symmetry which rotates $\CP^1$, the bundle of $(1,1)$-forms is trivial so we can view $\mc{A}_{\zbar}$ as a scalar. 

The fields $\mc{A}_{\vbar_i}$ are sections of the anti-holomorphic co-normal bundle $\br{\Oo(-1)}$, tensored with $\Oo(-2)$. We can use a metric to identify $\br{\Oo(-1)}$ with $\Oo(1)$, so that $\mc{A}_{\vbar_i}$ are a doublet of sections of $\Oo(-1)$. 

We can decompose $\mc{A}_{\zbar}$ into its Fourier modes $\mc{A}_{\zbar}^{(2j)}$, which lives in the  spin $2j$ representations of $SU(2)$ for $j \ge 0$.  Similarly we can decompose $\mc{A}_{\vbar_i}$ into Fourier modes $\mc{A}_{\vbar_i}^{(2j+\tfrac{1}{2})}$, which live in the half-integer spin representations of $SU(2)$.  We can of course do the same with the generator of gauge transformations $\chi$, whose Fourier components $\chi^{(2j)}$ live in the integer spin representations of $SU(2)$ with spin $\ge 1$.  

The fields $\mc{A}_{\vbar_1}$, $\mc{A}_{\vbar_2}$ are coupled only by a $z$-derivative, and not by derivative on $\R^4$. The fields $\mc{A}_{\vbar_i}$, $\mc{A}_{\zbar}$ are coupled by a derivative on $\R^4$. Therefore the Lagrangian is schematically of the form
\begin{equation} 
	\sum_j \int \mc{A}_{\vbar_1}^{(2j+\tfrac{1}{2})} \mc{A}_{\vbar_2}^{(2j+\tfrac{1}{2})} + \sum_j \mc{A}_{\vbar_i}^{(2j+\tfrac{1}{2})} D \mc{A}_{\zbar}^{2j } + \sum_j \mc{A}_{\vbar_i}^{(2j+\tfrac{1}{2})} D \mc{A}_{\zbar}^{2j+2 }   
\end{equation}
where $D$ indicates some derivative on $\R^4$.  

The fields $\mc{A}_{\vbar_i}^{(2j+\tfrac{1}{2})}$ are an infinite tower of auxiliary fields -- that is, fields with only a mass and not a kinetic term.  They can be integrated out without destroying locality, because the propagator is a $\delta$-function.  

The fields $\mc{A}_{\zbar}^{(2j)}$ for $j > 0$ are unphysical, as the are shifted by modes of the gauge transformation $\chi^{(2j)}$:
\begin{equation} 
	\delta \mc{A}_{\zbar}^{(2j)} = \chi^{(2j)}. 
\end{equation}
It is important that there is no space-time derivative appearing in this gauge variation. We can set all the fields $\mc{A}_{\zbar}^{(2j)}$ for $j > 0$ to zero by a gauge choice, leaving us with simply the zero mode $\mc{A}_{\zbar}^{(0)}$, which is the scalar field $\phi$. The $\phi \Lap \phi$  Lagrangian on $\mc{A}_{\zbar}^{(0)}$ appears from integrating out the fields $\mc{A}_{\vbar_i}^{(1/2)}$. 

This completes the proof that the theory on $\R^4$ corresponding to a theory on twistor space is just a scalar field, without an infinite tower of Kaluza-Klein modes.

\subsection{ Moving from twistor space to $\R^4$ using factorization algebras  }  

Passing from a field theory on twistor space to one on $\R^4$ can be done in a simple and clean way using the language of factorization algebras \cite{costello2021factorization} .  Let us first briefly recall what a factorization algebra is, and how they appear in field theory.

A factorization algebra $\mc{F}$ on a manifold $M$ is a structure which assigns to every open subset $U \subset M$ a graded vector space $\mc{F}(U)$.  These graded vector spaces are related for different opens as follows. First, we have linear maps $\mc{F}(U) \to \mc{F}(U')$ if $U \subset U'$, so that $\mc{F}$ is a pre-cosheaf.  We also have a \emph{factorization product} 
\begin{equation} 
	\mc{F}(U) \otimes \mc{F}(V) \to \mc{F}(W) 
\end{equation}
whenever $U$, $V$ are disjoint and contained in $W$.  These structures satisfy natural commutativity and associativity constraints detailed in \cite{costello2021factorization}, together with a ``gluing'' relation expressing $\mc{F}(U)$ in terms of the values of $\mc{F}$ on a sufficiently fine open cover of $U$.  The gluing relation will not be important for us.

Any classical or quantum field theory on a manifold $M$ gives rise to a factorization algebra.  At the classical level, the factorization algebra $\Obs^{cl}$ of classical observables sends $U$ to the algebra of functions on the space of solutions to the equations of motion of $U$. If we are dealing with a gauge theory, we take gauge-invariant functions, where gauge invariance and the equations of motion are both defined homologically, and then we take cohomology \footnote{ We are abusing notation slightly hear: what we refer to as $\Obs^{cl}$ and $\Obs^q$ are the cohomology of the cochain complexes referred to as $\Obs^{cl}$ and $\Obs^q$ in \cite{costello2021factorization}.}.

At the quantum level, $\Obs^q(U)$ is given by the functionals of the field which only depend on its behaviour on $U$, modulo divergences with respect to the functional measure.  Removing divergences is the quantum version of working on-shell, because the classical limit of the divergence associated to a field redefinition is simply the corresponding Euler-Lagrange equation.  Again, we should also impose gauge invariance homologically.  

Of course, defining the divergence with respect to the functional measure is technically difficult and requires renormalization.  This is done in \cite{costello2011renormalization, costello2021factorization}.

The factorization algebras of classical field theories are \emph{commutative}: for every open, $\Obs^{cl}(U)$ is a commutative algebra and the structure maps are maps of commutative algebras.  The factorization algebras of quantum field theories do not have this property.  To leading order, the deformation away from commutative can be measured by a kind of Poisson bracket, built from the one-loop OPE (see \cite{costello2021factorization}).  Factorization algebras thus provide a deformation-quantization perspective on constructing quantum field theory. 

The language of factorization algebras generalizes standard algebraic constructions in field theory.  For quantum mechanics, the factorization algebra encodes the usual associative algebra of operators; for two-dimensional chiral CFTs, factorization algebras are vertex algebras \cite{costello2021factorization, 2012.12214}; for Lorentzian theories \cite{1711.06674, 1903.03396}, factorization algebras reproduce the axioms of algebraic quantum field theory; and for Euclidean theories, factorization algebras encode the OPE \cite{costello2021factorization}.   

If $g : M \to N$ is a smooth map between manifolds, and $\mc{F}$ is a factorization algebra on $M$, then we can build a factorization algebra on $N$ by the formula
\begin{equation} 
	(g_\ast \mc{F})(U) = \mc{F}(g^{-1}(U)). 
\end{equation}
This expression for the push-forward factorization algebra is the same as that for the push-forward of sheaves.

A classical or quantum field theory on twistor space gives rise to a factorization algebra on twistor space, say $\Obs^{cl}_{\PT}$ or $\Obs^{q}_{\PT}$.  The corresponding factorization algebra on $\R^4$ is simply the push-forward along the fibration $\pi : \PT \to \R^4$. 

The free field theory on twistor space with field as in \eqref{eqn_fft_twistor} corresponds, under this push-forward, to a free scalar field theory.  At the classical level, this follows from a result of Penrose \cite{Penrose:1969ae}, as we will now see. 

The factorization algebra on twistor space assigns to any open subset $V \subset \PT$ the vector space
\begin{equation} 
	\Obs^{cl}_{\PT}(V) = \Oo(   H^\ast_{\dbar}(V, \Oo(-2))[1] ).
\end{equation}
Here $\Oo$ indicates ``functions'': so $ \Oo(   H^\ast_{\dbar}(V, \Oo(-2))[1] )$ is the symmetric algebra\footnote{One must be careful to complete the tensor product appropriately when taking the symmetric algebra.  This point is discussed in detail in \cite{costello2021factorization}.}  of the dual of $H^\ast_{\dbar}(V, \Oo(-2))[1]$. The symbol $[1]$ indicates a shift of cohomological degree, so that classes of $(0,1)$-forms are placed in degree $0$.

Similarly, for an open subset $U \subset \R^4$, the factorization algebra of classical observables of the free scalar field theory is 
\begin{equation} 
	\Obs^{cl}_{\R^4}(U) = \Oo ( \op{Harm}(U) ) 
\end{equation}
where $\op{Harm}(U)$ is the vector space of harmonic functions on $U$.

Penrose showed that there is an isomorphism
\begin{equation} 
	\op{Harm}(U) \iso H^\ast_{\dbar}(\pi^{-1}(U), \Oo(-2) ) [1]. 
\end{equation}
This isomorphism implies that on the right hand side only $H^1_{\dbar}(\pi^{-1}(U), \Oo(-2))$ is non-vanishing.   Penrose's isomorphism is compatible with inclusions of open subsets (it is part of an isomorphism of sheaves).  This immediately implies that we have
\begin{equation} 
	\pi_\ast \Obs^{cl}_{\PT} = \Obs^{cl}_{\R^4}.  
\end{equation}
A small variation of this construction shows that
\begin{equation} 
	\pi_\ast \Obs^q_{\PT} = \Obs^q_{\R^4} 
\end{equation}
i.e. that the push forward of the quantum factorization algebra from twistor space to $\R^4$ on twistor space gives rise to the quantum factorization algebra for a free scalar.  

For interacting theories, the push forward of the quantum factorization algebra from twistor space to $\R^4$ can be taken as a definition of the quantum theory on $\R^4$ corresponding to that on twistor space.

The theory on twistor space we focus on most in this paper is the type I topological string. In the introduction, we stated that this becomes the $WZW_4$ model for $SO(8)$ together with a K\"ahler potential field. In the language of factorization algebras, this means that the push-forward of the factorization algebra of the type I topological string is that of $WZW_4$, plus the K\"ahler potential. We will prove this statement in sections \ref{section_4d_computation}. 

At the quantum level, this push-forward provides a definition of the quantum factorization algebra of $WZW_4$ plus the K\"ahler potential.

\subsection{Holomorphic theories on twistor space and periodic RG flow}

In this section we will prove that for any holomorphic theory on twistor space, the corresponding theory on $\R^4$ has periodic RG flow.   

Intuitively, this is clear: the scaling action of $\R_{> 0}$ complexifies to a $\C^\times$ action on twistor space, scaling the fibres of the map $\Oo(1)^2 \to \CP^1$.  For any local theory on twistor space, we can apply an element $\mu \in \C^\times$ to find a family of such theories. For a holomorphic theory, this must depend holomorphically on $\mu$, so that the renormalization group flow analytically continues to $\C^\times$.

Something very similar happens for a chiral theory in dimension $2$.  Typically, such theories are classically scale-invariant, and there are never any logarithmic divergences which spoil this. 

We find it worthwhile to carefully phrase this intuitive argument as a theorem. 

A classical theory on a complex manifold is \emph{holomorphic}  if the space of fields, in the BV formalism, is of the form
\begin{equation} 
	\Omega^{0,\ast}(\PT, E)[1]	 
\end{equation}
where $E$ is some graded holomorphic vector bundle equipped with a non-degenerate symmetric pairing of holomorphic bundles
\begin{equation} 
	\ip{-,-}:	E \otimes E \to K_{\PT}. 
\end{equation}
This map extends to a map 
\begin{equation} 
	\ip{-,-}:	\Omega^{0,\ast}(\PT, E) \otimes \Omega^{0,\ast}(\PT, E) \to \Omega^{3,\ast}(\PT). 
\end{equation}
This map, when composed with integration, is the BV odd symplectic pairing.

We require that the kinetic term in the Lagrangian is 
\begin{equation} 
	\int \ip{e, \dbar e}  
\end{equation}
for $e \in \Omega^{0,\ast}(\PT, E)[1]$. The interaction terms must only involve holomorphic derivatives and must simply wedge and integrate the $(0,\ast)$ forms appearing in the fields.

Under these hypothesis, the linearized BRST operator is the $\dbar$ operator.  

These conditions can be relaxed a little to include the case when the BV anti-bracket is degenerate, as it is for Kodaira-Spencer theory \cite{1611.00311, 2009.07116} .  We still require that the linearized BRST operator is $\dbar$, and interactions are as above.

The symmetries of such a theory include all  anti-holomorphic vector fields $V$ on $\PT$.  (The fact that $E$ is a holomorphic bundle is enough data to define their action on the space of fields). Contraction $\iota_{V}$ with an anti-holomorphic vector field $V$ is an operator of ghost number $-1$ on the fields which preserves the interaction terms.  The Cartan homotopy formula,
\begin{equation} 
	[\dbar, \iota_{V}]  = \mc{L}_{V}, 
\end{equation}
tells us that the Lie derivative $\mc{L}_V$ is BRST exact.

Now let us turn to  proving that the renormalization group flow is periodic for a holomorphic theory on twistor space.  We will first prove that no holomorphic theory on twistor space can give rise to a theory on $\R^4$ with a one-loop log divergence. This is a consequence of the following lemma.
\begin{lemma}
	Consider any classical holomorphic theory on a complex manifold $X$. Suppose we have a complex Lie algebra $\g_{\C}$ which acts holomorphically on $X$, and where the action is lifted to one on the space of fields of the theory, preserving the Lagrangian. 

	Then, the obstruction to having a holomorphic action of $\g_{\C}$ on the theory  at one loop, is the same as the obstruction to having an action of any real form of $\g_{\C}$.
\end{lemma}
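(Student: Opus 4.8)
The plan is to show that the obstruction--deformation complex controlling a holomorphic action of $\g_{\C}$ coincides, up to a canonical isomorphism, with the one controlling an action of any real form $\g_{\R}$, and that the one-loop obstruction cocycle is carried from one to the other by this isomorphism because it is produced by the \emph{same} universal formula from the \emph{same} classical equivariant data.

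First I would recall the shape of the obstruction theory, following \cite{costello2011renormalization, costello2021factorization}. Given a classical holomorphic theory with fields $\mc{E} = \Omega^{0,\ast}(X,E)[1]$, interaction $I$, and a Lie algebra $\mf{h}$ acting on $\mc{E}$ and preserving $I$, the one-loop obstruction to an $\mf{h}$-equivariant quantization is a cohomology class, in the degree appropriate to a one-loop anomaly, of the local equivariant deformation complex
\begin{equation}
	\big( C^{\ast}_{\mathrm{Lie}}(\mf{h}) \otimes \Ool(\mc{E}),\ d_{\mf{h}} + \dbar + \{I,-\} \big),
\end{equation}
where $C^{\ast}_{\mathrm{Lie}}(\mf{h})$ denotes the (reduced, completed) Chevalley--Eilenberg cochains -- multilinear over the ground field of $\mf{h}$ -- and the class is obtained from $(I,\rho)$, with $\rho$ the action, by a universal one-loop Feynman-diagram formula built from the vertices of $I$ and the vertex $\mc{L}_{\rho(-)}$ together with the $\dbar$-propagator. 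By a \emph{holomorphic} action of $\g_{\C}$ we mean precisely that the quantum action is $\C$-linear, so its obstruction is computed in the $\C$-multilinear Chevalley complex of $\g_{\C}$; this is the notion relevant to analytically-continued (e.g. $\C^{\times}$-equivariant) families of theories.

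The crux is a comparison of complexes. Since $\g_{\C}$ acts holomorphically on $X$, its action on $\mc{E}$ is by Lie derivatives $\mc{L}_{V}$ along holomorphic vector fields $V$, $\C$-linearly in $V$, and it preserves $I$; hence $\Ool(\mc{E})$ is a complex $\g_{\C}$-module whose restriction to $\g_{\R}\subset\g_{\C}$ is exactly the classical $\g_{\R}$-equivariant structure. Using $\g_{\R}^{\vee}\otimes_{\R}\C \cong \g_{\C}^{\vee}$ together with $\Lambda^{\ast}_{\R}(W)\otimes_{\R}\C \cong \Lambda^{\ast}_{\C}(W\otimes_{\R}\C)$, one obtains, for any complex module $M$ whose $\g_{\R}$-action extends $\C$-linearly, an isomorphism of complexes
\begin{equation}
	C^{\ast}_{\mathrm{Lie}}(\g_{\R}) \otimes_{\R} M \;\xrightarrow{\;\sim\;}\; C^{\ast}_{\mathrm{Lie}}(\g_{\C}) \otimes_{\C} M,
\end{equation}
implemented by restricting $\C$-linear cochains on $\g_{\C}$ to $\g_{\R}$; the Chevalley differentials match because the bracket of $\g_{\C}$ is the $\C$-linear extension of that of $\g_{\R}$ and the module structures agree. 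Taking $M = \Ool(\mc{E})$ identifies the $\g_{\R}$- and $\g_{\C}$-equivariant deformation complexes, compatibly with $\dbar + \{I,-\}$ and with the classical equivariant data.

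Finally I would argue that the one-loop obstruction classes correspond under this identification. Because the theory is holomorphic and the action is by holomorphic vector fields, every diagram contributing to the obstruction involves only holomorphic derivatives and only $\C$-linear functionals of the $\g_{\C}$-ghost -- equivalently, the anti-holomorphic directions contribute nothing, since they act homotopy-trivially via $[\dbar,\iota_{V}] = \mc{L}_{V}$. Hence the $\g_{\C}$-obstruction genuinely lies in $C^{\ast}_{\mathrm{Lie}}(\g_{\C})\otimes_{\C}\Ool(\mc{E})$, and by naturality of the universal formula in the classical equivariant data its image under the isomorphism above is exactly the $\g_{\R}$-obstruction; since every real form complexifies to $\g_{\C}$, the answer is independent of the chosen real form. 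The step I expect to be the main obstacle is this last one: making the naturality of the one-loop obstruction formula under ``restriction to a real form'' fully precise -- in particular verifying that no leftover anti-holomorphic contributions survive, so that the $\C$-linear Chevalley complex genuinely computes the $\g_{\C}$-obstruction, and handling the completed tensor products carefully so that the comparison is an honest isomorphism of complexes rather than merely a quasi-isomorphism.
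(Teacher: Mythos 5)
Your argument is correct and follows essentially the same route as the paper: identify the $\g_{\R}$- and holomorphic-$\g_{\C}$-obstruction groups by complexification/restriction of Chevalley--Eilenberg cochains, then transport the one-loop obstruction class between them by naturality of the obstruction in the classical equivariant data. The only packaging difference is that the paper encodes ``holomorphic action'' via the dg Lie algebra $\g_{\dbar} = \g^{0,1}[1] \oplus (\g^{1,0}\oplus\g^{0,1})$ --- a homotopy trivialization of the anti-holomorphic part rather than a strictly $\C$-linear action --- and uses the quasi-isomorphism $\g_{\R}\otimes_{\R}\C \to \g_{\dbar}$; your strict $\C$-linear model computes the cohomology of $\g_{\dbar}$, so the obstruction groups agree, and the worry you flag about leftover anti-holomorphic contributions is exactly what the $\g_{\dbar}$ formalism (via $[\dbar,\iota_V]=\mc{L}_V$) absorbs.
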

\begin{proof}
	Consider its complexification
	\begin{equation} 
		\g_{\C} \otimes_{\R} \C = \g^{1,0} \oplus \g^{0,1}. 
	\end{equation}
	Classically, a holomorphic action of $\g_{\C}$ means a complex-linear  action of  $\g_{\C} \otimes_{\R} \C$ with a homotopy trivialization of the action of the anti-holomorphic piece $\g^{0,1}$.   The homotopy trivialization of the anti-holomorphic piece is implemented by asking that we have an action of the dg Lie algebra
	\begin{equation} 
		\g_{\dbar} = \g^{0,1}[1] \oplus (\g^{1,0} \oplus \g^{0,1})  
	\end{equation}
	where the differential cancels the copy of $\g^{0,1}$ in degree $-1$ with that in degree $0$.

	Let $\mc{E} = \Omega^{0,\ast}(\PT, E)$ denote the space of fields of our holomorphic theory, and $\Ool(\E)$ the cochain complex of local functionals (that is, Lagrangians up to total derivative, where the cohomological degree is ghost number).  In \cite{costello2011renormalization}, it is shown that for any Lie algebra $\mf{l}$ acting on a theory (including a possible action on space-time), obstructions to lifting this to an action on the quantum theory are in the cohomology group
	\begin{equation} 
		H^1_{red}(\mf{l}, \Ool(\E)). 
	\end{equation}
	Here the subscript indicates we are taking reduced Lie algebra cohomology. This is the cohomology of the cochain complex $\Ool(\E) \otimes \Sym^{> 0}(\g^\vee[-1])$ equipped with the sum of BV differential $\{S,-\}$ on $\Ool(\E)$ and the Chevalley-Eilenberg Lie algebra cohomology differential.

	To give a holomorphic action of $\g_{\C}$ is the same as to give an action of $\g_{\dbar}$. The anomaly to having such an action is controlled by the Lie algebra cohomology group
	\begin{equation} 
		H^1_{red}(\g_{\dbar}, \Ool(\E))		\label{eqn:obstruction_dbar} 
	\end{equation}
	Now fix any real form $\g_{\R}$ of $\g_{\C}$. There is a quasi-isomorphism
	\begin{equation} 
		\g_{\R} \otimes_{\R} \C \to \g_{\dbar}. 
	\end{equation}
	The obstruction to having an action of $\g_{\R}$ at one loop is (since our theory is defined over the field $\C$), an element in
	\begin{equation} 
		H^1_{red}(\g_{\R} \otimes_{\R} \C, \Ool(\E)). 
	\end{equation}
	This group is isomorphic to the group \eqref{eqn:obstruction_dbar} controlling obstructions to having a $\g_{\dbar}$ action.  

	Having a $\g_{\dbar}$ action implies having a $\g_{\R}$ action. Because the natural maps between the  obstruction groups is an isomorphism, this implies the converse is true.

	We note that nowhere in this proof did we assume that there are no gauge anomalies in our theory at one loop. It makes sense to consider the presence of anomalies to having a symmetry even in the presence of gauge anomalies.   
\end{proof}

Now we will use this to prove the following.
\begin{theorem} 
	Consider any classical holomorphic theory on $\PT$, invariant under the $\C^\times$ action which scales the $\Oo(1)^2$ fibres.  

	Then the corresponding theory on $\R^4$ is scale invariant at one loop. 
\end{theorem}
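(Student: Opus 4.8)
The plan is to apply the preceding lemma with $X = \PT$ and with $\g_{\C} = \C$ the abelian complex Lie algebra generating the $\C^\times$ action that rescales the $\Oo(1)^2$ fibres. By hypothesis this action is holomorphic, lifts to the space of fields, and preserves the Lagrangian, so the lemma is available. Among the real forms of $\C = \op{Lie}(\C^\times)$ I would choose $\g_{\R} = \R$, exponentiating to the subgroup $\R_{>0} \subset \C^\times$ of positive real rescalings. The point of this choice is geometric: in the incidence description of twistor space, the transformation of $\PT$ that multiplies the fibre coordinates by a positive real $t$ carries the real twistor line over $x \in \R^4$ to the real twistor line over $t x$, and so descends along $\pi \colon \PT \to \R^4$ to the dilatation $x \mapsto t x$ on $\R^4$. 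Thus the real form $\R$ corresponds, after pushing forward, to the scaling symmetry of the theory on $\R^4$.

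Next I would spell out the resulting chain of identifications. Since $\pi_\ast \Obs^q_{\PT} = \Obs^q_{\R^4}$ (Penrose's theorem and its quantum refinement, discussed above), and since the one-loop obstruction to a symmetry is computed from local functionals in a way compatible with the pushforward, the obstruction to lifting the $\R_{>0}$ action to the one-loop theory on $\PT$ coincides with the obstruction to lifting the dilatation action to the one-loop theory on $\R^4$; the latter is exactly the coefficient of the putative one-loop logarithmic running, whose vanishing is what ``scale invariant at one loop'' means. By the lemma, this obstruction equals the class in $H^1_{red}(\g_{\dbar}, \Ool(\E))$ controlling the holomorphic $\C^\times$-scaling action on the theory on $\PT$. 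So the theorem is reduced to showing that a holomorphic theory on $\PT$ which is classically invariant under the fibre-rescaling $\C^\times$ remains invariant under it at one loop.

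The main obstacle is precisely this last vanishing statement, and it is where one must use that the ambient geometry is $\Oo(1)^2 \to \CP^1$ rather than a general complex threefold. The obstruction is represented by a local functional on $\PT$ which, being the obstruction to $\C^\times$-invariance, one can take to be of weight zero under the rescaling of the fibre coordinates. I would argue that for a holomorphic theory -- fields that are $(0,\ast)$-forms valued in a holomorphic bundle, interactions built only from holomorphic derivatives and wedge products, integration against the fixed meromorphic volume form on $\PT$ -- there is no nonzero weight-zero local functional of the kind that can occur as a one-loop anomaly: the fibre coordinates and the holomorphic derivatives $\partial_{v_i}$ carry opposite weights, the volume form carries a fixed positive weight, and a weight count leaves no room. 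Care is needed because the anomaly is only well defined as a cohomology class and is not automatically weight-homogeneous; the clean way to run the argument is to fix the one-loop propagator and counterterms to be $\C^\times$-equivariant up to an overall character, so that every contributing diagram is weight-homogeneous, and then check that the total weight is forced to be nonzero. This is the rigorous content of the heuristic that the renormalization group flow of a holomorphic theory analytically continues to $\C^\times$ and so cannot contain the multivalued $\log\mu$ that a nonzero one-loop running on $\R^4$ would produce; I would present whichever form of the weight count admits the cleanest statement.
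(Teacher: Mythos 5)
Your setup and reduction are the same as the paper's: apply the lemma to $\g_{\C}=\C$ with real form $\R$ exponentiating to $\R_{>0}$, identify the dilatation obstruction on $\R^4$ with the obstruction to a holomorphic $\C^\times$ action on $\PT$, and reduce everything to showing that the latter vanishes. The gap is in how you discharge that last vanishing statement. Your weight count is trying to show that the relevant obstruction \emph{group} has no weight-zero classes, but that group is $H^1_{red}(\C,\Ool(\E))\cong H^0(\Ool(\E),\{S,-\})$, i.e.\ ghost-number-zero, BV-closed local functionals modulo exact ones, and the weight-zero part of this is generically nonzero: the classical interaction itself is a $\C^\times$-invariant, BV-closed local functional of ghost number zero, precisely because the theory is assumed classically scale invariant. (Equivalently: one-loop log-divergent counterterms are exactly the scale-invariant local functionals, so "no room by weight counting'' cannot be right.) The theorem requires showing that the particular obstruction \emph{class} vanishes, not that the group does. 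Your fallback --- "fix the one-loop propagator and counterterms to be $\C^\times$-equivariant'' --- begs the question, since $\C^\times$ is non-compact and one cannot in general choose a gauge-fixing metric or cutoff invariant under it; if one could, invariance at one loop would be automatic and there would be nothing left to prove.

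The paper closes this gap by exploiting the \emph{other} real form of $\C^\times$, namely the compact $S^1$. By the lemma, the obstruction to the holomorphic $\C^\times$ action equals the obstruction to an $S^1$ action; and for a compact group one can average to choose a gauge-fixing metric on $\PT$ and a cutoff that are genuinely invariant, so that all regularized amplitudes and counterterms preserve the symmetry and no perturbative anomaly can arise. This compactness step is the essential idea your argument is missing; everything before it in your proposal is sound.
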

\begin{proof}
	To prove this, it suffices to show that there is no anomaly at one loop to the theory  having an action of $\C^\times$.  If there is, then in particular the theory at one loop is preserved by the $\R_{> 0} \subset \C^\times$ symmetry, which means that the theory on $\R^4$ is scale invariant.  

	Here we are using the interpretation (perhaps not widely used) of the one-loop RG flow as being an anomaly. It is the anomaly to the classical scaling symmetry persisting at the quantum level. 

	From the previous lemma, the obstruction to have an $\R_{> 0}$ symmetry at one loop (i.e.\ the RG flow) is the same as the obstruction to having a holomorphic $\C^\times$ action, and the same as the obstruction to having an action of $S^1$.

	There are never any perturbative anomalies to having an action of a compact group $G$. This implies that the anomaly to having a holomorphic $\C^\times$ action vanishes. 

	Since the statement that perturbative anomalies vanish for compact groups  may be in conflict with some perspectives on QFT, let me explain this point.   By an \emph{action} of the group on the theory I simply mean the most naive thing: the compact group acts as transformations of the collections of fields, preserving the renormalized amplitudes of Feynman diagrams.   

	We can always arrange for a classical action of a compact group $G$ to persist to the quantum level, simply because we can always choose both a gauge and a cut-off compatible with the $G$-action. (In the case of a holomorphic theory on $\PT$, the choice of gauge comes from a metric on $\PT$ and we can arrange for it to be $S^1$-invariant). If we do this, then all amplitudes with a UV cut-off are $G$-invariant.  Introducing counter-terms and sending the cut-off does not change this, as we have at no point broken the $G$-symmetry.  In general, in perturbation theory, anomalies only arise when we have some symmetry that does not preserve the gauge choice or the cut-off. 

This argument does not contradict the fact that there may be 't Hooft anomalies.  These are different: these are obstructions to having an inner action, in the terminology of \cite{costello2021factorization}.  Nor does this argument contradict other occurrences of anomalies for compact groups, where the anomaly arises from non-perturbative effects.     
\end{proof}

This theorem implies that most familiar theories, such as Yang-Mills theory, can not be constructed from a local theory on twistor space. 

We can also use these methods to prove that the theory on $\R^4$ built from our twistor-string theory has periodic RG flow. In this case, the string coupling constant $\lambda$ has charge $-2$ under the $\C^\times$ action on twistor space which scales the $\Oo(1)^2$ fibres.  We can see this because the holomorphic Chern-Simons action appears with a factor of $\lambda^{-1} \Omega$ where $\Omega$ is the meromorphic volume form, and $\Omega$ has charge $-2$.     

We can ask that the holomorphic $\C^\times$ action on the classical theory (giving $\lambda$ charge $-2$) persists to the quantum level.  We will give a proof of this by obstruction theory in section \ref{section_twistor_quantization}:
\begin{proposition}
	Holomorphic Chern-Simons for $SO(8)$ coupled to type I Kodaira-Spencer theory gives rise to quantum theory on $\PT$ with a holomorphic action of $\C^\times$ covering the action of $\C^\times$ on $\PT$ scaling the $\Oo(1)^2$ fibres.	
\end{proposition}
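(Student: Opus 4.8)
The plan is to run the obstruction--deformation theory of \cite{costello2011renormalization} for an \emph{equivariant} quantization of the coupled theory, and to reduce the relevant obstruction groups to those for the compact real form $S^1\subset\C^\times$. Recall the theorem stated above: holomorphic Chern--Simons for $SO(8)$ coupled to type I Kodaira--Spencer theory admits a unique quantization on $\PT$ with the boundary conditions described. As observed above, the classical theory carries an action of $\C^\times$ scaling the $\Oo(1)^2$ fibres, under which the coupling $\lambda$ has charge $-2$; since this action fixes the divisors $z = 0,\infty$ and preserves the boundary conditions imposed there, it is a strict symmetry of the classical BV data. It is this action we want to lift. Once a holomorphic lift is shown to exist, the uniqueness half of the theorem forces the unique quantization itself to be $\C^\times$-equivariant.

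A holomorphic action of $\C^\times$ is, as in the Lemma above, an action of the dg Lie algebra $\g_{\dbar}$ with $\g = \op{Lie}(\C^\times) = \C$. By the usual inductive argument in powers of $\hbar$ (with $\C^\times$ acting on $\hbar$ through the weight induced by the charge of $\lambda$), the obstruction to extending an equivariant quantization one more order lies in $H^1_{red}(\g_{\dbar}, \Ool(\E))$, where $\E = \Omega^{0,\ast}(\PT, E)$ is the space of fields of the coupled theory, in the slightly weakened sense that allows the degenerate Kodaira--Spencer antibracket. The proof of the Lemma now applies verbatim at every order: the quasi-isomorphism
\[ \mathfrak{u}(1)\otimes_{\R}\C \xrightarrow{\sim} \g_{\dbar} \]
induces an isomorphism $H^1_{red}(\g_{\dbar}, \Ool(\E)) \iso H^1_{red}(\mathfrak{u}(1)\otimes_{\R}\C, \Ool(\E))$, and the right-hand group controls the obstruction to an $S^1$-equivariant quantization.

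It therefore suffices to exhibit an $S^1$-equivariant quantization. The $S^1$-action rotating the $\CP^1$ factor of $\PT\iso\R^4\times\CP^1$, with its induced action on the $\Oo(1)^2$ fibres, preserves a metric on $\PT$, so one can choose an $S^1$-invariant gauge fixing and an $S^1$-invariant cutoff; re-running the BV quantization of the theorem with these invariant choices gives a manifestly $S^1$-equivariant quantum theory. This is the principle, invoked already above, that a classical action of a compact group survives quantization because one can regulate invariantly, so that compact groups carry no perturbative anomalies. Hence every $S^1$-obstruction vanishes, so does every $\g_{\dbar}$-obstruction, the holomorphic $\C^\times$ action lifts, and by uniqueness the unique quantization on $\PT$ carries the desired $\C^\times$ action covering the scaling action on $\PT$.

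The step I expect to be the main obstacle is making this cohomological comparison precise for the theory at hand. One must verify that the one-loop Lemma genuinely upgrades to all loop orders here --- the higher-loop quantization being itself delicate, as the uniqueness theorem shows --- and that the inductive obstruction theory stays compatible with an $S^1$-invariant regularization in the presence of the second-order poles of $\Omega$ at $z = 0,\infty$, the modified fields and gauge transformations there, and the degenerate closed-string antibracket. One should also confirm that the two obstruction problems do not interfere: the anomaly cancelled by the Green--Schwarz mechanism is a \emph{gauge} anomaly, while the obstruction analyzed here is to a \emph{global} symmetry, and the Lemma is stated so as to hold even when gauge anomalies are present --- but this separation deserves to be checked rather than assumed.
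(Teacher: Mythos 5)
Your argument reaches the right conclusion but by a genuinely different route from the paper's. The paper's proof (in the appendix on quantization of the type I string on twistor space) never invokes the compact real form or an invariant regulator for the quantum theory: it first establishes, via the charge/loop-number grading of the complex of local functionals, that the obstruction--deformation cohomology vanishes in degrees $\le 0$, hence the quantization $\mc{T}$ is unique; it then observes that for any $g \in GL_2 \ltimes \C^4$ the pullback $g^\ast\mc{T}$ is another quantization of the same classical theory, so uniqueness supplies an equivalence $\mc{T}\simeq g^\ast\mc{T}$, the vanishing of $H^{\le 0}$ makes these equivalences coherent in $g$, and the same vanishing shows the anti-holomorphic part of the Lie algebra acts homotopically trivially --- which is what ``holomorphic action'' means. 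Your route instead uses the one-loop Lemma identifying the $\g_{\dbar}$-obstruction with the obstruction for a real form, constructs an $S^1$-equivariant quantization by invariant regularization, and transfers it back by uniqueness. What the paper's approach buys is that it handles all of $GL_2\ltimes\C^4$ uniformly and entirely sidesteps the question you correctly flag as the main obstacle, namely upgrading the one-loop Lemma to an all-orders inductive statement; what yours buys is a more concrete mechanism (an explicit invariant regulator) in place of the purely formal transport of structure.

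Two points to repair. First, you misidentify the compact form: the relevant $S^1\subset\C^\times$ acts by phase rotation on the fibres of $\Oo(1)^2$, fixing the base $\CP^1$ pointwise; it is not the rotation of the $\CP^1$ factor (that circle sits in the $SU(2)$ part of $GL_2$, not in the centre). The correct $S^1$ does preserve a K\"ahler metric on $\PT$, so your argument survives, but note that it does \emph{not} preserve the fibration $\PT\to\R^4$ --- it moves real twistor lines to complexified ones --- so the invariant gauge-fixing must be chosen on $\PT$ itself, as the paper does. Second, your inductive step implicitly assumes that at each order the obstruction \emph{class} for the $\g_{\dbar}$-action maps to the obstruction class for the $\mathfrak{u}(1)$-action under the restriction along the quasi-isomorphism, and that the lower-order choices on the two sides can be kept in correspondence (the $H^0_{red}$ groups also match, so this is formal, but it is exactly the bookkeeping the paper's uniqueness argument is designed to avoid). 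Neither point is fatal, but both should be made explicit.
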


This implies one of the main results of this paper, that $WZW_4$ for gauge group $SO(8)$, coupled to the gravitation theory described above, has a quantization with periodic RG flow.  

At this point I should point out a mistake I made in the talk \cite{WH}.  There, I asserted that for $G \neq SO(8)$, there was a one-loop log divergence.  This is incorrect, as pointed out by Greg Moore and Samson Shatashvili.   In their paper \cite{Losev:1995cr} with Losev and Nekrasov, they showed that there are no one-loop divergences for any group.   

This correct statement is consistent with the general picture one gets from twistor space.  A classical theory which is local on twistor space can not have any one-loop log divergences, as we saw above. However, suppose  the theory has a one-loop gauge anomaly, that can be canceled by a non-local term on twistor space.  This non-local term at one loop can contribute to two-loop, and higher, log divergences.

The fact that non-local terms on twistor space lead to log divergences further on in the loop expansion is clear from the example of $\phi^4$, where the $\phi^4$ interaction is a non-local expression on twistor space.  Two copies of this non-local vertex lead to the familiar one-loop logarithmic divergence of $\phi^4$ theory.

This discussion tells us that we would expect $WZW_4$, for $G \neq SO(8)$, to have a two-loop or higher logarithmic divergence.  And, for $G = SO(8)$, such a divergence should be canceled by a Green-Schwartz mechanism. 

I was unable to determine whether or not two-loop log divergences arise, but I hope to revisit this question.

\subsection{Analytic continuation of observables}
Quantum field theories which arise from local theories on twistor space have another remarkable property: the correlation functions extend analytically to meromorphic functions $\C^4$, with poles when the locations of the operators are lightlike separated.  

Intuitively, this is clear.   Let us suppose we have a classical or quantum holomorphic theory on $\PT$.   The translation action of $\R^4$ lifts to a holomorphic action of $\C^4$ on $\PT$. These are the holomorphic vector fields pointing along the fibres of the fibration $\PT = \Oo(1)^2 \to \CP^1$.    Let us suppose that our theory on $\PT$ has a holomorphic action of $\C^4$.  

A local operator $O_1$  at the origin in $\R^4$ corresponds to an operator in the holomorphic theory on twistor space localized at the twistor line $\CP^1 \subset \PT$ corresponding to the zero section of the fibration $\Oo(1)^2 \to \CP^1$.   By using the action of $\C^4$ we can move the operator $O_1$ to any twistor line $\CP^1 \subset \PT$. The collection of such $\CP^1$'s is $\C^4$, so that we get an operator $O_1(x)$ corresponding to $x \in \C^4$.

If $O_i$ is a collection of local operators in the theory on $\R^4$, then we can place $O_i$ at $x_i \in \C^4$. As long as the corresponding twistor lines $\CP^1_{x_i}$ do not touch, we can define the correlation functions
\begin{equation} 
	\ip{O_1(x_1) \dots O_n(x_n)} . 
\end{equation}
(Of course, the correlation functions depend on a state at $\infty$). 

The twistor lines $\CP^1_{x_i}$, $\CP^1_{x_j}$ touch whenever $\norm{x_i - x_j}^2 = 0$ (using the \emph{holomorphic} metric on $\C^4$).    Thus, the correlation functions $\ip{O_1(x_1) \dots O_n(x_n)}$ are defined whenever the $x_i$ are such that no two are null separated. 

This shows that, for any local theory on twistor space, the correlation functions (and OPEs) extend analytically to meromorphic functions on $(\C^4)^n$, with poles on the divisors $D_{ij}$ where  $\norm{x_i - x_j}^2$. 

This observation gives another very strong constraint on which theories can arise from twistor space.  It even constrains which classical theories on $\R^4$ can arise as local theories on  twistor space.  This is because the one-loop correlation functions between local operators only depend on the classical Lagrangian.  Any one-loop counter-terms, such as non-local counter-terms introduced to cancel an anomaly, can only contribute to two-loop correlation functions.

We use this later in the calculation of Lagrangian on $\R^4$ corresponding to Kodaira-Spencer theory on twistor space.  In principle this Lagrangian can be computed directly, but we find it more efficient to constrain the terms in the Lagrangian by asking that the one-loop OPE extends analytically to $\C^4$.   

\subsection{Factorization algebra formulation of analytic continuation}
It is helpful to formulate these ideas in the language of factorization algebras.   This will allow us to prove a theorem about the analytic continuation of the factorization algebra of the quantum theory corresponding to our type I twistor string theory. 

First, let us make a preliminary definition:
\begin{definition} 
	Two open subset $U,V$ in $\C^4$ are \emph{causally disjoint} if there does not exist a null straight line $\C \subset \C^4$ which intersects both $U$ and $V$.  
\end{definition}
\begin{definition}
	A \emph{causal prefactorization algebra} on $\C^4$ is a structure that assigns to every open subset $U \subset \C^4$, a graded vector space $\mc{F}(U)$; and to every collection $U_1, \dots, U_n \subset \C^4$ of opens, which are pairwise causally disjoint and contained in an open $V$, a linear map
	\begin{equation} 
		\mc{F}(U_1) \otimes \dots \mc{F}(U_n) \to \mc{F}(V). 
	\end{equation}
	These satisfy the same axioms as in \cite{costello2021factorization}, except with the word ``disjoint'' replaced everywhere by ``causally disjoint''.
\end{definition}
In categorical language, one can define a multi-category whose objects are open subsets of $\C^4$, and where there is a morphism $(U_1,\dots,U_n) \to V$ if $U_i$ are pairwise causally disjoint and contained in $V$. A causal factorization algebra is a functor of multicategories from this multicategory to that of graded vector spaces and multi-linear maps. One can contemplate adding a local-to-globed axiom to the definition of causal prefactorization algebra, to define a causal factorization algebra, but we will not do so here.

Causal prefactorization algebras on $\R^4$ are very closely related to the nets of observables of algebraic quantum field theory \cite{1711.06674, 1903.03396}.   The connection is roughly as follows.  For a region $U$ in space, the associative algebra $\mc{A}(U)$in the AQFT picture is the vector space the factorization algebra assigns to a small open $\what{U}$ in space-time which contains $U$. 

The factorization product for time-like separated opens associative product structure. The compatibility between the product in the space direction and the time-like product was shown in  \cite{1711.06674, 1903.03396} to imply that the algebras $\mc{A}(U)$, $\mc{A}(U')$ commute with each other when we map them to $\mc{A}(U'')$, for $U, U'$ disjoint opens in $U''$.  This is the causal axiom of algebraic quantum field theory, that operators in disjoint regions of space commute with each other.

\begin{lemma}
	Every factorization algebra $\mc{F}$  on $\PT$ gives rise to a causal prefactorization algebra $\what{F}$ on $\C^4$.   When restricted to $\R^4 \subset \C^4$, this coincides with the push-forward $\pi_\ast \mc{F}$ along the fibration $\PT \to \R^4$. 
\end{lemma}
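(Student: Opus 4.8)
The plan is to realize the correlation functions of $\mc{F}$ on $\CP^1$-lines in $\PT$ and then to move these lines around using the holomorphic $\C^4$-action, exactly as in the intuitive discussion of the previous subsection, but phrased cleanly at the level of opens. First I would fix, for each $x \in \C^4$, the twistor line $\CP^1_x \subset \PT$ cut out as the zero-section-translate of the fibration $\Oo(1)^2 \to \CP^1$ by the holomorphic translation $x$; these are disjoint precisely when no two of the $x$'s are null separated. For an open $U \subset \C^4$ I would set $\what{F}(U) = \mc{F}(\nu(U))$, where $\nu(U) \subset \PT$ is a suitable open neighbourhood of the union $\bigcup_{x \in U} \CP^1_x$ of twistor lines — a ``twistor tube'' over $U$. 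One must check that $\nu$ can be chosen functorially in $U$ (or, more honestly, that the resulting object is independent of the choice of tube up to canonical isomorphism, by cofinality of the tubes containing $\bigcup_{x\in U}\CP^1_x$ among all opens of $\PT$ containing it, using the pre-cosheaf structure of $\mc{F}$). This is the technical heart of the construction.

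Next I would construct the structure maps. Given $U_1,\dots,U_n$ pairwise causally disjoint and contained in $V$, the key geometric point is that causal disjointness of $U_i$, $U_j$ is exactly the statement that no null line meets both, hence that the twistor lines $\CP^1_x$ for $x \in U_i$ are disjoint from those for $x \in U_j$; therefore one can choose the tubes $\nu(U_i)$ to be pairwise disjoint opens in $\PT$, all contained in $\nu(V)$. The factorization product of $\mc{F}$ on $\PT$ then supplies
\begin{equation}
	\mc{F}(\nu(U_1)) \otimes \dots \otimes \mc{F}(\nu(U_n)) \to \mc{F}(\nu(V)),
\end{equation}
which is the desired map $\what{F}(U_1) \otimes \dots \otimes \what{F}(U_n) \to \what{F}(V)$. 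The associativity and (graded) commutativity axioms for $\what{F}$ are then inherited directly from those of $\mc{F}$ on $\PT$, since they are statements about nested and disjoint tubes, and disjoint tubes over causally disjoint regions can always be arranged; one checks the relevant diagrams commute by choosing a common refinement of the tube systems.

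Finally, the compatibility with $\pi_\ast \mc{F}$ on the real slice: for $U \subset \R^4 \subset \C^4$, the preimage $\pi^{-1}(U) \subset \PT$ is exactly $\bigcup_{x \in U} \CP^1_x$ with $\CP^1_x$ now the genuine real twistor fibre over $x$, so $\pi^{-1}(U)$ is cofinal among tubes and $\what{F}(U) = \mc{F}(\pi^{-1}(U)) = (\pi_\ast\mc{F})(U)$; and the structure maps agree because, for causally disjoint opens $U_i \subset \R^4$, disjointness in $\R^4$ implies the preimages $\pi^{-1}(U_i)$ are disjoint in $\PT$, giving the same factorization product. The main obstacle I anticipate is the functoriality/well-definedness of the assignment $U \mapsto \nu(U)$: one needs that the category of opens of $\PT$ containing a fixed union of twistor lines is filtered and that $\mc{F}$ takes the colimit over this category to a well-defined value (this is where one uses that $\mc{F}$ is at least a pre-cosheaf, and ideally the local-to-global axiom), and one must be careful that ``causally disjoint'' really does force the twistor tubes apart — i.e. that two twistor lines $\CP^1_x$, $\CP^1_y$ intersect if and only if $x$ and $y$ lie on a common null line, which is the classical incidence geometry of twistor space and should be cited as such.
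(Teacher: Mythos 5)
Your proposal is correct and follows essentially the same route as the paper: assign to an open $U\subset\C^4$ the value of $\mc{F}$ on the region of $\PT$ swept out by the twistor lines $\CP^1_x$ for $x\in U$, and use the incidence relation ($\CP^1_x\cap\CP^1_y\neq\emptyset$ iff $\norm{x-y}^2=0$, i.e.\ iff $x,y$ lie on a common null line) to convert causal disjointness in $\C^4$ into honest disjointness in $\PT$, so that the factorization product of $\mc{F}$ supplies the causal structure maps and the axioms are inherited.

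The one place you diverge --- and precisely the step you flag as ``the technical heart'' --- is unnecessary. The union $\bigcup_{x\in U}\CP^1_x$ is itself already open in $\PT$: it is the image of $q^{-1}(U)$ under the projection $p$ of the incidence variety $F=\{(x,y):y\in\CP^1_x\}\subset\C^4\times\PT$ to $\PT$, and $p$ is a fibre bundle with fibre $\C^2$ (the $\alpha$-plane of lines through a fixed point of $\PT$), hence an open map. The paper therefore simply sets $\what{U}=\bigcup_{x\in U}\CP^1_x$ and $\what{\mc{F}}(U)=\mc{F}(\what{U})$; no tubes, no colimit over neighbourhoods, and none of the attendant cofinality or functoriality worries arise. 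Your tube construction can be repaired by the same observation --- the union is the smallest of your tubes, so it is initial in the system and the pre-cosheaf maps identify the putative colimit with $\mc{F}$ of the union --- but at that point one has just reproduced the paper's definition. The only place a limit genuinely enters is the restriction to $\R^4$, where, as you do, the paper identifies $\op{lim}_{V\subset U}\mc{F}(\what{U})$ with $\mc{F}(\pi^{-1}(V))$ using that $\bigcup_{x\in V}\CP^1_x=\pi^{-1}(V)$ for $V\subset\R^4$.
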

\begin{proof}
	Every open $U\subset\C^4$ gives rise to an open subset $\what{U} \subset \PT$, which is
	\begin{equation} 
		\what{U} = \bigcup_{x \in U} \CP^1_x 
	\end{equation}
	i.e.\ $\what{U}$ is the union of the twistor lines $\CP^1_x$ corresponding to $x \in U$.  

	Given a factorization algebra $\mc{F}$ on $\PT$, we get a factorization algebra $\what{\mc{F}}$ on $\C^4$ by setting
	\begin{equation} 
		\what{\mc{F}}(U) = \mc{F}(\what{U}). 
	\end{equation}
	Given opens $U,U' \subset \C^4$, the corresponding opens $\what{U}$, $\what{U}'$ are disjoint if and only if $U,U'$ are causally disjoint.  

	This means the factorization product is defined whenever $U, U'$ are causally disjoint.

	The restricting of $\what{\mc{F}}$ to $\R^4$ assigns to an open $V \subset \R^4$ the limit
	\begin{equation} 
		\what{\mc{F}}\mid_{\R^4}(V) = \op{lim}_{V \subset U} \what{\mc{F}}(U). 
	\end{equation}
	The limit is taken over smaller and smaller opens in $\C^4$ containing $V$. 

This is the same as the limit
	\begin{equation} 
		\what{\mc{F}}\mid_{\R^4}(V) = \op{lim}_{V \subset U} \mc{F}(\what{U}). 
	\end{equation}
	The limit of the opens $\what{U}$ as $U$ ranges over small neighbourhoods of $V$ is the same as $\pi^{-1}(V)$, so we get
\begin{equation} 
	\what{\mc{F}}\mid_{\R^4}(V) = \mc{F}(\pi^{-1}(V) ).  
	\end{equation}
\end{proof}
This lemma shows that any factorization algebra on $\PT$ gives rise to a factorization algebra in every signature. Indeed, it gives rise to a factorization algebra defined for the analytically continued family of metrics $e^{2 \pi \i \theta} g$ for all $\theta$.    

This fits well with our discussion about the periodic RG flow.  Performing the RG flow in an imaginary direction is the same as scaling the metric by a complex number of norm $1$.  The statement that the theory has periodic RG flow is then the statement that we can analytically continue the family of metrics to $e^{2 \pi i \theta} g$, where at $\theta = 0$ we get the same theory as at $\theta = 1$.  

Next, we will encode axiomatically the way in which the product of observables depends holomorphically on the position of the opens $U \subset \C^4$.  For these purposes, it is helpful to work not with arbitrary opens in $\C^4$, but with poly-discs. A poly-disc in $\C^4$ is the product of four in $\C$.  It is traditional in multi-variable complex analysis to use poly-discs, because the Dolbeault cohomology is known to vanish on a poly-disc. 

Thus, let $D(x,r) \subset \C^4$ be the poly-disc consisting of those $x' \in \C^4$ where $\abs{x_i - x'_i} < r$ for $i = 1,\dots, 4$.  

We let $\mc{F}(x,r)$ be the graded vector space that the factorization algebra assigned to $D(x,r)$. We say $\mc{F}$ is \emph{translation-invariant} if there are isomorphisms $\mc{F}(x,r) \iso \mc{F}(x',r)$ compatible with the factorization product, in the sense that if we translate all the discs involved in the factorization product the result does not change.  

This allows us to write $\mc{F}_r$ for $\mc{F}(x,r)$ for any $x$.

Let
\begin{equation} 
	\op{PDisc}(r^1,\dots,r^n \mid R) \subset (\C^4)^n  
\end{equation}
be the open subsets consisting of $n$ points $x^1,\dots, x^n \in \C^4$ such that the poly-discs $D(x^i, r^i)$ are causally separated and contained in $D(0,R)$.   The factorization product defines a map
\begin{equation} 
	\mc{F}_{r^1} \otimes \dots \otimes \mc{F}_{r^n} \times \op{PDisc}(r^1,\dots,r^n \mid R) \to \mc{F}_R  
\end{equation}
which is linear in all the elements of $\mc{F}_{r^i}$. We say it is \emph{holomorphically translation invariant} if this map depends holomorphically on the point in $\op{PDisc}(r^1,\dots,r^n \mid R)$.

The structure of holomorphic causal factorization algebra on $\C^4$ is in many ways similar to that of a vertex algebra.  Without the causal condition, where the discs are required to be simply disjoint, holomorphic factorization algebras are much more subtle.  This is because Hartog's theorem tells us that the OPE, which is a function on $\C^4 \setminus 0$, extends across the origin.   With the causal condition, we have poles on the light-cone, which is a divisor. Thus, the OPE is a meromorphic function on $\C^4$ whose denominator is a power of $\norm{x}^2$.    

The group $\C^4$ acts on $\PT$.  Given a classical holomorphic theory on $\PT$, with an action of $\C^4$, then it is automatic that the corresponding causal factorization algebra on $\C^4$ is holomorphic.  

We can ask whether this happens at the quantum level. For the twistor string theory which is our main object of study, this is true.
\begin{proposition} 
	The factorization algebra on $\PT$ of quantum observables of $SO(8)$ holomorphic Chern-Simons coupled to type I Kodaira-Spencer theory has a holomorphic action of the group $\C^4$, and so gives a holomorphic causal factorization algebra on $\C^4$.  
\end{proposition}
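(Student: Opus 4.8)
The plan is to run the obstruction theory for the quantization $\C^4$-equivariantly, to reduce the resulting anomaly to one for the real form $\R^4 \subset \C^4$ via the holomorphic-action lemma above, and then to kill that anomaly using a translation-invariant regularization together with the uniqueness of the quantization established for $SO(8)$ holomorphic Chern--Simons coupled to type I Kodaira--Spencer theory.

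First I would check that the classical theory carries a holomorphic action of $\C^4$, acting on $\PT$ by the vertical translations of $\Oo(1)^2 \to \CP^1$. These are holomorphic vector fields acting only along the fibres, so they preserve the meromorphic volume form $\Omega = z^{-2}\,\d z\,\d v_1\,\d v_2$, they are divergence-free (hence preserve the Kodaira--Spencer constraint $\partial\mu = 0$ and all gauge symmetries), and they preserve the Dirichlet boundary conditions at $z = 0,\infty$, which are imposed in the base direction. Thus $\C^4$ acts on the classical BV theory preserving the Lagrangian, and we are in the situation of the holomorphic-action lemma with $\g_{\C} = \C^4$.

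Next I would set up the obstruction theory for lifting this action to the quantum theory, working in $\Ool(\E)[[\hbar]]$ and tracking the weight under the $\C^\times$-action already constructed (translations should have weight $+1$ for the fibre scaling). Writing $\C^4 \otimes_\R \C = \g^{1,0} \oplus \g^{0,1}$ and $\g_{\dbar} = \g^{0,1}[1] \oplus (\g^{1,0} \oplus \g^{0,1})$ as in the lemma, a holomorphic $\C^4$-action is an action of $\g_{\dbar}$, whose anomaly is controlled by $H^1_{red}(\g_{\dbar}, \Ool(\E))$; the quasi-isomorphism $\R^4 \otimes_\R \C \to \g_{\dbar}$ identifies this group with $H^1_{red}(\R^4 \otimes_\R \C, \Ool(\E))$, which controls the anomaly to an action of the subgroup $\R^4 \subset \C^4$ of real translations, the argument being that of the lemma applied at each order in $\hbar$. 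So it suffices to produce a quantum action of $\R^4$. For this I would use the metric on $\PT \iso \R^4 \times \CP^1$ that is the product of the flat metric on $\R^4$ and the round metric on $\CP^1$: translations preserve this metric, hence preserve the gauge-fixing operator and heat-kernel regularization it determines, so every regularized amplitude is $\R^4$-invariant, and the condition characterizing our quantization — vanishing of all anomalies, which by the theorem that $SO(8)$ holomorphic Chern--Simons coupled to type I Kodaira--Spencer theory admits a unique quantization on $\PT$ pins it down uniquely — is an $\R^4$-invariant condition. A unique object satisfying an $\R^4$-invariant condition is $\R^4$-invariant, so the $\R^4$-action lifts with no anomaly at any order, and hence so does the holomorphic $\C^4$-action. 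Finally, as explained above a holomorphic $\C^4$-action lets one transport a local observable along the $\C^4$-orbit of twistor lines with holomorphic dependence on the position; applied to $\Obs^q_{\PT}$ with the $\C^4$-action just constructed, this is exactly the structure of a holomorphic causal factorization algebra on $\C^4$.

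The step I expect to be the main obstacle is making the $\R^4$-invariance argument watertight: because $\R^4$ is noncompact one cannot simply average over the group to impose invariance of the counterterms, so the argument really rests on the uniqueness theorem, and one must confirm that the anomaly-cancellation conditions, phrased with the translation-invariant gauge fixing, are literally the same conditions whose unique solution is the quantization. A secondary technicality is the $\C^\times$-weight bookkeeping needed to make the new $\C^4$-action compatible with the earlier $\C^\times$-action and to see the two close into the expected symmetry algebra.
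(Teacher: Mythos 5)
Your proposal is correct and rests on the same two pillars as the paper's appendix proof: the uniqueness of the quantization (cohomology vanishing in degrees $\le 0$ of the obstruction--deformation complex) and the identification of the obstruction group for a holomorphic $\g_{\C}$-action with that for a real form via the quasi-isomorphism $\g_{\R}\otimes_{\R}\C \to \g_{\dbar}$. The organization differs slightly: the paper never chooses an invariant gauge, but instead works over the simplicial set $\mc{GF}$ of all gauge-fixing conditions, defines the group action by transporting $g^\ast \mc{T} \simeq \mc{T}$ using uniqueness (this is needed anyway for the full $GL_2 \ltimes \C^4$, which preserves no single metric), and then obtains holomorphy by showing the anti-holomorphic part of the complexified Lie algebra acts homotopically trivially because all obstructions to a trivialization vanish. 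Your route --- fix the translation-invariant product metric so that real translations preserve the heat-kernel regularization, invoke uniqueness to make the $\R^4$-action honest, then complexify via the lemma --- is a legitimate and somewhat more concrete shortcut for the $\C^4$ subgroup alone; what it gives up is that it does not by itself produce the compatible $GL_2$-action, for which the $\Omega^\ast(\mc{GF})$ machinery is still required. Your flagged worry about noncompactness is handled exactly as you suspect: one works in the subcomplex of translation-invariant local functionals (which is where the obstruction computation lives anyway), and uniqueness does the rest.
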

We will prove this in the appendix \ref{section_twistor_quantization} as part of the proof that this theory is well-defined at the quantum level.

The factorization algebra by itself encodes the operator product.  With a little more data, it also defines the correlation functions. The extra data is a state, which is a linear map 
\begin{equation} 
	\ip{} : 	\mc{F}(\R^4) \to \C 
\end{equation}
which is invariant under translations and any other symmetries of space-time the theory possesses.  For the correlation functions to extend analytically, we need the state to extend to a linear map
\begin{equation} 
	\ip{}: \mc{F}(\C^4) \to \C 
\end{equation}
which is again invariant under translations. In our context, there is no difficulty in building a state like this, although there are many different such states leading to different correlation functions.   The OPEs, however, are universal and independent of the state.

\section{Computing the four-dimensional theory associated to the type I topological string}
\label{section_4d_computation}

In section \ref{section_typeI_intro} we discussed holomorphic Chern-Simons theory with a meromorphic volume form on twistor space.  In this section we will show that the corresponding theory on $\R^4$ is $WZW_4$, plus a K\"ahler potential.

Let us start with the holomorphic Chern-Simons theory field: 
\begin{theorem}
	If $\Obs^{cl}_{\PT,\g}$ be the factorization algebra associated to classical holomorphic Chern-Simons theory on twistor space, with three-form $\Omega = \d v_1 \d v_2 \d z /z^2$ as above and with gauge field vanishing at $z = 0$, $z = \infty$.  
	Then, $\pi_\ast \Obs^{cl}_{\PT,\g}$ is the classical factorization algebra associated to the $WZW_4$ theory with gauge group $G$, where we work in perturbation theory around the identity in $G$. 
\end{theorem}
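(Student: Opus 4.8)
The plan is to exhibit both sides as the Chevalley--Eilenberg cochains of a local $L_\infty$-algebra and reduce the statement to a quasi-isomorphism of such algebras over $\R^4$. On twistor space, classical holomorphic Chern--Simons theory with $\Omega = z^{-2}\,\d z\,\d v_1\,\d v_2$ and the stated boundary conditions is governed by the local dg Lie algebra $V \mapsto \mc{L}^{\mathrm{hCS}}(V) = \Omega^{0,\ast}(V,\g)_{\mathrm{Dir}}$, with differential $\dbar$ and the wedge--bracket; here the subscript records that the $(0,\ast)$-forms (and gauge parameters) vanish to first order along $\{z=0\}$ and $\{z=\infty\}$ so that $\Omega \wedge CS(\mc{A})$ is regular, and the Maurer--Cartan equation is flatness $F^{0,2}(\mc{A})=0$. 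Because the zeros imposed by $\mathrm{Dir}$ cancel the double poles of $\Omega$ exactly, the volume-form pairing is non-degenerate and identifies the BV fields with their own $(-1)$-shifted cotangent; hence $\Obs^{cl}_{\PT,\g}(V) = \mathrm{C}^\ast\big(\mc{L}^{\mathrm{hCS}}(V)\big)$, the functions on the derived moduli of $\dbar$-flat $\g$-connections on $V$. On $\R^4$, the $WZW_4$ model with group $G$, expanded around $\sigma = 1$ (i.e.\ in perturbation theory), is governed by a local $L_\infty$-algebra $U \mapsto \mc{L}^{\mathrm{WZW}}(U)$ whose field is $\gamma \in \Omega^0(U,\g)$ with $\sigma = e^{\gamma}$, whose Maurer--Cartan equation is the $WZW_4$ equation of motion, whose binary differential is the (twisted) Laplacian coming from $\int\op{tr} J\wedge\ast J$, and whose higher brackets encode the cubic and higher terms of $\int \op{tr} J\wedge\ast J + \tfrac13\int A\wedge\op{tr} J\wedge[J,J]$, with $A$ the background $U(1)$-field satisfying $\d A = \omega$; then $\Obs^{cl}_{WZW,G}(U) = \mathrm{C}^\ast\big(\mc{L}^{\mathrm{WZW}}(U)\big)$. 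Since the classical factorization algebra of a theory depends only on the quasi-isomorphism type of its local $L_\infty$-algebra, and $\mathrm{C}^\ast$ (with the appropriate completions, as in \cite{costello2021factorization}) takes quasi-isomorphisms to quasi-isomorphisms, it suffices to construct a zig-zag of quasi-isomorphisms of sheaves of $L_\infty$-algebras $\pi_\ast\mc{L}^{\mathrm{hCS}} \simeq \mc{L}^{\mathrm{WZW}}$ on $\R^4$, natural in the open set. The underlying classical field-theory equivalence is exactly the content of \cite{2011.04638,2011.05831}; the task here is to recast it in this form, keeping track of the boundary conditions and of the restriction to the formal neighbourhood of $\sigma=1$.

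The linear part is Penrose's transform. Along a twistor line $\CP^1_x$ the bundle underlying the Dirichlet condition is $\Oo_{\PT}(-D_0-D_\infty) \iso \pi^\ast\Oo_{\CP^1}(-2)$, which restricts to $\Oo_{\CP^1}(-2)$, so $H^0 = 0$ --- matching the absence of residual gauge symmetry in $WZW_4$ --- and $H^1 = \g$, one copy per point of $\R^4$. Penrose's isomorphism $\op{Harm}(U) \iso H^\ast_{\dbar}(\pi^{-1}(U),\Oo_{\PT}(-2))[1]$, already used in Section~\ref{twistor_generalities} for the free scalar, then identifies the cohomology sheaf of $U \mapsto \Omega^{0,\ast}(\pi^{-1}(U),\g)_{\mathrm{Dir}}$ with $U \mapsto \op{Harm}(U)\otimes\g$, placed in the degree of the linearized equation $\Lap\gamma = 0$ and natural in $U$. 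This determines the underlying linear complex of $\pi_\ast\mc{L}^{\mathrm{hCS}}$ up to quasi-isomorphism, compatibly with restrictions, and reproduces the free-scalar computation of Section~\ref{twistor_generalities} tensored with $\g$; it does not yet see the interactions.

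To obtain a cochain-level local $L_\infty$-algebra on $\R^4$ and to extract the interactions, I would resolve along the $\CP^1$ fibre by its Dolbeault complex and perform homotopy transfer in the fibre direction. Concretely, choose a contraction of $\big(\Omega^{0,\ast}(\CP^1,\Oo_{\CP^1}(-2)),\dbar_{\CP^1}\big)$ onto its cohomology $H^1 = \C$ --- equivalently, the fibrewise Hodge gauge already described in Section~\ref{twistor_generalities}, which kills the modes $\mc{A}_{\zbar}^{(2j)}$ for $j>0$ and integrates out the auxiliary doublets $\mc{A}_{\vbar_i}$ --- and transfer the dg Lie structure of $\mc{L}^{\mathrm{hCS}}(\pi^{-1}(U))$ through it. The transferred binary bracket is a second-order operator on $U$, which is where the Laplacian comes from (the fibre homotopy is precisely Penrose's transform turning an $H^1$-class into a harmonic function), and the transferred trees built from the $\tfrac23\mc{A}^3$ term produce the cubic and higher vertices. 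The substance of the proof is then the calculation that these brackets reproduce, on the nose and with the normalization implicit in $\Omega$, the expansion of $\int \op{tr} J\wedge\ast J + \tfrac13\int A\wedge\op{tr} J\wedge[J,J]$ around $\sigma = 1$; the delicate bookkeeping is the behaviour at the poles $z=0,\infty$ of $\Omega$, which is where the coupling to $A$ with $\d A = \omega$ originates --- the Wess--Zumino/topological-current term being sourced by the divisor where the volume form degenerates. I expect this homotopy-transfer step to be the main obstacle: it is in substance the computation of \cite{2011.04638,2011.05831}, but one must carry it out carefully enough to see that it records the full $L_\infty$-structure, not merely the classical action, and while respecting the boundary conditions. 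Everything else is formal: applying $\mathrm{C}^\ast(-)$ to the resulting zig-zag yields $\pi_\ast\Obs^{cl}_{\PT,\g} \iso \Obs^{cl}_{WZW,G}$, and compatibility with the factorization products is automatic, because on classical observables these products are just the (completed) multiplications of the symmetric algebras together with the restriction maps, all of which a map of sheaves of $L_\infty$-algebras respects.
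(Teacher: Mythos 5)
Your proposal is correct and is essentially the argument the paper gives: the abelian/linear part is handled by the Penrose transform, and the identification of the interacting structure is deferred to the fibrewise reduction carried out in \cite{2011.04638,2011.05831}, which is exactly what the paper's own proof does. Your homotopy-transfer packaging is simply the $L_\infty$-precise version of the paper's step of parametrizing the fibrewise gauge-equivalence classes by $\sigma:\R^4\to G$, solving $F_{\zbar\vbar_i}=0$ for the auxiliary components, and substituting back into the Chern--Simons action, so the decomposition, the key lemma, and the point at which the external computation is invoked all coincide.
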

\begin{proof}
	This result was stated in \cite{WH}.  A proof was given in \cite{2011.04638, 2011.05831}.  So we will not give all the details here. 

	We pause to note that what we will find naturally is the analytically-continued version of $WZW_4$, involving a field which is a map to  the complex group $G$, and not to any real form.  This is because holomorphic Chern-Simons is an analytically-continued theory \cite{Witten:2010zr}  , where the space of fields is a complex manifold and the path integral is taken over a contour.  In perturbation theory, the choice of contour doesn't matter.  Because of this, the corresponding theory on $\R^4$ will also be an analytically continued theory.  In the language of factorization algebras, this simply means we have a factorization algebra over $\C$ and not $\R$.

Let us start by discussing the limit where the Lie algebra $\g$ is made Abelian. Then holomorphic Chern-Simons theory becomes free, and the fields live in $\Omega^{0,1}(\PT, \Oo(-2)) \otimes \op{dim} \g$.  The Penrose transform immediately implies that the push-forward of the factorization algebra of classical observables to $\R^4$ gives a free scalar field valued in $\mf{g}$.

	The content of the result is that the deformation coming from making $\g$ non-Abelian is precisely the $WZW_4$ Lagrangian. 

	As a first step in this direction, we note that we can restrict the holomorphic Chern-Simons $(0,1)$-form to every $\CP^1$ in twistor space. There, it describes a holomorphic bundle trivialized at $0$ and $\infty$. The moduli space of such bundles is $G$.

	Since points $x \in \R^4$ give rise to $\CP^1$'s in twistor space, the holomorphic Chern-Simons gauge field gives rise to a map $\sigma : \R^4 \to G$.  Further, this map entirely encodes the gauge equivalence class of holomorphic Chern-Simons gauge field on each twistor fibre.      As in \cite{Costello:2019tri}, we can express $\mc{A}_{\zbar}$ in terms of $\sigma$ by choosing a gauge, then use the equations of motion $F_{\zbar \vbar_i} = 0$ to find $\mc{A}_{\vbar_i}$ in terms of $\sigma$, and finally plug the resulting expressions into the Chern-Simons action to calculate a Lagrangian for $\sigma$.  This Lagrangian was determined in \cite{WH,2011.04638,2011.05831} to be  $WZW_4$. 

	We will not give all the details, but it will be helpful in what follows to have an explicit representative for some components of the field $\mc{A}$ on twistor space corresponding to $\sigma$.  If $\sigma$ satisfies the equations of motion, then we should build, for each $z \in \CP^1$, a holomorphic bundle on the $\CP^1$ living over $z$. This bundle should be trivialized at $0,\infty$.  The formula for the $(0,1)$ form in the complex structure $z$ is  
	\begin{equation} 
		 \pi_z^{0,1} J^{0,1}.  
	\end{equation}
	Here, $J^{0,1}$ is the $(0,1)$ part of the current for $\sigma$ on $\R^4$, in the complex structure at $z = 0$. We pull this back to be a $1$-form on twistor space and then project this to be a $(0,1)$ form on the fibre of $z$.    Our conventions are such that the complex structure at $z = 0$ is the one we use on $\R^4$. 

	Then this bundle at $z = 0$ is trivialized by $\sigma$, and at $z = \infty$ it is trivial, as there we have the opposite complex structure to that at $z = 0$ so $\pi^{0,1} J^{0,1} = 0$.

	One can check in coordinates that, when restricted to the complex plane living over $z$, we have 
	\begin{equation} 
		\pi_z^{0,1} J^{0,1} = \frac{1}{1 + \abs{z}^2} \d \vbar_i J_{\ubar_i}  	 
	\end{equation}
	The equations of motion of $WZW_4$ tell us that $\partial_{u_i} J_{\ubar_i} = 0$. This implies that for each value of $z$, $\pi^{0,1} J^{0,1}$ defines a holomorphic bundle in that complex structure.

This tells us that, away from a small neighbourhood of $z = 0$, and dropping terms with a $\d \zbar$, we can write the gauge field on twistor space as
	\begin{equation} 
		\mc{A} = \pi_1^{0,1} J^{0,1}.  
	\end{equation}

\end{proof}

\subsection{The push-forward of the closed-string fields}

What of the closed-string fields? The closed-string field on twistor space is a $(2,1)$ form $\eta \in \Omega^{2,1}$ which is $\partial$-closed.  Since $\eta$ is $\partial$-closed, we can locally write $\eta = \partial \gamma$, where $\gamma$ is a $(1,1)$-form. Then, the integral of $\gamma$ over the fibres of the map $\PT \to \R^4$ gives rise to a scalar $\rho$. 

The factorization algebra on $\R^4$ associated to the closed-string fields has the following description.
\begin{theorem} 
	The push-forward to $\R^4$ of the factorization algebra associated to the closed-string fields is a \emph{subalgebra} of the factorization algebra of classical observables associated to a scalar field $\rho$ with the Lagrangian
\begin{equation} 
	\int \omega^2 \rho \Lap^2 \rho +  (\Lap \rho)  (\partial \dbar \rho)^2  - \frac{2}{3} (\Lap \rho)^3 \omega^2 + O(\rho^3). 
\end{equation}
	(where I do not currently know the quartic and higher terms, but they all only depend on the derivatives of $\rho$). 

The subalgebra is that given by functions of the closed one-form $\d \rho$, i.e.\ by operators that only involve derivatives of $\rho$.
\end{theorem}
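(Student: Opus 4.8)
The plan is to compute the push-forward of the factorization algebra of the closed-string (type I Kodaira--Spencer) sector along $\pi : \PT \to \R^4$ in two stages: first identify the underlying classical space of solutions on $\R^4$ together with its local functionals, and then identify which observables actually appear in the image of the push-forward. At the linearized level the closed-string field is a $(2,1)$-form $\eta$ on $\PT$ with $\partial\eta = 0$ and gauge symmetry $\eta \mapsto \eta + \dbar\chi$; I would first run a Penrose-transform style argument (exactly as in the treatment of the free scalar $\mc{A} \in \Omega^{0,1}(\PT,\Oo(-2))$ earlier in the paper) on $\pi^{-1}(U) = U\times\CP^1$ for a poly-disc $U$. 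Writing $\eta = \partial\gamma$ locally with $\gamma$ a $(1,1)$-form and integrating $\gamma$ over the $\CP^1$ fibres produces the scalar $\rho$ on $\R^4$; the key computation is to Dolbeault-resolve $\eta$ fibrewise, use $H^{>0}_{\dbar}(\CP^1,\Oo(k))=0$ for $k\ge -1$ to kill the higher KK modes, and read off that the surviving physical degree of freedom on $U$ is a single scalar $\rho$ whose linearized equation of motion is $\Lap^2\rho = 0$ (matching the stated kinetic term $\int \omega^2 \rho\,\Lap^2\rho$, which is the $(1,1)$-analogue of the unusual $\int\dbar\eta\,\partial^{-1}\eta$ kinetic term on twistor space). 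One must be careful that the constraint $\partial\eta = 0$ is what makes $\partial^{-1}$ well-defined, and that it descends to the statement that only $\d\rho$ — not $\rho$ itself — is canonically determined by $\eta$; this is the structural reason for the "subalgebra" clause.

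Next I would pin down the cubic term. The cubic interaction on twistor space is $\tfrac{1}{6}\int \mu\vee\mu\vee\mu\vee\Omega$ with $\mu\vee\Omega = \eta$, and I would push it forward by substituting the fibrewise representative of $\eta$ in terms of $\rho$ (the analogue of the representative $\pi_z^{0,1}J^{0,1}$ used in the open-string computation) and integrating over the $\CP^1$ fibre. Rather than grinding through the full fibre integral, I would follow the strategy advertised earlier in the paper: constrain the $\R^4$ Lagrangian by the requirement that the one-loop OPE of $\rho$ with itself analytically continues to a meromorphic function on $\C^4$ with poles only on the light cone, together with $SU(2)$-invariance (rotation of $\CP^1$) and the known scaling/charge weights under the $\C^\times$ acting on the $\Oo(1)^2$ fibres. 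These constraints should force the cubic terms to be a linear combination of $(\Lap\rho)(\partial\dbar\rho)^2$ and $(\Lap\rho)^3\omega^2$, and a single normalization computation (or a Schouten-type identity among the $(1,1)$-forms $\partial\dbar\rho$ and $\omega$ on a complex surface) fixes the relative coefficient $-\tfrac23$. Since every term produced by the fibre integral of $\mu^{\vee 3}\vee\Omega$ manifestly involves $\partial\dbar\rho$ or higher derivatives — never a bare $\rho$ — the same reasoning covers the unknown quartic and higher terms, giving the parenthetical claim that they depend only on derivatives of $\rho$.

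For the "subalgebra" assertion itself, the point is that $\Obs^{cl}_{\R^4}$ for the scalar $\rho$ is $\Oo(\text{solutions of }\Lap^2\rho=0)$, i.e.\ functions of $\rho$ and its derivatives including the value of $\rho$ at a point; whereas the closed-string factorization algebra on $\PT$ only ever sees $\eta = \partial\gamma$, and the fibre integral of $\gamma$ recovers $\rho$ only up to the addition of a locally constant function (reflecting that $\eta$, being $\partial$-closed, remembers $\rho$ only through $\partial\rho$, or better $\d\rho$ after combining with the fibrewise data). So on a poly-disc $U$ the push-forward lands in $\Oo(\text{solutions})$ as the subalgebra generated by the derivatives $\partial_{x_i}\rho$ and their further derivatives, and one checks this inclusion is compatible with the factorization product and with restriction maps (it is part of a map of sheaves, just as Penrose's isomorphism was), hence is a map of factorization algebras onto this subalgebra. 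The main obstacle I anticipate is the honest computation of the cubic (and the bookkeeping of the $\omega$-dependence, since $\omega$ enters both through the meromorphic volume form $\Omega = z^{-2}\,\d z\,\d v_1\,\d v_2$ and through the identification of the fibrewise $(1,1)$-forms with scalars): getting the factor $-\tfrac23$ and the overall normalization of $\int\omega^2\rho\Lap^2\rho$ right requires care with the conventions relating $\mu$, $\eta$, $\gamma$, $\rho$ and with the $SU(2)$-equivariant trivialization of the bundle of fibrewise $(1,1)$-forms — which is exactly why the paper falls back on the analytic-continuation constraint rather than a brute-force fibre integral, and why it leaves the constant $C$ (equivalently the higher terms) undetermined.
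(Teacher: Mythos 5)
Your proposal follows essentially the same route as the paper: an explicit Penrose-transform analysis of the free field $\eta$ producing a scalar $\rho$ with $\Lap^2\rho=0$ and propagator proportional to $\log\norm{x}^2$, followed by narrowing the cubic self-interaction to a combination of $(\Lap\rho)(\partial\dbar\rho)^2$ and $\omega^2(\Lap\rho)^3$ by $U(2)$ invariance and dimension, fixing the ratio $-\tfrac23$ by demanding no logarithm in the one-loop $\rho\,\rho$ OPE, and handling the subalgebra clause exactly as you say, via the fact that $\eta=\partial\gamma$ determines $\rho$ only up to a constant so that only $\d\rho$ is a genuine twistor-space observable. Two points where the paper's execution is sharper than yours are worth recording. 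First, the claim that the interaction depends only on $B=\partial\dbar\rho$ is not obtained by inspecting the fibre integral of $\mu^{\vee 3}\vee\Omega$ (which is never actually performed); it follows from the observation that the Kodaira--Spencer vertex extends to $(2,1)$-forms with logarithmic poles at $z=0,\infty$, and such forms correspond on $\R^4$ to closed $(1,1)$-forms that need not be globally $\partial\dbar$ of a function --- this is the structural reason the Lagrangian is a function of $B$ alone, and it is also what justifies the parenthetical about the unknown higher-order terms. Second, the ratio $-\tfrac23$ does not come from any Schouten-type pointwise identity: it is the outcome of an honest OPE computation in which the $(\Lap\rho)^3$ vertex contributes a logarithm with symmetry weight $6$ while the $(\Lap\rho)(\partial\dbar\rho)^2$ vertex contributes with weight $4$ (only four of the six ways of attaching external lines give a log, and the anti-self-dual part of $\partial\dbar\rho$ drops out of the log term by $SU(2)$ invariance), yielding $\tfrac23 C + D = 0$. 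So your ``single normalization computation'' is the correct fallback, and the identity shortcut would not close the argument.
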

The proof of this will take the next several sections.  The fact that we only find a sub-algebra is worth elaborating on. 

The reason is that $\rho$ is built from $\partial^{-1} \eta$, where $\eta$ is the $(2,1)$-form on twistor space.  The factorization algebra of observables built from just $\eta$ will give us quantities which measure the closed one-form $\d \rho$, which is obtained from transgressing $\eta$ along twistor $\CP^1$'s.  
We could write the Lagrangian in terms of the closed $1$-form $\alpha = \d \rho$, which is constrained to be closed. Indeed it is easy to see that the Lagrangian is a function only of the closed $(1,1)$ form $B=\partial \dbar \rho$. Then the theorem states that the push-forward of the factorization algebra on twistor space is the algebra of classical observables of this constrained system.

In the appendix \ref{section_twistor_quantization}, generalizing slightly the results of \cite{1905.09269}, we show that if $G = SO(8)$ the coupled open-closed theory has a unique quantization on twistor space.  

By simply pushing forward the factorization algebra, we get a quantization of the theory on $\R^4$.

\subsection{Recollections about the Penrose transform}
Before diving into the details of the calculation, we need to recall some aspects of the Penrose transform, relating Dolbeault cohomology groups on twistor space with solutions to free-field equations on twistor space. As above, we will give twistor space $\PT = \Oo(1)^2 \to \CP^1$ coordinates $v_1,v_2$ on the fibres of the projection to $\CP^1$, and $z$ on the base.  If $x_1,x_2,x_3,x_4$ are coordinates on $\R^4$, and $u_1 = x_1 + \i x_2$, $u_2 = x_3 + \i x_4$, then the (non-holomorphic) isomorphism
\begin{equation} 
	\PT \iso \R^4 \times \CP^1 
\end{equation}
is obtained by sending $z \to z$ and setting
\begin{equation} 
	\begin{pmatrix}
		v_1 \\ v_2    
	\end{pmatrix}
	= \begin{pmatrix} 
		u_1 & \br{u}_2 
		\\ u_2 & - \br{u}_1 
	\end{pmatrix}
	\begin{pmatrix} 
		1 \\ z
	\end{pmatrix}	
\end{equation}
The complex conjugates $\br{v}_i$ are obtained  by conjugating this expression.

Under this coordinate change, 
\begin{equation} 
	\begin{split}
		\partial_{u_1} &=  \partial_{v_1} - \zbar \partial_{\vbar_2}   \\
		\partial_{u_2} &= \partial_{v_2} + \zbar \partial_{\vbar_1} \\
\partial_{\ubar_1} &= \partial_{\vbar_1} - z \partial_{v_2} \\
\partial_{\ubar_2} &= \partial_{\vbar_2} + z \partial_{v_1} 
	\end{split} \label{eqn_vectorfields}
\end{equation}

\subsection{ The closed string fields in the BV formalism}

Before we turn to calculating the $4d$ theory corresponding to the closed-string sector, we need to turn to describe how to introduce gauge symmetries in the closed string sector. This is best done by working in the BV formalism,  including ghosts and anti-fields.  This was done in the original paper of BCOV \cite{hep-th/9309140} and was also the point of view taken in \cite{1905.09269}.

To do this, we upgrade $\mu$ to a field 
\begin{equation} 
	\boldsymbol{\mu} \in \Omega^{0,\ast}(\PT, T \PT ( -2 D_0 - 2 D_\infty))[1]  
\end{equation}
Here $D_0,D_\infty$ are the divisors $z = 0, z = \infty$, where $\boldsymbol{\mu}$ vanishes to second order.   The symbol $[1]$ indicates a shift of ghost number, so that the component in $\Omega^{0,1}(\PT, T \PT)$ is of ghost number zero.   We impose the constraint $\partial \mu = 0$; it is most natural to impose this cohomologically \cite{1905.09269} but we will not do that right now.

The Lagrangian for the super-field $\boldsymbol{\mu}$ is the same:
\begin{equation} 
	\tfrac{1}{2}	\int \Omega \wedge ((\dbar \boldsymbol{\mu} \vee \partial^{-1} \boldsymbol{\mu} \vee \Omega) + \tfrac{1}{6} \int \Omega \wedge ( \boldsymbol{\mu} \vee \boldsymbol{\mu} \vee \boldsymbol{\mu} \Omega)  
\end{equation}
One can check that this satisfies the classical master equation.  

The gauge symmetries of this model correspond to the ghost fields in $\op{Ker} \partial \subset \Omega^{0,0}(\PT, T \PT(-2 D_0 - 2 D_\infty))$. These are vector fields on $\PT$, which only differentiate with respect to the holomorphic coordinates, which vanish to second order at $z = 0$, $z = \infty$, and which preserve the volume form $\Omega$.  These gauge symmetries act on all the other fields by Lie derivative.

The coupling to the open-string fields is most naturally written in the BV formalism too.  Let 
\begin{equation} 
	\boldsymbol{A} \in \Omega^{0,\ast}(\PT, \g(-D_0 - D_\infty))[1] 
\end{equation}
be the BV extension of the open-string field, including ghosts and antifields.  Let 
\begin{equation} 
	\boldsymbol{\eta} = \boldsymbol{\mu} \vee \Omega \in \op{Ker} \partial \subset \Omega^{2,\ast}(\PT ) [1] 
\end{equation}
be the $(2,\ast)$-form corresponding to $\boldsymbol{\mu}$. 

Then, the coupling to the open-string fields $\boldsymbol{A}$ takes the same form as before:
\begin{equation} 
	\int \boldsymbol{\eta} \boldsymbol{A} \partial\boldsymbol{A}.
\end{equation} 
Expanding this out into components will allow us to read off the action of gauge symmetries on various fields.  For instance, the terms involving the $(2,0)$ component of $\boldsymbol{\eta}$ (the ghost field) tell us how that the  closed-string gauge symmetries -- which are certain vector fields -- act on $\boldsymbol{A}$ by Lie derivative.

\subsection{The free closed-string theory on $\R^4$}
The first step in our analysis is to compute the $4d$ field corresponding to the free closed-string theory on twistor space.

Let us choose a $(1,1)$ form $\gamma$ on $\PT$ such that $\partial \gamma = \eta$.  Then, we can define a scalar on $\R^4$ by
\begin{equation} 
	\rho(x) = \int_{\CP^1_x} \gamma 
\end{equation}
where $\CP^1_x \subset \PT$ is the line corresponding to the point $x \in \R^4$.  If we change the chosen primitive $\gamma$, then $\rho$ changes by a constant. That is, the closed $1$-form $\d \rho$ on $\R^4$ only depends on $\eta$.

We will write the closed-string theory in terms of $\rho$, but it is important to bear in mind that only derivatives of $\rho$ are truly local operators on twistor space.  Indeed, we could write $\rho(x)$ in a non-local way as 
\begin{equation} 
	\rho(x) = \int_{M} \eta 
\end{equation}
where $M \subset \PT$ is a $3$-manifold with boundary on $\CP^1_x$.  

We will see that
\begin{equation} 
	\Lap^2 \rho = 0 
\end{equation}
where $\Lap$ is the Laplacian on $\R^4$. Note this is a \emph{fourth-order} equation.

We can see this by an explicit calculation on twistor space, using the representatives of the vector fields $\partial_{u_i}$, $\partial_{\ubar_i}$ given in equation \eqref{eqn_vectorfields}. We have
\begin{equation}
	(\Lap \rho)(x) = \int_{\CP^1_X} ( (\mc{L}_{ \partial_{v_2}}   \mc{L}_{z \partial_{v_1}}) \gamma  -   \mc{L}_{ \partial_{v_1}}   \mc{L}_{z \partial_{v_2}}) \gamma  . 
\end{equation}
Here $\iota_V$ is contraction with a holomorphic vector field on twistor space, and $\mc{L}_V$ is Lie derivative.
Using the fact that $\eta = \partial \gamma$ and the Cartan homotopy formula $[\partial, \iota_V] = \mc{L}_V$, we find
\begin{equation}
	(\Lap \rho) = \int_{\CP^1_x} ( (\mc{L}_{ \partial_{v_2}} \iota_{z \partial_{v_1}}) \eta  -   \mc{L}_{ \partial_{v_1}}   \iota_{z \partial_{v_2}}) \eta  \label{eqn_b+}. 
\end{equation}

An equivalent expression is 
\begin{equation} 
	\Lap \rho = \int \d z \iota_{\partial_{v_1}} \iota_{\partial_{v_2}} \eta \label{eqn_b+2} 
\end{equation}
To relate \eqref{eqn_b+2} to \eqref{eqn_b+} we use by integration by parts,
\begin{equation} 
	\begin{split}
		\int \d z \iota_{\partial_{v_1}} \iota_{\partial_{v_2}} \eta &=  -  \int z \partial ( \iota_{\partial_{v_1}} \iota_{\partial_{v_2}} ) \eta  \\
		&= - \int z \left( \mc{L}_{\partial_{v_1}} \iota_{\partial_{v_2}} - \iota_{\partial_{v_1}} \mc{L}_{\partial_{v_2}}    \right) \eta 
	\end{split}
\end{equation}
using the fact that $\eta$ is $\partial$-closed.  Bringing $z$ inside the brackets, and using the fact that $\mc{L}_{\partial_{v_i}} z = 0$, this gives us the expression for $b_+$ in equation \eqref{eqn_b+}.

Now let us use the  expression \eqref{eqn_b+2} for $\Lap \rho$ to show that it is harmonic. We have
\begin{equation} 
	\partial_{u_1} \partial_{\ubar_1} \Lap  = -\int_{z} \mc{L}_{\partial_{v_1}} \mc{L}_{z \partial_{v_2}}  \iota_{\partial{v_1}} \iota_{\partial{v_2}} \eta \d z.  
\end{equation}
We have the identity
\begin{equation} 
	\mc{L}_{z \partial_{v_1}} \iota_{\partial_{v_1}}=z \mc{L}_{ \partial_{v_1}} \iota_{\partial_{v_1}}  
\end{equation}
and similarly for $\partial_{v_2}$.   This means
\begin{equation} 
	\partial_{u_1} \partial_{\ubar_1} \Lap \rho = -\int_{z} z \mc{L}_{\partial_{v_1}} \mc{L}_{ \partial_{v_2}}  \iota_{\partial{v_1}} \iota_{\partial{v_2}} \eta \d z  
\end{equation}
Similarly
\begin{equation} 
		\partial_{u_2} \partial_{\ubar_2} \Lap \rho = \int_{z} z \mc{L}_{\partial_{v_1}} \mc{L}_{ \partial_{v_2}}  \iota_{\partial{v_1}} \iota_{\partial{v_2}} \eta \d z   
\end{equation}
so that the operator $\Lap \rho$ is in the kernel of the Laplacian, as desired.

\subsection{The two-point function of the closed-string fields}

Next, we will calculate the two point function of these closed-string fields. This requires a little work because of the non-standard kinetic term of the Kodaira-Spencer action.  

We will find that the two-point function is
\begin{equation} 
	\rho(0) \rho(x) = \frac{1}{2 \pi \i} \log \norm{x}^2. 
\end{equation}
Since $\log \norm{x}^2$ is the Green's kernel for the square of the Laplacian, this is consistent with the fact that $\Lap^2 \rho = 0$.

Both can be derived from the Lagrangian 
\begin{equation} 
	\int (\Lap \rho)^2 
\end{equation}
appropriately normalized. Indeed, the two-point function of this Lagrangian is
\begin{equation} 
	\frac{1}{32 \pi^2} \log \norm{x}^2.  
\end{equation}
Therefore, the correctly normalized Lagrangian is
\begin{equation} 
	- \frac{1}{16 \pi \i}  \int (\Lap \rho)^2. 
\end{equation}
Note that, as mentioned earlier, only the derivatives of $\rho$ are local operators on twistor space. Thus, the fact that the two-point function of $\rho$ does not analytically continue to $\C^4$ does not contradict the fact that the OPEs of local operators  
In terms of the closed one-form $\alpha = \d \rho$, this Lagrangian can be written is
\begin{equation} 
	\int (\partial \alpha^{0,1} )_+ (\dbar \alpha^{1,0})_+.  
\end{equation}

Now let us turn to the calculation of the two-point function, whose result we have stated above. We will give a first-principles derivation from twistor space.

The operator $\rho(0)$ on twistor space is given by
\begin{equation} 
	\rho(0) = \int_{\CP^1} \partial^{-1} \eta 
\end{equation}
where we integrate over the $\CP^1$ lying over $0 \in \R^4$. Since the kinetic term for the field $\eta$ is $\tfrac{1}{2} \int \dbar \eta \partial^{-1} \eta$, we find that the field sourced by the operator $\rho(0)$ is a $(2,1)$-form $\eta$ on twistor space satisfying
\begin{equation} 
	\dbar \eta = \delta_{\CP^1}. 
\end{equation}
The $\delta$-current $\delta_{\CP^1}$ is a $(2,2)$-form with distributional coefficients.  In terms of the coordinates $z,v_1,v_2$ introduced above, the equation is $\dbar \eta = \delta_{v_1 = v_2 = 0}$.

There are many different solutions to this equation, although all are gauge equivalent.  Since we are measuring gauge invariant quantities, we can use any solution we choose.  A solution in an axial gauge is
\begin{equation} 
	\eta_0 = \frac{1}{2 \pi \i} 	\delta_{v_1 = 0}  \frac{\d v_2}{v_2} 
\end{equation}

The $2$-point function is
\begin{equation} 
	\ip{\rho(0) \rho(u)} = \int_{\CP^1_u} \partial^{-1} \eta_0. 
\end{equation}
It is easier to compute the derivatives of this.  Recall that we can write the derivatives of $\rho$ in terms of certain integrals of $\eta$ as follows: 
\begin{equation} 
	\begin{split}
		(\partial_{u_i} \rho)(u) &= \int_{\CP^1_{u}} \iota_{v_i} \eta \\
(\partial_{\ubar_1} \rho)(u) &= \int_{\CP^1_{u}} \iota_{z v_2} \eta \\
		(\partial_{\ubar_2} \rho)(u) &= -\int_{\CP^1_{u}} \iota_{z v_1} \eta \\	
	\end{split}
\end{equation}
As usual, $\CP^1_u$ indicates the twistor line living over the point $u \in \C^2 = \R^4$.

Then, the two-point functions between $\rho(0)$ and $\partial \rho(u)$ are the expressions
\begin{equation} 
	\begin{split}
		2 \pi \i \partial_{u_i} \ip{\rho(0)	\rho)(u)} &= \int_{\CP^1_{u}} \iota_{v_i} \delta_{v_1 = 0} \frac{\d v_2}{v_2} \\
			2 \pi \i \partial_{\ubar_1} \ip{\rho(0)	\rho)(u)} &=  \int_{\CP^1_{u}} \iota_{z v_2} \delta_{v_1 = 0} \frac{\d v_2}{v_2} \\
				2 \pi \i \partial_{\ubar_2} \ip{\rho(0)	\rho)(u)} &=  \int_{\CP^1_{u}} \iota_{z v_1} \delta_{v_1 = 0} \frac{\d v_2}{v_2} 
	\end{split}
\end{equation}
It remains to calculate
\begin{equation}
	\begin{split}
	  \int_{\substack{v_1= u_1 + z \ubar_2 \\v_2 = u_2 - z \ubar_1}  } \iota_{ \partial_{v_i}} \delta_{v_1 = 0} \frac{\d v_2}{v_2}  \\
 \int_{\substack{v_1= u_1 + z \ubar_2 \\v_2 = u_2 - z \ubar_1}  } \iota_{z \partial_{v_i}} \delta_{v_1 = 0} \frac{\d v_2}{v_2}.
	\end{split}
\end{equation}
Each of these expressions forms an $SU(2)$ doublet with index $i$, so we take without loss of generality $i = 2$.  In that case, 
\begin{equation} 
	\delta_{v_1 = 0} = \delta_{z = -u_1 / \ubar_2}  
\end{equation}
and $v_2 = \frac{\norm{u}^2}{\ubar_2}$.  Then, 
\begin{equation}
	\begin{split}
		\int_{\substack{v_1= u_1 + z \ubar_2 \\v_2 = u_2 - z \ubar_1}  } \iota_{ \partial_{v_2}} \delta_{v_1 = 0} \frac{\d v_2}{v_2}  &= \int_{z} \frac{\ubar_2}{\norm{u}^2} \delta_{z = - u_1 / \ubar_2} \\
		&=  \frac{\ubar_2}{\norm{u}^2}\\
		&= \partial_{u_2} \log \norm{u}^2.
	\end{split}
\end{equation}
By symmetry, we have
\begin{equation} 
	\int_{\substack{v_1= u_1 + z \ubar_2 \\v_2 = u_2 - z \ubar_1}  } \iota_{ \partial_{v_i}} \delta_{v_1 = 0} \frac{\d v_2}{v_2} = \partial_{u_i} \log \norm{u}^2. 
\end{equation}

Similarly, 
\begin{equation}
	\begin{split}
		\int_{\substack{v_1= u_1 + z \ubar_2 \\v_2 = u_2 - z \ubar_1}  } \iota_{ z \partial_{v_2}} \delta_{v_1 = 0} \frac{\d v_2}{v_2}  &= \int_{z}z  \frac{\ubar_2}{\norm{u}^2} \delta_{z = - u_1 / \ubar_2} \\
		&=  -\frac{u_1}{\norm{u}^2}\\
		&= -\partial_{\ubar_1} \log \norm{u}^2.
	\end{split}
\end{equation}
By symmetry,
\begin{equation}
	\int_{\substack{v_1= u_1 + z \ubar_2 \\v_2 = u_2 - z \ubar_1}  } \iota_{z \partial_{v_1}} \delta_{v_1 = 0} \frac{\d v_2}{v_2} = \partial_{\ubar_1} \log \norm{u}^2.  
\end{equation}
To sum up, we have found
\begin{equation}
	\begin{split}
		\partial_{u_i} \ip{\rho(0), \rho(u)  } &= \frac{1}{2\pi \i}\partial_{u_i}  \log \norm{u}^2 \\
		\partial_{\ubar_i} \ip{\rho(0), \rho(u)  } &= \frac{1}{2 \pi \i} \partial_{\ubar_i}  \log \norm{u}^2.
	\end{split}
\end{equation}
It follows that 
\begin{equation} 
	\ip{\rho(0)\rho(u)} = \frac{1}{2 \pi \i} \log \norm{u}^2 + C 
\end{equation}
The constant $C$ is irrelevant, as only the derivatives of $\rho$ are true local operators. 

\subsection{Coupling between open and closed string fields}
The twistor  space expression for the coupling between closed string and open string fields is 
\begin{equation} 
	\lambda_{\g} C_{hCS}   \int \eta \op{tr}(\mc{A} \partial \mc{A})
\end{equation}
where $\eta$ is the $(2,1)$ form representing a closed string field, $\mc{A}$ is the $(0,1)$ form gauge field on twistor space, $\op{tr}$ is the trace in the fundamental representation of our Lie algebra, $C_{hCS}$ is the constant
\begin{equation} 
	C_{hCS} = \frac{1}{4 (2 \pi \i)^{3/2} \sqrt{6} } 
\end{equation}
and $\lambda_{\g}$ is a certain constant determined by the Lie algebra. In our main example, for $\g = \mf{so}(8)$, we have $\lambda_{\mf{so}_8} = \sqrt{3/2}$, so that the coupling is
\begin{equation} 
	\frac{1}{ 8 (2 \pi \i)^{3/2}  }   \int \eta \op{tr}(\mc{A} \partial \mc{A})
\end{equation}
This expression is dimensionless, which on twistor space means invariant under scaling of the fibres of the map $\Oo(1)^2 \to \CP^1$.  

In the case of self-dual Yang-Mills theory, the term canceling the anomaly has to be normalized by using $C_{YM} = \sqrt{2} C_{hCS}$ instead of $C_{hCS}$. 

In this section we will compute the coupling of the closed string field on $\R^4$ to the fields of the $\sigma$-model.  We will analyze the couplings which are linear in 
\begin{equation} 
	\d \rho = \alpha = \alpha^i \d u_i + \alpha^{\br{i}} \d \ubar_{\br{i}}. 
\end{equation}
 In any such coupling, we can move all derivatives applied to $\alpha$ to the $\sigma$-field.  Therefore, without loss of generality, we can assume that $\alpha$ is constant, so that $\rho$ is linear.

Let us write down a twistor space representative for the linear field $\rho$ with $\d \rho = \alpha$.  on twistor space. It should be a $(1,1)$ form which we call $\gamma$. If we take 
\begin{equation} 
	\gamma = \frac{1}{2 \pi \i}  \delta_{\abs{z} = 1} \frac{\d z}{z} (\alpha^i v_i + \alpha^{\br{2}} z^{-1} v_1 - \alpha^{\br{1}} z^{-1} v_2 ).  
\end{equation}
 then 
\begin{equation}
	\begin{split} 
		\rho =& \int_{\CP^1} \gamma \\
		=& \oint_{\abs{z} = 1} \frac{1}{2 \pi \i} \frac{\d z}{z} \left( \alpha^1(u_1 + z \ubar_2) + alpha^2  (u_2 - z \ubar_1) \right. \\
		&\left. + \alpha^{\br{2}}(u_1/z + \ubar_2) - \alpha^{\br{1}} ( u_2/z - \ubar_1)  ( \right) \\
		&= \alpha^i u_i + \alpha^{\br{i}} \ubar_{\br{i}}.
	\end{split}	
\end{equation}
This tells us that a $(2,1)$ form representing the constant one-form on $\R^4$ given by $\eta = \partial \gamma$, so 
\begin{equation} 
	\eta = -\frac{1}{2 \pi \i}  \delta_{\abs{z} = 1} \frac{\d z}{z} (\alpha^i \d v_i + \alpha^{\br{2}} z^{-1}\d v_1 - \alpha^{\br{1}} z^{-1} \d v_2 ).  
\end{equation}

The open-string field is away from a small neighbourhood of $z = 0$, and dropping terms involving $\d \zbar$,  
\begin{equation} 
	\mc{A} = \pi^{0,1} \d \ubar_{\br{i}} J_{\ubar_{\br{i}}}  
\end{equation}
To compute the coupling between open and closed string sectors,  we need to analyze the integral
\begin{equation} 
	\lambda_{\g} C_{hCS}    \int_{v_1,v_2} \oint_{\abs{z} = 1} \eta \pi^{0,1} \d \ubar_1 \pi^{0,1} \d \ubar_2 \eps_{\br{k} \br{l}}  J_{\br{k}} \d J_{\br{l}}    
\end{equation}
where $\eta$ is as above. To be precise,  the integral over the circle $\abs{z} = 1$ only picks up forms which have a $\d z$, and not a $\d \zbar$.  This is because we treat $\delta_{\abs{z} = 1}$ as a $(0,1)$ form. 

We have
\begin{equation} 
	\pi^{0,1} \d \ubar_{\br{i}} = \frac{1}{1 + \abs{z}^2} \d \vbar_{\br{i}} 
\end{equation}
modulo terms involving $\d \zbar$, which do not contribute. Since $\abs{z} = 1$  we have
\begin{equation} 
	\pi^{0,1} \d \ubar_{\br{i}} = \tfrac{1}{2} \d \vbar_{\br{i}}. 
\end{equation}
We also have
\begin{equation}
	\begin{split}
		\d \vbar_{1} &= \d \ubar_1 + \zbar \d u_2 \\
		\d \vbar_{2} &= \d \ubar_2 - \zbar \d u_1.
	\end{split}
\end{equation}
Therefore
	\begin{equation} 
		\pi^{0,1} \d \ubar_1 \pi^{0,1} \d \ubar_2 = \tfrac{1}{4} \left(  \d \ubar_1 \d \ubar_2 + \zbar \omega + \zbar^2  \d u_1 \d u_2\right). 
	\end{equation}
where we normalize $\omega$ to be $\d u_1 \d \ubar_1 + \d u_2 \d \ubar_2$.   Since we work at  $\abs{z} = 1$ we can replace $\zbar$ by $1/z$. 

Next, let us wedge this with the $(2,1)$-form $\eta$.  We have
\begin{equation}
	\begin{split}
		&	-\frac{1}{2 \pi \i}\left(   \frac{\d z}{z} (\alpha^i \d v_i + \alpha^{\br{2}} z^{-1}\d v_1 - \alpha^{\br{1}} z^{-1} \d v_2 ) \right) \wedge    \tfrac{1}{4} \left(  \d \ubar_1 \d \ubar_2 + z^{-1} \omega + z^{-2}  \d u_1 \d u_2\right) \\
		&	= -\frac{1}{8 \pi \i} \frac{\d z}{z}  \left( \alpha^1  (\d u_1 + z \d \ubar_2) + \alpha^2 (\d u_2 - z \d \ubar_1) - \alpha^{\br{1}}(\d u_2 z^{-1}  -  \d \ubar_1) + \alpha^{\br{2}}(z^{-1} \d u_1 + \d \ubar_2 )     \right) 
			\\ & \wedge\left(  \d \ubar_1 \d \ubar_2 + z^{-1} \omega + z^{-2}  \d u_1 \d u_2\right)
	\end{split}
\end{equation}
We want to perform a contour integral of this, so we only need to retain terms which are proportional to $z^{-1} \d z$. All the terms involving $\alpha^{\br{i}}$ drop out, because they give $z^{-k} \d z$ for $k \ge 2$. The same holds for all terms involving $z^{-2} \d u_1 \d u_2$.  We are left with
\begin{equation} 
	-\frac{1}{8 \pi \i}\frac{\d z}{z} \left( \alpha^i \d u_i \d \ubar_1 \d \ubar_2 + \alpha^1 \d \ubar_2 \omega - \alpha^2 \d \ubar_1 \omega    \right) \label{eqn_wedge}  
\end{equation}
Then, 
\begin{equation} 
	\begin{split} 
		\d \ubar_2 \omega &= \d u_1 \d \ubar_1 \d \ubar_2 \\  		\d \ubar_1 \omega &= - \d u_2 \d \ubar_1 \d \ubar_2. 
	\end{split}
\end{equation}
Therefore equation \eqref{eqn_wedge} becomes
\begin{equation} 
	-\frac{1}{4 \pi \i} \frac{\d z}{z}\left(  \alpha^i \d u_i \d \ubar_1 \d \ubar_2\right).  
\end{equation}
Inserting this into the integral expression defining the open-closed coupling gives
\begin{equation}
	\begin{split}
		-\lambda_{\g}C_{hCS} \frac{1}{2}    \int_{u_i}	\oint_{\abs{z} = 1}\frac{\d z}{z} \left( (\alpha^i \d u_i\right)\d \ubar_1 \d \ubar_2 \eps^{\br{k} \br{l}} J_{\br{k}} \d J_{\br{l}} \\
=-\lambda_{\g} \frac{1}{  (4 \pi)^2 2 \sqrt{6} }   \int_{u_i}	( (\alpha^i \d u_i)\d \ubar_1 \d \ubar_2 \eps^{\br{k} \br{l}} J_{\br{k}} \d J_{\br{l}} \\
		=  \lambda_{\g} C_{hCS} \half \int_{u,\ubar} \op{tr}( J^{0,1} \partial J^{0,1} )  \alpha^{1,0} \label{open_closed1}  
		\end{split}
\end{equation}
After a field redefinition of $\rho$ this becomes
\begin{equation} 
	 \half  \int (\Lap \rho)^2 +    \frac{1}{32 (2 \pi)^{2} }    \int_{u,\ubar} \op{tr}( J^{0,1} \partial J^{0,1} )  \alpha^{1,0} \label{open_closed_full}. 
\end{equation}
This is form of the action is useful when we turn to self-dual Yang-Mills.  However, it can also also write it in a way closer to the $WZW_4$ Lagrangian. 

We will use the equations satisfied by the current:
\begin{equation}
	\begin{split}
		\partial J^{0,1} &= - \dbar J^{1,0} - [J^{1,0}, J^{0,1}] \\
		\dbar J^{0,1} &= - \tfrac{1}{2} [J^{0,1}, J^{0,1}]. 
	\end{split}
\end{equation}
Equation \eqref{open_closed1} can be rewritten
\begin{equation}	
	-\lambda_{\g} C_{hCS} \half    		 \int \op{tr}( J^{0,1} \dbar  J^{1,0} + J^{0,1}[J^{1,0}, J^{0,1}]  )  \alpha^{1,0} 	
\end{equation}
Integrating by parts, to make the $\dbar$ act on $\alpha$, and picking up the term given $\dbar J^{0,1}$, gives us
\begin{equation} 
	\begin{split} 
		\lambda_{\g} C_{hCS} \half    		 \int \op{tr}( \tfrac{1}{2}[J^{0,1}, J^{0,1}]  J^{1,0} - J^{0,1}[J^{1,0}, J^{0,1}]  )  \alpha^{1,0} -  \op{tr} (J^{0,1} J^{1,0}) \dbar \alpha^{1,0} \\
		 = -\lambda_{\g} C_{hCS} \half \int \op{tr}(   \tfrac{1}{6}[J, J]  J]  )  \alpha^{1,0} -  \op{tr}(J^{0,1} J^{1,0}) \dbar \alpha^{1,0} 
	\end{split}
\end{equation}
Here we recognize the $WZW_4$ Lagrangian we started with, suitable normalized, but where the K\"ahler form has been replaced by a multiple of 
\begin{equation} 
	- \dbar \alpha^{1,0} = -\d \alpha^{1,0} = \partial \dbar \rho. 
\end{equation}

Including the closed-string kinetic term, rewriting in terms of $\rho$ where $\alpha^{1,0} = \partial \rho$, and specializing to $\mf{g} = \mf{so}(8)$,  the expression becomes
\begin{equation}
	\begin{split} 
		-\frac{1}{16 \pi \i} \int (\Lap \rho)^2	+  \lambda_{\so(8)} C_{hCS} \half \int_{u,\ubar} \op{tr}( J^{0,1} \partial J^{0,1} )  \alpha^{1,0}  \\
		= -\frac{1}{16 \pi \i}   \int (\Lap \rho)^2 +    \frac{1}{16 (2 \pi \i)^{3/2} }    \int_{u,\ubar} \op{tr}( J^{0,1} \partial J^{0,1} )  \alpha^{1,0} 
	\end{split}	
\end{equation}
By a field redefinition of $\rho$ we can write this as 
\begin{equation} 
	    \half \int (\Lap \rho)^2 +    \frac{1}{16 \pi  }    \int_{u,\ubar} \op{tr}( J^{0,1} \partial J^{0,1} )  \alpha^{1,0} 
\end{equation}

\section{The Green-Schwarz mechanism on twistor space for self-dual Yang-Mills} \label{sec:sdym_twistor}
It is well-known (see e.g.\ \cite{Boels:2006ir}) that holomorphic BF theory on twistor space gives rise to self-dual Yang-Mills theory on space-time.  This theory suffers from the same box-diagram anomaly as holomorphic Chern-Simons, and as we mentioned earlier, a similar (but simpler) version of the Green-Schwarz mechanism cancels the anomaly.  

In this section we will give more details on this point and compute how the closed-string field couples to the fields of self-dual Yang-Mills. 

The fields of holomorphic BF theory plus the closed-string field are
\begin{equation} 
	\begin{split}
		\mc{A} \in \Omega^{0,1}(\PT,\g)[1] & \\
		\mc{B} \in \Omega^{3,1}(\PT,\g)[1] & \\
		\eta \in \Omega^{2,1}(\PT)[1]. 
	\end{split}
\end{equation}
Here, we use calligraphic symbols to for the fields on twistor space to distinguish them from the corresponding fields on $\R^4$. 

The field $\eta$, as in the case of holomorphic Chern-Simons, is constrained to satisfy $\partial \eta=  0$. 
The Lagrangian is
\begin{equation}
	\begin{split}
		\int \op{tr}(\mc{B} F(\mc{A})) &+  c	\int \eta \op{tr} \left( \mc{A} \partial\mc{A} \right )\\
		&+ \int \dbar \eta \partial^{-1} \eta \label{eqn_bf_lagrangian}
	\end{split}
\end{equation}
where $c$ is a constant described below.  The exchange of closed-string fields as in figure \ref{fig:GS_sdym} cancels the one-loop anomaly of holomorphic BF theory, as long as the Lie algebra $\mf{g}$ is such that 
\begin{equation} 
	\op{Tr}_{\mf{g}}(X^4) \propto \op{tr}(X^2)^2. 
\end{equation}
This holds when $\mf{g}$ is one of $\mf{sl}_2$, $\mf{sl}_3$, $\mf{so}(8)$ or any exceptional algebra.
\begin{figure}

\includegraphics[scale=0.22]{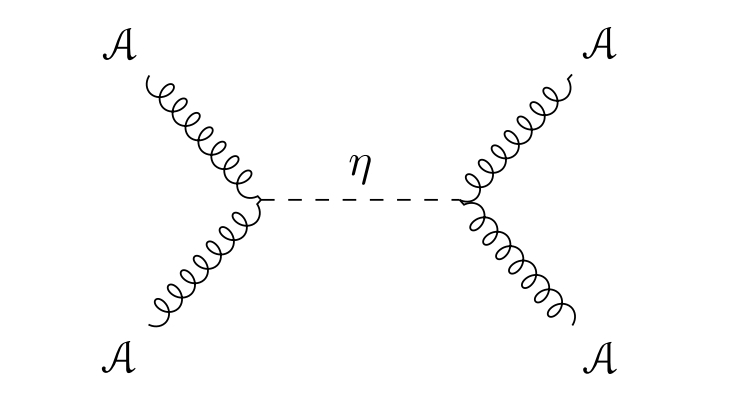}

	\caption{Green-Schwarz anomaly cancellation by exchange of the field $\eta$. \label{fig:GS_sdym}}
\end{figure}
Then, in the appendix \ref{sec:normalization_twistor}, we calculate the relevant constant $c$ to be
\begin{equation}
	\begin{split} 
		c &= \lambda_{\g} C_{YM} \\	
		C_{YM} &= \frac{1}{4 (2 \pi \i)^{3/2} \sqrt{3} }
		\lambda_{g}^2 \op{tr}(X^2)^2 = \op{Tr}(X^4)
	\end{split}	
\end{equation}
where $\op{tr}$ means trace in the fundamental, $\op{Tr}$ in the adjoint.   

Now let us explain what Lagrangian this gives on $\R^4$. We have already seen that the four-dimensional field corresponding to $\eta$ is a field $\rho$ with fourth-order Lagrangian $-\frac{1}{16 \pi \i} \int \rho \Lap^2 \rho$, where $\rho$ is taken up to the addition of a constant. 

Unlike in the case of holomorphic Chern-Simons theory, the closed-string field has no self-interaction.  To completely describe the four-dimensional model, it only remains to fix the interaction between $\rho$ and the fields $A,B$ of four-dimensional self-dual Yang-Mills theory.

This interaction is dimensionless, gauge invariant, and $SO(4)$ invariant.  It must also be linear in $\rho$. This can be seen by considering the tree-level Feynman diagrams one can build from the Lagrangian \eqref{eqn_bf_lagrangian}.  The only Lagrangian of this nature (up to terms which vanish on-shell) is
\begin{equation} 
	\int \d \rho CS(A) = - \int \rho \op{Tr}(F(A)^2). 
\end{equation}
Thus, the four-dimensional theory has Lagrangian 
\begin{equation} 
	\int \op{Tr} BF(A)_+ + 4 \pi i \rho \Lap^2 \rho + c \d \rho CS(A) 
\end{equation}a
To fix the constant $c$, we use the relationship between $WZW_4$ and self-dual Yang-Mills discussed in \cite{2011.04638}.  There is a natural map between the fields associated to the inclusion of sheaves $\Oo(-2) \to \Oo$ on twistor space. Under this map, $A$ be comes $J^{0,1}$; we can view this as a partial gauge fixing for self-dual Yang-Mills.  

The open-closed coupling to cancel the anomaly in holomorphic Chern-Simons is 
\begin{equation} 
	\lambda_{\g}	C_{hCS}  \half \int_{u,\ubar} \op{tr}( J^{0,1} \partial J^{0,1} )  \partial \rho. 
\end{equation}
Because the anomaly for self-dual Yang-Mills is different from that in holomorphic Chern-Simons by a factor of two, we need to use 
\begin{equation} 
	\lambda_{\g}	C_{YM}  \half \int_{u,\ubar} \op{tr}( J^{0,1} \partial J^{0,1} )  \partial \rho. 
\end{equation}
If $A = J^{0,1}$, this is the same as 
\begin{equation} 
	\lambda_{\g}	C_{YM}  \int_{u,\ubar}  \op{tr}(\half  A \d A + \tfrac{1}{6}A[A,A] )  \d  \rho = \lambda_{\g} C_{YM}\half \int CS(A) \d \rho.  
\end{equation}
Including the kinetic term for $\rho$, and writing out $C_{YM}$ explicitly, this is
\begin{equation} 
	-\frac{1}{16\pi \i} \int (\Lap \rho)^2	\lambda_{\g}  \frac{1}{8 (2 \pi \i)^{3/2} \sqrt{3} }  \int CS(A) \d \rho.  
\end{equation}
Note that we can absorb some factors into a redefinition of $\rho$, giving us
\begin{equation} 
	\frac{1}{2 } \int (\Lap \rho)^2 + 	\lambda_{\g}  \frac{1}{8 \pi  \sqrt{3} }  \int CS(A) \d \rho.  
\end{equation}

In particular, for $\mf{sl}_2$, $\lambda_{\g} = \sqrt{8}$ so that the coupling (including the kinetic term for $\rho$) is
\begin{equation} 
	\frac{1}{2 } \int (\Lap \rho)^2	+  \frac{1}{\sqrt{24} }  \int CS(A) \d \rho.  
\end{equation}

\section{Interlude: A general method to determine terms in the Lagrangian, and counter-terms, in twistorial theories}
\label{sec:interlude}
So far, we have computed the kinetic term of the K\"ahler potential and how it couples to the field of the $WZW_4$ model, to linear order in $\rho$.  Computing the self-interaction of $\rho$ is rather challenging.  (In the self-dual Yang-Mills case, there is no self-interaction term for $\rho$). 

In this section I will introduce a general technique for doing such calculations.  This technique also applies in principle at the quantum level, to constrain the counter-terms in any twistorial theory.  

As we have seen, in a twistorial theory, the correlation functions are entire analytic functions.  In particular, the operator product of two local operators $\mc{O}_1(0)$, $\mc{O}_2(x)$ can never involve any terms with $\log \norm{x}$.  This constrains the possible terms in the Lagrangian very tightly, and often allows us to fix them by a recursive argument.  

\subsection{Fixing twistorial Lagrangians in scalar theories}
We will start by illustrating this method with a collection of free scalars $\phi^r$.  Let us try to build an interaction term for these fields with no logarithms in the OPE. We will start by trying to find a cubic interaction. If it has no derivatives, then it is of the form
\begin{equation} 
	I^{rst}\int \phi_r \phi_s \phi_t 
\end{equation}
where $I$ is some symmetric tensor.  We will see that such a term is not allowed. 

To compute the OPE of the operators $\phi_r$, $\phi_s$ in the presence of this interaction, we recall that the Green's function is
\begin{equation} 
	\ip{\phi_r(0) \phi_s(x)} = \delta_{rs}	\frac{1}{ \norm{x -y}^2 } 
\end{equation}
(here we are absorbing factors of $\pi$ into various normalizations).

The first contribution to the OPE comes from the Feynman diagram in figure \ref{figure_cubic_ope}.  
\begin{figure}	

\includegraphics[scale=0.22]{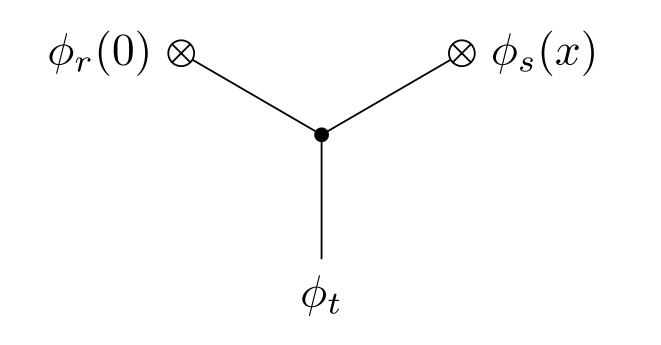}
		\caption{The first diagram contributing to the OPE. For Feynman diagrams computing OPEs, vertices labeled by $\otimes$ correspond to local operators.  External lines are labeled by on-shell field configurations (here $\phi_t$), and do not have propagators.  \label{figure_cubic_ope} }
\end{figure}

This gives  
\begin{equation}
			\phi_r(0) \phi_s(x) \sim  I_{rs}^t\int_{x'} \frac{1}{\norm{x'} \norm{x - x'} }\phi_t(x') \d^4 x' 	
\end{equation}
We are interested in the singular terms when $x$ is small, and which don't involve any derivatives applied to $\phi_t$. We thus approximate $\phi_t$ by its value at $x = 0$; the error terms (by Taylor's theorem) involve derivatives of $\phi$. Then, the integral
\begin{equation}
			  I_{rs}^t \phi_t(0)  \int_{x'} \frac{1}{\norm{x'} \norm{x - x'} }\d^4 x' 	
\end{equation}
 is the convolution of the Green's function with itself.   The result is therefore the Green's function for the square of the Laplacian, which is proportional to
 \begin{equation} 
	  \log \norm{x}. 
 \end{equation}
We conclude that (restoring factors of $\pi$)  we have the OPE 
\begin{equation} 
	\phi_r(0) \phi_t(x) \sim  \frac{1}{16 \pi^2} \log \norm{x} I_{rs}^t \phi_t(0) + \dots 
\end{equation}
where $\dots$ indicates diagrams involving multiple vertices.

We conclude that a cubic action must have at least one derivative. The same argument shows that any polynomial action with no derivatives will have a logarithm in the OPE.

What about terms with one derivative? Suppose we have an interaction of the form
\begin{equation} 
	I^{rst}_i \phi_r \phi_s \partial_{x_i} \phi_t 
\end{equation}
where $I^{rs}_i$ is symmetric under permutation of $r$ and $s$. (Further, a tensor $I^{rst}_i$ which is totally symmetric under permutation of $r,s,t$ gives a Lagrangian which is a total derivative and hence does not contribute).

Then, the diagram in figure \ref{figure_twistor_ope} gives rise to the OPE
\begin{equation}
	\begin{split}
		\phi_r(0) \phi_s(x) \sim&  I^{rst}_i  \int_{x'} \frac{1}{\norm{x'}^2 \norm{x-x'}^2 } \partial_{x'_i} \phi_t(x') \d^4 x' \\
		&+ I^{rts}_i  \int_{x'} \frac{1}{\norm{x'}^2}\left(\partial_{x'_i}  \frac{1}{ \norm{x-x'}^2 }\right)  \phi_t(x') \d^4 x' \\
		&+ I^{str}_i  \int_{x'} \left( \partial_{x'_i} \frac{1}{\norm{x'}^2}\right)  \frac{1}{ \norm{x-x'}^2 }  \phi_t(x') \d^4 x' 
	\end{split}
\end{equation}
The first term on the right hand side can be computed using the method of before, giving us a factor of $\log \norm{x}$.  The second term can, by integration by parts, be written as a sum of the integrals in the first and third terms.  The third term is 
\begin{equation} 
	-2 I^{str}_i   \int_{x'} \frac{1}{\norm{x'}^4}\frac{\phi_t(x') x_i'}{ \norm{x-x'}^2 }  \phi_t(x') \d^4 x'. 
\end{equation}
We can Taylor expanding $\phi_t(x') = \phi_t(0) + x'_j \partial_j \phi_t(0) + \dots$. Dimensional analysis tells us that the only logarithmic contributions come from the first derivatives of $\phi_t$, giving the integral
\begin{equation} 
	-2  I^{str}_i \int_{x'} \frac{1}{\norm{x'}^4}\frac{\partial_j \phi_t(0) x'_j x'_i}{ \norm{x-x'}^2 }  \phi_t(x') \d^4 x'.  
\end{equation}
Since $\log \norm{x}$ is $SO(4)$ invariant, we can evaluate the coefficient of $\log$ in the integral by projecting onto the $SO(4)$ invariants which is
\begin{equation} 
	-2  I^{str}_i \int_{x'} \frac{1}{\norm{x'}^4}\frac{\partial_i \phi_t(0) \norm{x'}^2  }{ \norm{x-x'}^2 } \phi_t(x') \d^4 x' \sim -\frac{2}{8 \pi} \partial_i \phi(0) I^{str}_i \phi_t(0) \log \norm{x} . 
\end{equation}
Collecting all the terms shows us that the only logarithmic terms in the  OPE are
\begin{equation} 
	\phi_r(0) \phi_s(x) \sim \frac{1}{8 \pi} \log \norm{x}   \partial_i \phi_t(0) \left(  I^{rst}_i - 2 I^{str}_i +  I^{rts}_i \right)   
\end{equation}
This vanishes only when $I^{rst}_i$ is symmetric in $r,s,t$, which only happens if the Lagrangian is a total derivative.

We conclude that, up to trivial Lagrangians, it is not possible to have a cubic Lagrangian with a single derivative with no logarithmic OPEs.

\subsubsection{Lagrangians with two derivatives}
If we have \emph{two} derivatives it is possible to build a cubic Lagrangian where the diagram in Figure \ref{figure_cubic_ope} does not contribute a logarithmic OPE.  

The $WZW_4$ Lagrangian, when written in terms of $\phi = \log \sigma$, is of this form: we can expand the Lagrangian as
\begin{equation} 
	\int \phi_a \Lap \phi_a + f^{abc} \omega_{ij} \int \phi_a \partial_{x_i} \phi_b \partial_{x_j} \phi_c + O(\phi^4) 
\end{equation}
where $\omega$ is the K\"ahler form on $\R^4$. 

We will see that requiring there be no logarithmic OPEs, plus some symmetry considerations, leads us to this Lagrangian.

The most general cubic Lagrangian with two derivatives is of the form
\begin{equation} 
	I^{rst}_{ij} \phi_r \partial_i \phi_s \partial_j \phi_t \label{eqn:cubic_2del} 
\end{equation}
(A Lagrangian with two derivatives applied to a single field can be brought to this form by integration by parts).  We can assume that $I^{rst}_{ij}$ is symmetric under simultaneous permutation of $i,j$ and $s,t$.

Let us pause to see why such an expression is trivial if it is Lorentz invariant. If this Lagrangian is Lorentz invariant, it must be
\begin{equation} 
	 	I^{rst} \phi_r \partial_i \phi_s \partial_i \phi_t  
\end{equation}
We can absorb any Lagrangian involving $\Lap \phi_r$ into a field redefinition, so such terms can be dropped. This means the Lagrangian is
\begin{equation} 
	I^{rst} \int \phi_r \Lap(\phi_s \phi_t) = I^{rst} \int (\Lap \phi_r) \phi_s \phi_t 
\end{equation}
so it can be absorbed by a field redefinition.

Let us assume that our interaction \eqref{eqn:cubic_2del} is invariant under the group $U(2)$. Since it is not invariant under $SO(4)$,  this means that the space-time indices can only be contracted with the K\"ahler form, so it is of the form
\begin{equation} 
	\omega_{ij} 	I^{rst} \phi_r \partial_i \phi_s \partial_j \phi_t 
\end{equation}
where $I^{rst}$ is anti-symmetric in $s$ and $t$.  

One can compute explicitly the logarithmic terms in the OPE, using the methods above, and see that they vanish. It is  easy to see this by symmetry reasons.  The OPE must be of  the form
\begin{equation} 
	\phi_r(0) \phi_s(x) = I_{rs; ij}^t \partial_i \partial_j \phi_t(0). 
\end{equation}
Invariance under $U(2)$ implies that the operator $I_{rs;ij}^t \partial_i \partial_j$ is a multiple of the Laplacian, and so the OPE vanishes once we impose the linearized equations of motion.  This is sufficient because we are currently considering only terms which involve the interaction once.

It is not difficult to show that any $U(2)$ invariant Lagrangian with more than $2$ derivatives is an expression involving $\Lap \phi$, and so can be absorbed into a field redefinition.

If we further ask that our fields live in a Lie algebra $\g$ and that the Lagrangian is invariant under the adjoint action, we find that the only possible cubic term is that appearing in the $WZW_4$ action.

\subsection{Quartic Lagrangians}

One can, of course, go further and try to constrain all of the terms in the tree-level Lagrangian by this method.    At the quartic level, terms with $\le 2$ derivatives automatically lead to log divergences.  This means that terms with $\le 1$ derivatives are excluded.  

Terms with $2$ derivatives, however, are forced on us to cancel the logarithmic term in the OPE associated to diagram figure \ref{figure_ope_cancellation}. 

  \begin{figure}



\includegraphics[scale=0.22]{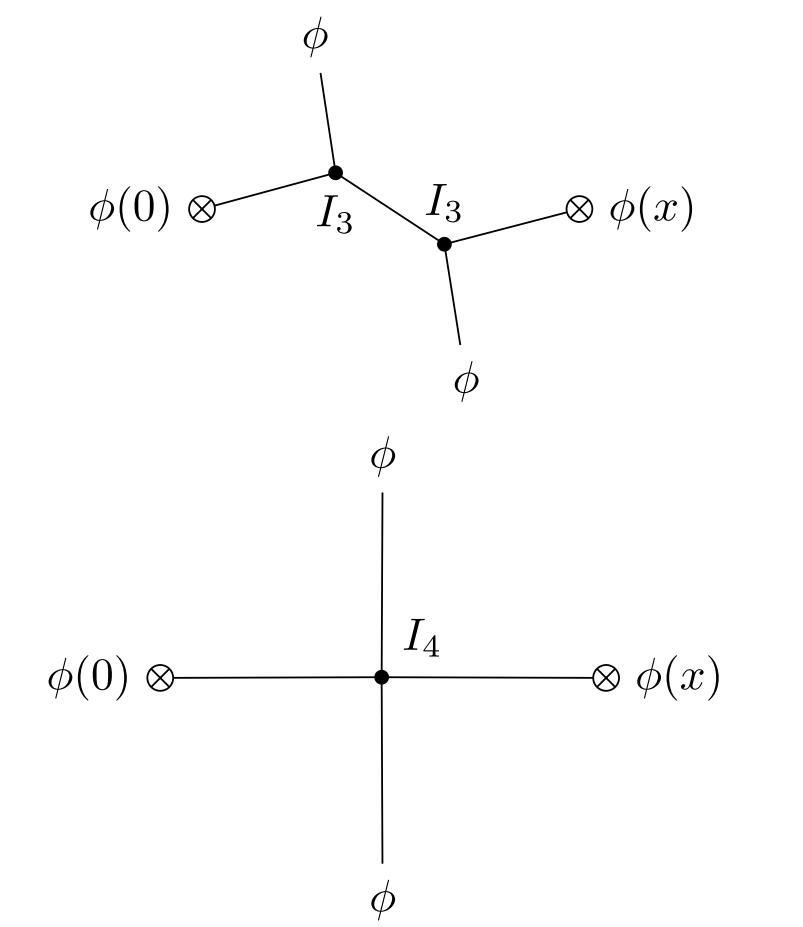}

	\caption{ The logarithmic OPEs associated to these two diagrams must cancel, which controls the coefficient of the quartic interaction $I_4$.  \label{figure_ope_cancellation} }
\end{figure}

To see this, we note that the logarithmic divergence in a $U(2)$ invariant  term with $2$ derivatives must be of the form
\begin{equation} 
	\phi_r(0) \phi_s(x) \sim J^{rstu} \omega_{ij}  \partial_i \phi_t (0) \partial_j \phi_u(0) \log \norm{x}  
\end{equation}
for some tensor $J$.   This form for the logarithmic OPE follows from $U(2)$ invariance and dimensional analysis. An explicit computation, similar to the ones given above, shows that $J^{rstu}$ encodes all the information of the quartic interaction up to total derivatives. 

The fact that the first diagram in  Figure \ref{figure_ope_cancellation} gives an OPE of the same form again follows from $U(2)$ invariance and dimensional analysis; the precise coefficients can be determined by evaluating some integrals explicitly.

From this, we see that the quartic terms with $\le 2$ derivatives are completely determined by the cubic terms.  Calculating the precise coefficient of the quartic term using this method is rather painful.  However, from what we have seen so far, we can prove the following:
\begin{proposition}
	The $WZW_4$ Lagrangian (expanded as a series in $\phi = \log \sigma$) is the unique Lagrangian for a scalar field in the Lie algebra $\g$, which is:
\begin{enumerate} 
	\item Invariant under $U(2)$ and the adjoint action of $G$
	\item Has as quadratic term the standard kinetic term for a scalar field.
	\item Has only two derivatives. 
	\item At one loop order, there are no logarithmic OPEs. 
\end{enumerate}
\end{proposition}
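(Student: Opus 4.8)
The plan is to prove uniqueness by induction on the order $n$ in the expansion $\phi=\log\sigma$, working throughout modulo total derivatives and field redefinitions $\phi\mapsto\phi+O(\phi^2)$; Lagrangians related in this way define the same theory, so this is the natural sense of uniqueness. By condition (3) every term carries at most two spacetime derivatives, and the zero- and one-derivative analyses of the previous subsection show that such interaction terms are in any case forbidden by condition (4); so at each order $n\ge 3$ the only interactions to consider are the $U(2)$- and adjoint-invariant monomials with exactly two derivatives. The two derivative indices can only be contracted against $\delta_{ij}$ or against the K\"ahler form $\omega_{ij}$. The $\delta_{ij}$ part is $SO(4)$-invariant and, after integration by parts, of the form $\int(\Lap\phi_a)\,\phi_{a_1}\cdots\phi_{a_{n-1}}$, hence removed by a field redefinition. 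Thus, modulo our equivalences, the order-$n$ interaction is
\begin{equation}
	I_n \;=\; I^{a_1\cdots a_n}\,\omega_{ij}\int \partial_i\phi_{a_1}\,\partial_j\phi_{a_2}\,\phi_{a_3}\cdots\phi_{a_n}
\end{equation}
for some $\g$-tensor $I$ that is antisymmetric in $a_1,a_2$ and $G$-invariant.

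The base cases are handled in the text above: at $n=2$ condition (2) fixes the kinetic term, and at $n=3$ the previous subsection shows that the unique admissible cubic term is $\lambda\,f^{abc}\,\omega_{ij}\int\partial_i\phi_a\,\partial_j\phi_b\,\phi_c$, whose one-loop self-OPE vanishes automatically; the coupling $\lambda$, fixed by matching $WZW_4$, is the only free parameter, and all higher terms will be rigidly determined by it.

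For the inductive step, fix $n\ge 4$ and assume $I_3,\dots,I_{n-1}$ have already been identified with the corresponding terms of $WZW_4$. Consider the one-loop logarithmic part of the elementary two-point OPE $\phi_r(0)\,\phi_s(x)$ at order $n-2$ in the external fields, computed from the position-space diagrams of the kind appearing in the cubic and quartic computations above. By $U(2)$-invariance and dimensional analysis every logarithmic term it produces has the form
\begin{equation}
	\phi_r(0)\,\phi_s(x)\;\sim\;J^{rst_1\cdots t_{n-2}}\,\omega_{ij}\,\partial_i\phi_{t_1}(0)\,\partial_j\phi_{t_2}(0)\,\phi_{t_3}(0)\cdots\phi_{t_{n-2}}(0)\,\log\norm{x}.
\end{equation}
Split the contributing diagrams into (a) the single diagram with the new vertex $I_n$, and (b) those built entirely from $I_3,\dots,I_{n-1}$. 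The tensor produced by (b) is fixed by the inductive hypothesis. The tensor produced by (a) is, by an explicit evaluation of the relevant convolution integral (as in the cubic and quartic cases, where the convolution of two copies of the Green's function is the Green's function of $\Lap^2$, namely $\propto\log\norm{x}$), a fixed linear image $J_n(I_n)$ of the coefficient of the new vertex; the essential point is that this linear map $J_n$ is injective on interactions taken modulo total derivatives. Condition (4) forces (a) plus (b) to vanish, i.e.\ $J_n(I_n)$ equals minus the contribution of (b); injectivity then determines $I_n$ uniquely modulo total derivatives. Since the $WZW_4$ Lagrangian satisfies (1)--(4), its degree-$n$ term solves this equation, so $I_n$ must coincide with it, closing the induction.

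The main obstacle is the injectivity of $I_n\mapsto J_n$: one must show that every nontrivial $U(2)$- and adjoint-invariant two-derivative $n$-point interaction generates a nonzero logarithm in the elementary two-point OPE, i.e.\ that the kernel of $J_n$ consists exactly of total derivatives. This amounts to a clean description, via $U(2)$-equivariant integration by parts, of which $\g$-tensors $I$ annihilate the coefficient of $\log\norm{x}$ in the convolution integral of (a); the excerpt establishes this for $n=4$ (``$J^{rstu}$ encodes all the information of the quartic interaction up to total derivatives''), and the general case has the same character but requires the combinatorics to be written out. A secondary point is the bookkeeping for the lower-order diagrams in (b): in a $U(2)$-symmetric renormalization scheme one must check that their subdivergences are consistently subtracted so that the recursion for the $\log$-coefficients closes, and that these diagrams really contribute only the single tensor structure above. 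Neither point is conceptually deep, but together they are where the real work lies.
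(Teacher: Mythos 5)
Your proof is correct and follows essentially the same route as the paper's, which is a short inductive sketch: the order-$k$ vertex is uniquely fixed by requiring its logarithmic OPE to cancel those generated by the lower-order vertices, and since $WZW_4$ satisfies all four conditions it must be the unique solution. The one point you rightly flag — injectivity of the map $I_n\mapsto J_n$ from vertices to $\log$-coefficients modulo total derivatives — is exactly what the paper also leaves unproved beyond the explicit $n=4$ case, so your writeup is, if anything, more honest about where the remaining work lies.
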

\begin{proof} 
	As we sketched above, term in the Lagrangian involve $k$ copies of the field $\phi$ leads to a logarithmic OPE which must cancel log OPEs coming from terms in the Lagrangian with $<k$ copies of $\phi$.    This uniquely fixes the term with $k$ copies of $\phi$. 
\end{proof}
One can hope that, at loop level, something similar happens. That is, when we include the closed-string fields and restrict to $G = SO(8)$, all counter-terms are uniquely fixed by the requirement that there are no log OPEs.

\section{The closed-string interaction}
\label{section_closed_interaction}
In this section, we will show how to derive the closed-string cubic interaction.  We will do this by first using a twistor space analysis to narrow down the possible cubic terms to two, and then use the method of section \ref{sec:interlude} to fix the relative coefficients of these terms.

The first step is to note that the Kodaira-Spencer interaction makes sense for a $(2,1)$ form on twistor space with logarithmic poles on the divisor $z=  0$, $ z = \infty$. (Recall that we have defined our fields just to be regular $(2,1)$-forms on $\PT$).  This can be seen explicitly in coordinates, say, near $z = 0$. A $(2,1)$ form with logarithmic poles corresponds to a Beltrami differential where the $\partial_{v_i}$ components are divisible by $z$ and the $\partial_z$ component is divisible by $z^2$.  The Kodaira-Spencer interaction acquires a zero of order $4$ from $\partial_{v_1} \partial_{v_2} \partial_z$, and a pole of order $4$ because it involves the holomorphic volume form twice. Therefore, there is no overall pole or zero.

A closed $(2,1)$ form with logarithmic poles on twistor space corresponds to closed $(1,1)$-form on $\R^4$, which is $\partial \dbar \rho$ if $\rho$ is the K\"ahler potential field corresponding to a $(2,1)$-form on twistor space with no poles.

To see this, we note that if $\eta$ is a closed $(2,1)$ form with log poles on twistor space, then
\begin{equation} 
	\int_{z} \iota_{\partial_{v_i}} \eta 
\end{equation}
is not a gauge invariant operation under the gauge symmetry $\eta \mapsto \eta + \dbar \chi$, for a $(2,0)$ form with log poles $\chi$.  This is because the zero in $\iota_{\partial_{v_i}}$ at $z = \infty$ cancels the pole in $\eta$ at $\infty$, but the pole in $\eta$ at $z = 0$ is not canceled.

The gauge invariant measurements we can make are of the form
\begin{equation} 
	B^{i \br{j}} = \omega^{\br{j}}_{j}	\int_z \iota_{ \partial_{v_i}} \mc{L}_{z \partial_{v_j}} \eta. 
\end{equation}
It is clear that if $\eta$ has no poles, then $B^{i \br{j}} = \partial_{u_i} \partial_{\ubar_{\br{j}}} \rho$.

Since the closed-string interaction on twistor space extends to $(2,1)$ forms with logarithmic poles, it must be written entirely in terms of $B = \partial \dbar \rho$.   Note that the self-dual part of $B$ is simply $\Lap \rho$.

There are two $U(2)$ invariant cubic interactions of the correct dimension. These are
\begin{equation} 
	\int (\partial \dbar \rho)^2 \Lap \rho \  \ \int \omega^2 (\Lap \rho)^3  
\end{equation}
Therefore, the Lagrangian is a linear combination
\begin{equation} 
	C	\int (\partial \dbar \rho)^2 \Lap \rho + D \int \omega^2 (\Lap \rho)^3  \label{closed_interaction} 
\end{equation}
We will determine the ratio of the coefficients $C,D$. We will fix the ratio of the coefficients by requiring that we do not get a logarithmic contribution to the OPE. 

We first note that the Kodaira-Spencer propagator on twistor space can be written as $(\partial \boxtimes 1) \dbar^{-1}$, where $\dbar^{-1}$ is the Green's kernel for the $\dbar$ operator on $\Omega^{2,\ast}$. Viewed as as a kernel, $\dbar^{-1}$ has one leg in $\Omega^{1,\ast}$ and one leg in $\Omega^{2,\ast}$.  We then apply $\partial: \Omega^{1,\ast} \to \Omega^{2,\ast}$ to get the propagator. 

The field $\rho$ corresponds, on twistor space, to a $(1,1)$-form $\gamma$ satisfying $\partial \gamma = \eta$, where $\eta$ is the $(2,1)$-form Kodaira-Spencer field.  This discussion shows that the operator $\rho = \int_{\CP^1} \gamma$ makes sense on twistor space as part of any tree-level Feynman diagram where there are no propagators which connect any two such operators  to each other. 

\begin{figure}

\includegraphics[scale=0.22]{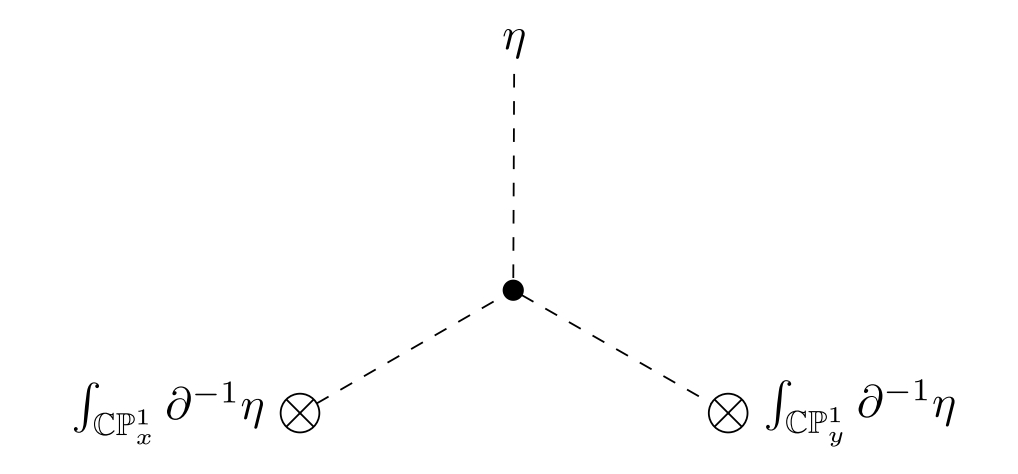}
		\caption{The OPE on twistor space.  \label{figure_twistor_ope}}
\end{figure}

We are interested in the diagram in Figure \ref{figure_twistor_ope}.  This diagram makes sense where the two operators are placed at any $\CP^1$ in $\PT$. If we place one operator at $0$, then the diagram makes sense when the other operator is placed in any $\CP^1$, so the result must analytically extend to $\C^4$. 
Each term in the Lagrangian \eqref{closed_interaction} contributes a logarithmic term in the OPE which cancels if $C,D$ are chosen correctly. We will calculate the contribution of each term to the OPE.  

Let us normalize the two-point function of $\rho$ to be $ \log \norm{x}$, which is proportional Green's function of $\Lap^2$.   Let us first determine the logarithmic OPE associated to $D \int (\Lap \rho)^3$.  This is  
\begin{equation} 
	\rho(0) \cdot \rho(u,\ubar) \sim  D \int 6 (\Lap \log \norm{u'} ) (\Lap \rho(u',\ubar')) (\Lap \log \norm{u' - u} ). 
\end{equation}
The factor of $6$ comes from the symmetries of the vertex. 

Now $ \Lap \log \norm{u' - u}$ is $2 \norm{u-u'}^{-2}$,  so the integral becomes
\begin{equation} 
	\rho(0) \cdot \rho(u,\ubar) \sim  D \int 6 \cdot 4  \frac{1}{ \norm{u'}^2 }(\Lap \rho(u',\ubar')) \frac{1}{ \norm{u' - u}^2 } 
\end{equation}
This is (up to a factor) the convolution of the Green's function with itself, as we discussed in section \ref{sec:interlude}, so that we end up with
\begin{equation} 
	\rho(0) \cdot \rho(u,\ubar) \sim  D \int  6 \cdot 4  \pi^2   \log \norm{u} \Lap \rho(u,\ubar).
\end{equation}

For the other possible vertex $C \int (\partial \dbar \rho)^2 \Lap \rho$, there are three possibilities depending on where the line labeled by $\Lap \rho$ goes.   Let us first consider the case when the external line is $\Lap \rho$. Then, the OPE looks like
\begin{equation} 
	\int_{u', \ubar'}2   \Lap \rho (u',\ubar') \left( \partial \dbar \log (\norm{u'}^2 )\right)\wedge \left( \partial \dbar   \log (\norm{u' - u})^2\right) . 
\end{equation}
The factor of $2$ is a symmetry factor. By integration by parts, we have
\begin{equation}
\begin{split}
	&2 	 \int_{u', \ubar'}  \Lap \rho (u',\ubar') \left( \partial \dbar \log (\norm{u'}^2 )\right)\wedge \left( \partial \dbar   \log (\norm{u' - u})^2\right) \\ &= - 2   \int_{u', \ubar'} \left( \partial  \Lap \rho (u',\ubar') \right) \left(  \dbar \log (\norm{u'}^2 )\right)\wedge \left( \partial \dbar   \log (\norm{u' - u})^2\right)  
\end{split}
\end{equation}
where we have used the fact that $\partial^2 = 0$.   We can do the same with the $\dbar$ operator giving us
\begin{equation}
\begin{split}
	&\int_{u', \ubar'}  \Lap \rho (u',\ubar') \left( \partial \dbar \log (\norm{u'}^2 )\right)\wedge \left( \partial \dbar   \log (\norm{u' - u})^2\right)  \\ &=   \int_{u', \ubar'} \left( \partial \dbar \Lap \rho (u',\ubar') \right) \left(   \log (\norm{u'}^2 )\right)\wedge \left( \partial \dbar   \log (\norm{u' - u})^2\right)  
\end{split}
\end{equation}
This expression has at least four derivatives acting on the external field $\rho$, so for dimensional reasons must result in an OPE of the form
\begin{equation} 
	\rho(0) \cdot \rho(u,\ubar) \sim (\partial \dbar \Lap \rho) F(u,\ubar) + \dots 
\end{equation}
where $F$ is an expression of weight $2$ under rescaling $u$.  Thus, $F$ is non-singular, and this does not contribute to the OPE at all.

The remaining possibility is that the external line is labeled by $\partial \dbar \rho$, so the amplitude contributing to the OPE is
\begin{equation}
\begin{split}
	2 	\int_{u', \ubar'} (\partial \dbar \rho (u',\ubar')) \wedge \left( \Lap \log (\norm{u'}^2 )\right)\wedge \left( \partial \dbar   \log (\norm{u' - u})^2\right)\\
	+  2  	\int_{u', \ubar'} (\partial \dbar \rho (u',\ubar')) \wedge \left( \partial \dbar \log (\norm{u'}^2 )\right)\wedge \left( \Lap   \log (\norm{u' - u})^2\right) .  
\end{split}
\end{equation}
We can write $\partial \dbar \rho$ as the self-dual and anti-self dual part, 
\begin{equation} 
	\partial \dbar \rho = (\partial \dbar \rho)_- + \omega \Lap \rho. 
\end{equation}
The anti-self-dual part can not contribute to a logarithmic OPE, because the operator multiplying $\log \norm{u}^2$ must be $SU(2)$ invariant.  Therefore, only the self-dual part can contribute.  The self-dual amplitude is the same as the one we encountered when we studied the vertex $(\Lap \rho)^3$, giving us 
\begin{equation} 
	\rho(0) \cdot \rho(u,\ubar) \sim  C \int   4 \cdot 4  \pi^2   \log \norm{u} \Lap \rho(u,\ubar). 
\end{equation}
There is a different prefactor from the vertex with coefficient $D$, which had $6 \cdot 4 \pi^2 D $.    The ratio of $4/6$ comes from the fact that with the vertex with coefficient $C$, $4$ of the $6$ ways to position the external lines lead to a logarithmic divergence.  

Therefore, the overall logarithmic OPE is of the form
\begin{equation} 
	\rho(0) \cdot \rho(u,\ubar) \sim 24 \pi^2 (\tfrac{2}{3} C + D)  ( \Lap \rho) \log (\norm{u}^2) + \dots  
\end{equation}
where $\dots$ are terms with no branch cuts.  This vanishes if $D = - \tfrac{2}{3} C$. 

The overall coefficient of 
\begin{equation} 
	-\frac{2}{3} \int (\Lap \rho)^3 \omega^2 + \int (\Lap \rho) (\partial \dbar \rho)^2	 
\end{equation}
can not be determined quite so easily.  It is possible in principle to compute it using these methods,  but using loop level diagrams in a version of the Green-Schwarz mechanism, as discussed in sections \ref{section_spacetime_GS} and \ref{sec:wzw4_gs}.

\subsection{Geometric understanding of the equations of motion}
The full Lagrangian, up to an overall factor and after rescaling $\rho$, is
\begin{equation} 
	\int \omega^2 \rho \Lap^2 \rho +  (\Lap \rho)  (\partial \dbar \rho)^2  - \frac{2}{3} (\Lap \rho)^3 \omega^2  
\end{equation}
Varying $\rho$, the equations of motion are
\begin{equation}
2 \omega^2 \Lap^2 \rho +  \Lap ( (\partial \dbar \rho)^2 )  +  2 (\partial \dbar \Lap \rho) (\partial \dbar \rho) - 2 \omega^2 \Lap (\Lap \rho \Lap \rho). 	
\end{equation}
In terms of the closed $(1,1)$-form $B = \partial \dbar \rho$, $B_+ = \Lap \rho \omega$, the equations of motion become
\begin{equation}
	\begin{split}
		0 &= 2 \omega \Lap B_+ +  \Lap ( B\wedge B) + 2 B\wedge  \Lap B - 2 \Lap (B_+\wedge B_+) \\
		&= 2 \omega \Lap B_+ + \Lap (B_- \wedge B_-) -  \Lap ( B_+ \wedge B_+ ) + 2 B_+ \wedge \Lap B_+ + 2 B_- \wedge \Lap B_-.  
	\end{split}
\end{equation}
This equation is equivalent to asking that the scalar curvature of the K\"ahler metric $\omega + B$ vanishes, modulo $B^3$.

We will normalize so that $\ast (\omega^2) = 1$. Recall that the Ricci curvature of the deformed K\"ahler metric $\omega + B$ is given by
\begin{equation}
	\begin{split}
		R &= \partial \dbar f \\
		f &= \log \left( \frac{(\omega + B)^2}{\omega^2}  \right) \\
		&= \log \left( 1 + 2\ast(B_+ \wedge \omega) + \ast (B_+ \wedge B_+) + \ast(B_- \wedge B_-)    \right) \\
		&= 2\ast(B_+ \wedge \omega) - 2 \ast(B_+ \wedge B_+)  + \ast(B_+ \wedge B_+) + \ast (B_- \wedge B_-) + O(B^3) \\
		&= 2\ast(B_+ \wedge \omega) -  \ast(B_+ \wedge B_+)  + \ast (B_- \wedge B_-) + O(B^3) 
	\end{split}
\end{equation}

The scalar curvature is 
\begin{equation}
	\begin{split}
		S &=(\omega + B) \wedge R \\
		&= \omega^2  \Lap f  + B_+ \omega  \Lap f + B_- \wedge \partial \dbar f \\
		&= 2 \omega \Lap B_+ + 2 B_+ \Lap B_+ -  \Lap(B_+ \wedge B_+) +   \Lap(B_- \wedge B_-) + 2 B_- \wedge (\partial \dbar (B_+/\omega))  +  O(B^3).   
	\end{split}
\end{equation}
Now, $B_+ = \omega \Lap \rho$ where $\rho$ is the K\"ahler potential, so that 
\begin{equation} 
B_- \wedge \partial \dbar (B_+ / \omega) = B_- \wedge \Lap B_-. 
\end{equation}
Therefore
\begin{equation} 
	S = 2 \omega \wedge \Lap B_+ + 2 B_+ \wedge \Lap B_+ -  \Lap(B_+ \wedge B_+) +   \Lap(B_- \wedge B_-) + 2 B_- \wedge \Lap B_-  +  O(B^3).    
\end{equation}
This matches precisely the equations of motion of our Lagrangian.

We conclude that, modulo $B^3$, the equations of motion for the closed-string field theory are that the scalar curvature of the K\"ahler metric $\omega + B$ vanishes. 

There is an important caveat here.  I do not currently know the ratio between the coefficients of the terms
\begin{equation} 
	\begin{split}
		\int (\partial \dbar \rho) \op{Tr} J^{1,0} J^{0,1} \\
		(\Lap \rho)  (\partial \dbar \rho)^2  - \frac{2}{3} (\Lap \rho)^3 \omega^2 . 
	\end{split}
\end{equation}
These coefficient can in principle be determined by the Green-Schwarz mechanism on space-time as discussed in sections \ref{section_spacetime_GS} and \ref{sec:wzw4_gs}.  

The coefficient of the first term tells us which multiple of $\partial \dbar \rho$ should be viewed as the deformation of the K\"ahler form. This term, after all, couples $\partial \dbar \rho$ to the stress-energy tensor of the $\sigma$-model.  

The interpretation of the equations of motion in terms of the scalar curvature is really only valid if the relative coefficients between the two terms is $1$.  Calculating this relative coefficient involves some challenging one-loop Feynman diagram computations.

\section{The Green-Schwarz mechanism on space-time for self-dual Yang-Mills}
\label{section_spacetime_GS}

In this section we will perform a two-loop cancellation of an operator product expansion in self-dual Yang-Mills theory. We will find that there is a non-trivial logarithmic divergence, which can be canceled by once we introduce the closed-string field $\rho$. This is a space-time incarnation of the Green-Schwarz mechanism on twistor space.

To write down the OPE we will compute, let us introduce some notation. Decompose $\op{Spin}(4) = SU(2)_+ \times SU(2)_-$, and let $S_+$, $S_-$ be the fundamental representations of $SU(2)_+$, $SU(2)_-$.  Introduce indices $\alpha$ for basis of $S_+$ and $\dot{\alpha}$ of $S_-$.  We use the convention that the twistor line $\CP^1$ is a rotated by $SU(2)_+$.  

We will compute the OPE of the stress-energy tensor 
\begin{equation} 
	T_{\alpha \dot{\alpha} \beta \dot{\beta}} = B^a_{\alpha \beta} F^a_{\dot{\alpha} \dot{\beta}}.	 
\end{equation}
with itself.  

We will find that there is a logarithmic term in the OPE, at two loops, of the form
\begin{equation} 
	T_{\alpha \dot{\alpha} \beta \dot{\beta}}(0) 	T_{\gamma \dot{\gamma} \delta \dot{\delta}}(w)  \sim
	  \eps_{\alpha \gamma} \eps_{\beta \delta}F^a_{\dot{\alpha} \dot{\beta}}F^b_{\dot{\gamma} \dot{\delta}}   (F^a \wedge F^b)  		 \log \norm{w}^2 + \text{ rational }\label{eqn_log_ope_T}.
\end{equation}
It is important to note that in equation \eqref{eqn_log_ope_T} we have not written down all the terms in the OPE, only the lowest-order logarithmic term.  There are also rational function terms which appear at one loop, and logarithmic terms at higher loops.

To compute this OPE directly on $\R^4$ is challenging. From the space-time perspective, this logarithmic OPE is associated to the diagrams in figure \ref{Figure:spacetime_BB_ope}.
\begin{figure}
\includegraphics[scale=0.25]{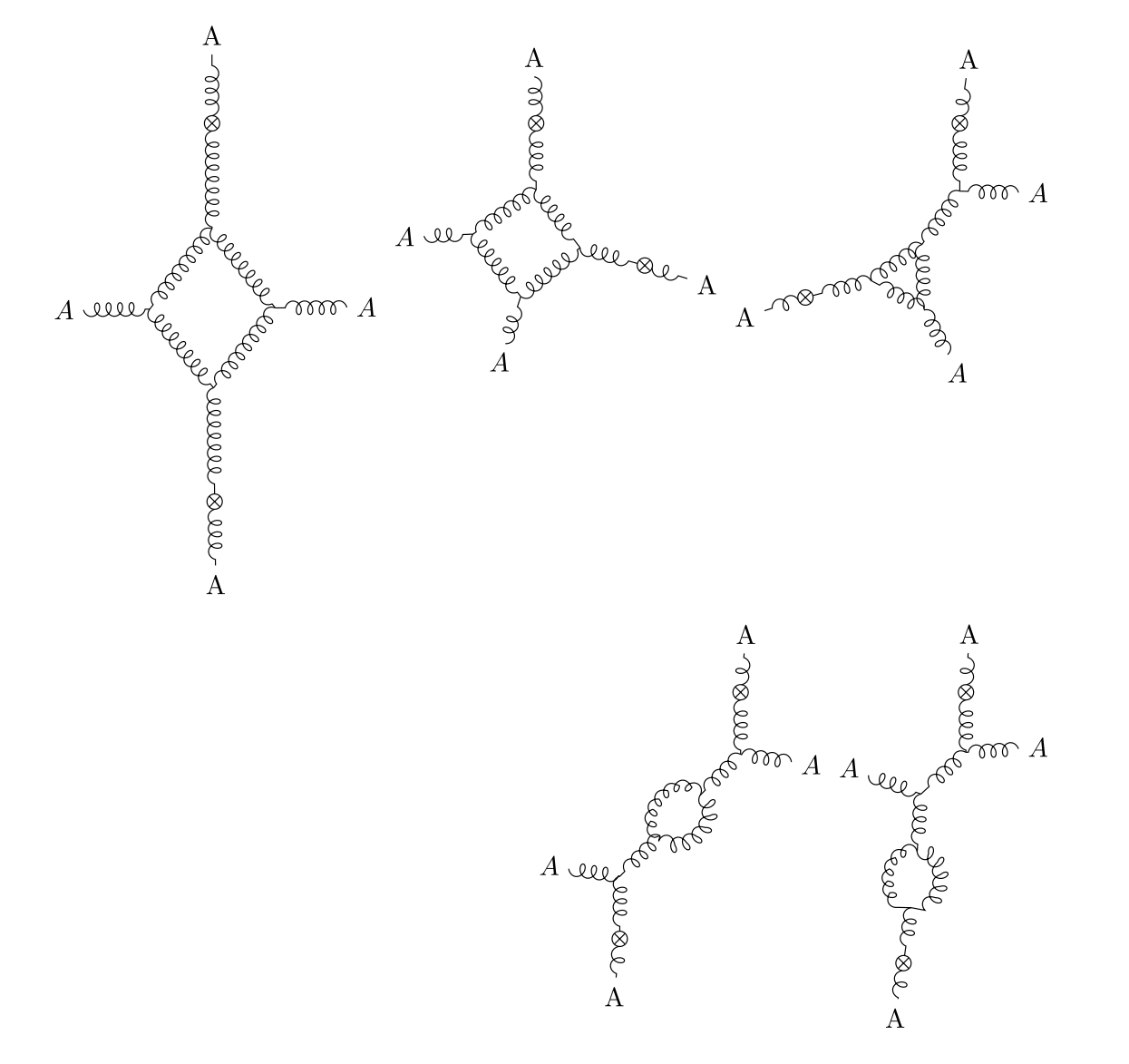}

	\caption{Diagrams giving rise to the $T(0) T(x)$ OPE. Here, the vertices labeled by $\otimes$ are where the operator $T$ is placed.   \label{Figure:spacetime_BB_ope}}
	
\end{figure}

Instead we will give a twistor space derivation.  We will write 
\begin{equation} 
	\begin{split}
		\mc{A} & \in \Omega^{0,1}(\PT, \mf{g}) \\
		\mc{B} & \in \Omega^{3,1}(\PT, \mf{g}) 
	\end{split}
\end{equation}
for the fields on twistor space, and use italic font $A,B$ for the fields on $\R^4$.   

Recall that there is a non-local quartic expression on twistor space which is introduced to cancel the one-loop anomaly:
\begin{equation} 
	\frac{1}{ (2 \pi)^4 32 }  \int_{x \in \R^4} \int_{z,z' \in \CP^1}\op{Tr}( \mc{A}(x,z) \d \mc{A}(x,z)\frac{\d z \d z'}{(z - z')^2}  \mc{A}(x,z') \d \mc{A}(x,z') ) + O(\mc{A}^5). \label{eqn_nonlocal_vertex} 
\end{equation}
Here the trace is in the adjoint representation.  From the twistor space perspective, local vertices by themselves can not give rise to logarithmic OPEs. This is only possible with non-local vertices.  

We will therefore calculate the logarithmic term in the OPE by representing the stress-energy tensor $T$ on twistor space, and  using \eqref{eqn_nonlocal_vertex}, as in the diagram Figure \ref{figure_twistor_ope}.
\begin{figure}

\includegraphics[scale=0.22]{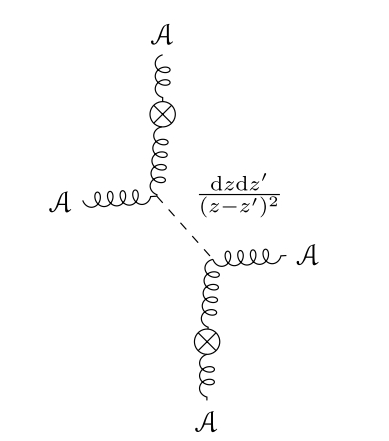}

	\caption{This figure represents the twistor space computation of the logarithmic OPE, including all the diagrams from Figure \ref{Figure:spacetime_BB_ope}.  We represent the twistor space representation of the operator $B(0)^2$ as the vertices $\otimes$. We represent the quartic non-local vertex as a product of cubic vertices connected by $\frac{\d z \d z'}{(z-z')^2}$. \label{figure_twistor_ope_T}}	
\end{figure}

The first step is to represent the operator $T$ on twistor space. In fact, we only need to describe the non-gauge invariant operator $B_{\alpha \beta}$.  The field $\mc{B}$ is a $(0,1)$-form on twistor space twisted by $\Oo(-4)$. On each $\CP^1$, $\Oo(-4) = \K_{\CP^1}^{\otimes 2}$.  To construct a $(1,1)$ form on $\CP^1$, we need to contract $\mc{B}$ with a vector field on $\CP^1$. Holomorphic vector fields on $\CP^1$ are sections of $\Oo(2)$, and live in the adjoint representation of $SU(2)_+$.     We can write a vector field as $V_{\alpha \beta}$. Then, 
\begin{equation} 
	B^a_{\alpha \beta}(0) = \int_{\CP^1} \iota_{V_{\alpha \beta}} \mc{B}^a.\end{equation}

We will compute the OPE associated to the diagram in Figure \ref{figure_twistor_ope_T}. Let us pick three points $u_i^{(k)}$, $k = 1,2,3$ on $\R^4$. We will place the non-gauge invariant operators at $u^{(1)}$ and $u^{(3)}$, and place the non-local vertex on twistor space anomaly at $u^{(2)}$.  At the end of the computation we will integrate over $u^{(2)}$. 

Since the propagator attaches to only the $B$ in the stress-energy tensor, and not to $F$,  as an intermediate step we compute the OPE of the non-gauge invariant operator $B_{\alpha \beta}^a$ with $B_{\gamma \delta}^b$.  Further, because everything is invariant under $SU(2)_+$, we lose no generality by taking the OPEs between $B_{12}^a$ and $B_{12}^b$.  The propagator is the field sourced by the operator $\int_{\CP^1} \iota_{V_{12}} \mc{B}^a$ on twistor space. 

We will do this in the coordinates $z,v_1,v_2$ we have been using on twistor space up to now. The vector field $V_{12}$ is
\begin{equation} 
	V_{12} = z \partial_z. 
\end{equation}
If we trivialize the bundle $\Oo(-4)$ on a patch of twistor space by using the meromorphic volume form $\d v_1 \d v_2 \d z / z^2$, so that $\mc{B}$ is a $(0,1)$-form, then operator $B_{12}$ is represented by 
\begin{equation} 
	\int_{\CP^1} \frac{\d z}{z} \mc{B}. \label{eqn_twistor_operator}
\end{equation}
In these coordinates, the kinetic term of the Lagrangian is
\begin{equation} 
	\int \mc{B} \dbar \mc{A} \d v_1 \d v_2 \frac{\d z}{z}. 
\end{equation}
From this we see that the field sourced by the operator \eqref{eqn_twistor_operator} placed at $u,\ubar$ is  a field $\mc{A}$ which satisfies
\begin{equation} 
	\dbar \mc{A} = z \delta_{v_1 = u_1 + z \ubar_2} \delta_{v_2 = u_2 - z \ubar_1}.  
\end{equation}
We will solve this equation, in an axial gauge, for the operators placed at $u^{(1)}$ and $u^{(3)}$:
\begin{equation}
\begin{split}
	\mc{A}^{(3)} &= \frac{1}{2 \pi \i} z \delta_{v_1 = u_1^{(3)} + z \ubar_2^{(3)} } \frac{1}{ v_2 - u_2^{(3)} + z \ubar_1{(3)}  }\\
	\mc{A}^{(1)} &=  \frac{1}{2 \pi \i}  z\delta_{v_2 = u_2^{(1)} - z \ubar_1^{(1)} } \frac{1}{ v_1 - u_1^{(1)} - z \ubar_1{(1)}  }\\
\end{split}
\end{equation}

We will let $\mc{A}$ be the background field placed at the two external lines of Figure \ref{figure_twistor_ope_T}. We take $\mc{A}$ to be on-shell, so that it can be expressed in terms of the four-dimensional gauge field $A$. We can then insert the fields $\mc{A}^{(1)}, \mc{A}^{(3)}$, $\mc{A}$ into the non-local vertex on twistor space, giving the expression
\begin{equation} 
	\frac{1}{(2 \pi)^4 8 }  \int_{u^{(2)},\ubar^{(2)}}	\int_{z,z'}\mc{A}^{(1)}(u^{(2)},\ubar^{(2)},z) \d \mc{A}(u^{(2)},\ubar^{(2)},z) \frac{\d z \d z'}{(z-z')^2} \d \mc{A}(u^{(2)},\ubar^{(2)},z) \mc{A}^{(3)}(u^{(2)},\ubar^{(2)},z) \label{eqn_nonlocalamplitude} 
\end{equation}
This expression will be a function of $u^{(13)}$, $\ubar^{(13)}$ and of the background field $\mc{A}$. Since this field is on-shell, we can express it in terms of a four-dimensional gauge field $A$, and we find the following:
\begin{proposition}
	The singular part in the amplitude of \eqref{eqn_nonlocalamplitude} as $u^{(13)}, \ubar^{(13)} \to 0$ is a non-zero multiple of
	\begin{equation} 
		\log \norm{u^{13}}	\int_{\R^4} F^a(A) \wedge F^b(A) 
	\end{equation}
	where $a,b$ are the colour indices of the non-gauge invariant operators $B^a$, $B^b$ whose OPE we are taking.
	\label{prop:ope_sdym}
\end{proposition}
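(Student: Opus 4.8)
The plan is to evaluate the twistor‑space diagram of Figure \ref{figure_twistor_ope_T} directly by residues, arranging things so that the four‑dimensional logarithm only appears at the end, from the convolution identity used throughout section \ref{sec:interlude}. First I would substitute the explicit axial‑gauge propagators $\mc{A}^{(1)},\mc{A}^{(3)}$ into \eqref{eqn_nonlocalamplitude}. Each is a distributional $(0,1)$‑form carrying a delta‑current $\delta_{v_1=\cdots}$ or $\delta_{v_2=\cdots}$ together with a simple pole in the remaining fibre coordinate. Exactly as in the two‑point function computation for $\rho$ carried out above, restricting the delta‑current of $\mc{A}^{(1)}$ along the twistor line $\CP^1_{u^{(2)}}$ over the vertex forces $z$ to a definite value $z_\ast=z_\ast(u^{(2)},u^{(1)})$, and on that locus the leftover pole evaluates to $\norm{u^{(2)}-u^{(1)}}^{-2}$ up to an elementary $z_\ast$‑dependent prefactor; likewise $\mc{A}^{(3)}$ forces $z'=z'_\ast(u^{(2)},u^{(3)})$ and produces $\norm{u^{(2)}-u^{(3)}}^{-2}$. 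After this localisation the only integration left is $\int_{u^{(2)}}$ over $\R^4$: the residual delta‑forms in $z,z'$ evaluate those contour integrals at $z_\ast,z'_\ast$, so that the kernel $(z-z')^{-2}$ of the non‑local vertex becomes a rational function $(z_\ast-z'_\ast)^{-2}$ of the insertion points.

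Next I would put the on‑shell background on the two remaining legs, using that away from $z=0$ and modulo $\d\zbar$ one has $\mc{A}=\pi_z^{0,1}A$ with $\pi_z^{0,1}\,\d\ubar_{\br{i}}=\frac{1}{1+\abs{z}^2}\d\vbar_{\br{i}}$. Computing $\d\mc{A}$ as a two‑form on $\PT$ with the vector‑field dictionary \eqref{eqn_vectorfields} and the self‑dual Yang–Mills equations $F_{\zbar\vbar_i}=0$, and keeping only the components that survive wedging against the delta‑forms of $\mc{A}^{(1)},\mc{A}^{(3)}$, these two legs should repackage — bilinearly, and evaluated at $z_\ast,z'_\ast$ — into the four‑dimensional curvature $F(A)$ at $u^{(2)}$. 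The careful bookkeeping of the fibre forms $\d v_i,\d\vbar_i,\d z$ is what shows that the ``chiral'' self‑dual structure of the non‑local vertex selects the topological density $F^a(A)\wedge F^b(A)$, rather than $F\wedge\ast F$ or any other dimension‑four bilinear; that the colour indices are precisely those, $a,b$, of the operators $B^a,B^b$ whose OPE we are computing; and that the $SU(2)_+$ structure is $\eps_{\alpha\gamma}\eps_{\beta\delta}$, coming from the pairing of the two $\Oo(2)$‑valued vector fields $V_{\alpha\beta},V_{\gamma\delta}$ in the twistor representation of the operators $B$ — so it is enough to run the computation with $V_{12}=z\partial_z$ and invoke $SU(2)_+$‑covariance.

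At this point the amplitude has the form $\displaystyle\int_{u^{(2)}}\frac{\kappa(u^{(2)};u^{(1)},u^{(3)})}{\norm{u^{(2)}-u^{(1)}}^2\,\norm{u^{(2)}-u^{(3)}}^2}\,\bigl(F^a(A)\wedge F^b(A)\bigr)(u^{(2)})$, where $\kappa$ is the rational function built from $z_\ast,z'_\ast$; it is regular as $u^{(13)}\to0$ and bounded in the relevant angular directions near $u^{(2)}=u^{(1)}$. This is the convolution of the four‑dimensional scalar Green's kernel with itself against a smooth density, and by the elementary identity of section \ref{sec:interlude} its only branch cut as $u^{(13)},\ubar^{(13)}\to0$ is a constant multiple of $\log\norm{u^{(13)}}$ times $\int_{\R^4}F^a(A)\wedge F^b(A)$. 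To see the constant is nonzero I would either track it through the finitely many, nonzero residues of the previous steps, or — more cheaply — evaluate the whole expression on a convenient background such as a single abelian self‑dual mode, where the remaining integral is elementary and manifestly nonvanishing; this also pins down the sign entering \eqref{eqn_log_ope_T}.

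The hard part will be the middle steps: the honest treatment of the exterior derivative $\d\mc{A}$ on $\PT$ and of the iterated contour integrals, keeping track of which fibre forms survive each wedge, so as to be certain both that the output density is the topological combination $F^a\wedge F^b$ and that the residues do not conspire to cancel. A secondary technical nuisance is the consistent regularisation of the distributional products $\delta_{v_1=\cdots}\wedge\d\mc{A}$ — products of delta‑forms with forms having poles at coincident arguments — which has to be carried out compatibly with the chosen axial gauge, and checked to be independent of that choice since the final operator $F\wedge F$ is gauge invariant.
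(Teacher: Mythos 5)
Your overall strategy — localize $z,z'$ by the delta-currents in $\mc{A}^{(1)},\mc{A}^{(3)}$, reduce to a single integral over $u^{(2)}\in\R^4$, and read off the logarithm from the scale behaviour of that integral — is the same as the paper's, and your fallback of evaluating on an explicit simple background (a single harmonic mode) is exactly what the paper does. However, there is a genuine gap at the step where you assert that the localized amplitude takes the form $\int_{u^{(2)}}\kappa\,\norm{u^{(2)}-u^{(1)}}^{-2}\norm{u^{(2)}-u^{(3)}}^{-2}\,(F^a\wedge F^b)(u^{(2)})$ with $\kappa$ ``regular and bounded,'' so that the convolution identity of section \ref{sec:interlude} applies. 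This is not what the localization actually produces. The kernel $(z-z')^{-2}$, evaluated at $z_\ast=u_2^{(12)}/\ubar_1^{(12)}$ and $z'_\ast=-u_1^{(23)}/\ubar_2^{(23)}$, contributes a factor with denominator $(\ubar_2^{(23)}u_2^{(12)}+u_1^{(23)}\ubar_1^{(12)})^2$, which is singular on a codimension-two locus passing through both insertion points; the Jacobians of the delta-currents contribute $\abs{u_1^{(21)}}^{-2}\abs{u_2^{(23)}}^{-2}$; and the poles $\norm{u^{(12)}}^{-2}$, $\norm{u^{(23)}}^{-2}$ appear with higher powers once the $(1+\abs{z}^2)^{-3}$ factors are evaluated at $z_\ast,z'_\ast$. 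The paper's explicit integrand is
\begin{equation}
\frac{\ubar_1(u_1-c)\,\abs{u_1-c}^2\,\abs{u_1}^2\,\abs{u_2}^6}{\norm{u}^6\,\norm{u-cv}^8\,(\ubar_2 u_2+(u_1-c)\ubar_1)}\,,
\end{equation}
which is nothing like a bounded density times two Green's functions. What saves the argument is not the convolution identity but the fact that the integrand together with $\d^4u$ is homogeneous of total degree zero in $(u,c)$; this is why a logarithm in $c$ appears at all.

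The same issue undermines your non-vanishing claim. Because the log coefficient is an angular integral of a sign-indefinite rational function, it could a priori cancel, and ``tracking nonzero residues'' does not rule this out. The paper's resolution is to apply $c\,\partial_c$, trade it for the Euler vector field using the degree-zero homogeneity, and use Stokes' theorem to rewrite the coefficient as a boundary integral over $\norm{u}=1$; in the $c\to 0$ limit this collapses to $\int_{S^3}\abs{u_1}^6\abs{u_2}^6\,\d\mathrm{Vol}$, which is manifestly positive. Some argument of this kind (or an honest evaluation of the angular integral) is needed; as written, your proposal asserts the two points where the proposition could fail — that the singularity structure is that of a Green's-function convolution, and that the resulting constant is nonzero — rather than proving them. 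A smaller point: the identification of the output operator as $F^a\wedge F^b$ is obtained in the paper not by tracking all fibre forms for a general on-shell background, but by probing with specific quadratic harmonic backgrounds and invoking Lorentz invariance of the logarithmic term to fix the tensor structure; your proposed general-background bookkeeping would work but is considerably heavier than necessary.
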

The proof is a rather lengthy direct computation which we place in the appendix \ref{sec:twistor_ope}. We do not determine the coefficient of the logarithmic divergence using this method, but will determine it shortly in another way.

This result immediately implies that we
\begin{equation} 
	T_{\alpha \dot{\alpha} \beta \dot{\beta}}(0) 	T_{\gamma \dot{\gamma} \delta \dot{\delta}}(w)  \sim
	C  \eps_{\alpha \gamma} \eps_{\beta \delta}F^a_{\dot{\alpha} \dot{\beta}}F^b_{\dot{\gamma} \dot{\delta}}   (F^a \wedge F^b)  		 \log \norm{w}^2 + \text{ rational }
\end{equation}
for a constant $C$ which we will determine shortly. It would be very interesting to perform this calculation directly from the space-time perspective, by computing from first principles the amplitude of the diagrams \ref{Figure:spacetime_BB_ope}.   This seems quite a bit more challenging than the computation from the twistor point of view.

\subsection{Canceling the logarithmic divergence} 
The general analysis from twistor space tells us that this logarithmic OPE can be canceled by an exchange of closed string fields. Here we will verify on $\R^4$ that this is indeed the case.  This will allow us to fix the overall factor. 

The diagram we will use is Figure \ref{Figure:BB_closedstring}. 
\begin{figure}

\includegraphics[scale=0.22]{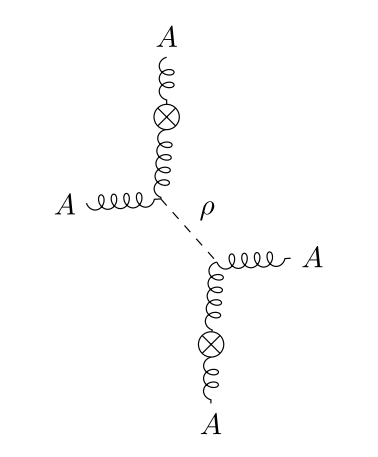}

	\caption{The exchange of  the field $\rho$ leads to a logarithmic term in the $T (0) T(x)$  OPE which cancels the term coming from the diagrams in \ref{Figure:spacetime_BB_ope}. The stress energy tensor is placed at the vertex labeled $\otimes$.
	\label{Figure:BB_closedstring}}
\end{figure}

To perform this computations, we will use coordinates $w_{\alpha \dot{\beta}}$ where
\begin{equation} 
	\begin{split} 
		w_{1\dot{1}} &= u_1\\
		w_{1 \dot{2}} &= u_2 \\
		w_{2 \dot{1}} &= \ubar_2\\
		w_{2 \dot{2}} &= -\ubar_1
	\end{split}
\end{equation}
The inner product in this basis is
\begin{equation} 
-\half \eps^{\alpha \beta}  \eps^{\dot{\alpha} \dot{\beta}} w_{\alpha \dot{\alpha}} w_{\beta \dot{\beta}}  
\end{equation}

Then, self-dual two-forms have two undotted indices,
\begin{equation} 
	B= B^{\alpha \beta} \eps^{\dot{\alpha} \dot{\beta} }\d w_{\alpha \dot{\alpha}} \d w_{\beta \dot{\beta}}  
\end{equation}
It will be helpful to write
\begin{equation} 
	B^{\alpha \beta} = \tfrac{1}{4} \eps_{\dot{\alpha} \dot{\beta}} \iota_{\partial_{\alpha \dot{\alpha}} } \iota_{\partial_{\beta \dot{\beta}}} B   
\end{equation}

We have scalar but non-gauge invariant operators $B^a B^b$ defined by
\begin{equation} 
	 B^a B^b \d^4 x = B\wedge B.
\end{equation}
In our coordinates, we have
\begin{equation} 
	B^a B^b = \tfrac{1}{16}   B^a_{\alpha \beta} B^b_{\gamma \delta} \eps^{\alpha \gamma} \eps^{\beta \delta} 
\end{equation}
Similarly, we have
\begin{equation} 
	F^a F^b =  \frac{1}{16}F^a_{\dot{\alpha}_1 \dot{\beta}_1}	F^b_{\dot{\alpha}_2 \dot{\beta}_2}  	\eps^{\dot{\alpha}_1 \dot{\alpha}_2}	\eps^{\dot{\beta}_1 \dot{\beta}_2}. 
\end{equation}

In the appendix, we compute the normalization of the two-point function of $F$ and $B$. We need the following results. Let $A_{\alpha \beta}$ be the field sourced by $B_{\alpha \beta}$.  We have
\begin{equation}
	\begin{split} 
		\ip{ F_{\dot{\alpha} \dot{\beta}} B_{\alpha \beta} } &= 		 \frac{1}{32 \pi^2 \norm{w}^6 } \left(  w_{\beta \dot{\beta}} w_{\alpha \dot{\alpha}} + w_{\alpha \dot{\beta}} w_{\beta \dot{\alpha}} \right) \\
		F \wedge \d A_{\alpha \beta} &= 2 \d^4 x  \eps_{\alpha \lambda} \eps_{\beta \mu} 	F_{\dot{\lambda} \dot{\mu}} \dpa{w_{\lambda \dot{\lambda}}}  \dpa{w_{\mu \dot{\mu}} }  G(w) 
	\end{split}
\end{equation}

The first computation we will do is the following. 
\begin{lemma}
	The logarithmic term in the OPE of the non-gauge invariant operator $B_{\alpha\beta}^a$ with $B_{\gamma \delta}^b$ coming from the exchange of the field $\rho$ as in Figure \ref{Figure:BB_closedstring} is 
	\begin{equation} 
	B^a_{\alpha_1 \beta_1} (w^{(1)}) B^b_{\alpha_2 \beta_2}(w^{(2)} ) \sim \frac{\lambda_{\g}^2}{(2 \pi)^4 3 \cdot 2^5   } \left(  \eps_{\alpha_1 \beta_2} \eps_{\alpha_2 \beta_1} + \eps_{\alpha_1 \alpha_2} \eps_{\beta_1 \beta_2} \right) F^a \wedge F^b  		 \log \norm{w^{(1)} - w^{(2)}}^2 + \text{ rational }	
	\end{equation}
\end{lemma}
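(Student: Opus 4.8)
The plan is to evaluate directly the diagram of Figure~\ref{Figure:BB_closedstring}, using only ingredients already in place: the source term $\int\op{Tr}(\mc B F(\mc A)_+)$, so that $B^a_{\alpha\beta}(w)$ sources the field $A_{\alpha\beta}$ whose two-point functions with $F$ are recorded above; the closed-string kinetic term $\tfrac12\int(\Lap\rho)^2$, with propagator $\ip{\rho(y)\rho(y')}=\tfrac{1}{16\pi^2}\log\norm{y-y'}^2$; and the vertex $\lambda_{\g}\tfrac{1}{8\pi\sqrt3}\int\d\rho\,CS(A)=-\lambda_{\g}\tfrac{1}{8\pi\sqrt3}\int\rho\,\op{Tr}(F(A)^2)$, using $\d CS(A)=\op{Tr}(F(A)^2)$. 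Since $\rho$ has no self-coupling and its only vertex is linear in $\rho$, the $\rho$-exchange contribution to the $B^aB^b$ OPE is a single diagram: two copies of $\rho\,\op{Tr}(F^2)$ joined by a $\rho$-propagator, with one leg of each vertex tied to the $B$-sourced propagator and the other an on-shell external field strength; the colour trace forces these to carry the indices $a$ and $b$. I would first note, by dimensional analysis and $SU(2)_+$-invariance exactly as in Section~\ref{sec:interlude}, that only the part of each $\op{Tr}(F^2)$ vertex with one leg on the $B$-sourced propagator --- which by the displayed formula equals $F^a\wedge\d A_{\alpha_1\beta_1}$ at one vertex and $F^b\wedge\d A_{\alpha_2\beta_2}$ at the other, with $F^a,F^b$ the external field strengths --- and only the values, not the derivatives, of those external field strengths can feed a $\log\norm{w^{(1)}-w^{(2)}}^2$; every other term is strictly less singular.

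Substituting the formula for $F\wedge\d A_{\alpha\beta}$ quoted above, the amplitude reduces, up to the vertex constant $\big(\lambda_{\g}\tfrac{1}{8\pi\sqrt3}\big)^2$ and symmetry factors, to
\[
  F^a_{\dot\lambda\dot\mu}F^b_{\dot\lambda'\dot\mu'}\,\eps_{\alpha_1\lambda}\eps_{\beta_1\mu}\eps_{\alpha_2\lambda'}\eps_{\beta_2\mu'}\int_{y_1,y_2}\Big(\partial_{y_1}^{\lambda\dot\lambda}\partial_{y_1}^{\mu\dot\mu}G(y_1-w^{(1)})\Big)\,\ip{\rho(y_1)\rho(y_2)}\,\Big(\partial_{y_2}^{\lambda'\dot\lambda'}\partial_{y_2}^{\mu'\dot\mu'}G(y_2-w^{(2)})\Big),
\]
with $G$ the kernel in that formula (a multiple of $\norm{w}^{-2}$) and the external field strengths Taylor-expanded at the common point (the error terms giving less singular OPEs). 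Integrating by parts twice in each of $y_1,y_2$ moves all four derivatives onto $\log\norm{y_1-y_2}^2$; using $\partial_{y_1}\log\norm{y_1-y_2}^2=-\partial_{y_2}\log\norm{y_1-y_2}^2$ these become derivatives in $w=w^{(1)}-w^{(2)}$ and factor out of the $y$-integrals, leaving $\mathcal{D}^a_w\mathcal{D}^b_w\,I(w)$, where $I(w)=\int G(z_1)G(z_2)\log\norm{z_1-z_2+w}^2$ and $\mathcal{D}^a=F^a_{\dot\lambda\dot\mu}\partial^{\lambda\dot\lambda}\partial^{\mu\dot\mu}$ is the traceless-symmetric (in $\lambda,\mu$) part, the trace $\propto\eps^{\dot\lambda\dot\mu}\Lap$ being killed by the symmetric $F^a_{\dot\lambda\dot\mu}$. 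In Fourier space $\widehat G\propto k^{-2}$ and $\widehat{\log\norm{\cdot}^2}\propto k^{-4}$, so $\widehat I\propto k^{-8}$, i.e.\ $I(w)$ is a fixed multiple of $\norm{w}^4\log\norm{w}^2$ plus a polynomial; the logarithm in $\mathcal{D}^a\mathcal{D}^b I$ comes precisely from the term in which all four derivatives act on the polynomial factor $\norm{w}^4$, which $\mathcal{D}^a\mathcal{D}^b$ sends to a constant tensor, leaving $(\mathcal{D}^a\mathcal{D}^b\norm{w}^4)\log\norm{w}^2$, while derivatives hitting $\log\norm{w}^2$ produce only rational terms.

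Finally one assembles the constants and closes up the indices. Numerically the coefficient is the product of the two vertex factors, $\lambda_{\g}^2/(192\pi^2)$; the normalizations of $G$ and of $\ip{\rho\rho}$; the explicit value of the Fourier integral for $I$ and of the coefficient of $\log\norm{w}^2$ in $\mathcal{D}^a\mathcal{D}^b(\norm{w}^4\log\norm{w}^2)$; and the factor $2$ at each vertex (the choice of which field strength is external). For the index structure, the dotted indices of $F^a$ and $F^b$ are contracted with one another through the constant tensor $\mathcal{D}^a\mathcal{D}^b\norm{w}^4$ into the $4$-form $F^a\wedge F^b$, while the undotted $\lambda,\mu,\lambda',\mu'$ are tied by the $\eps$'s to $\alpha_1,\beta_1,\alpha_2,\beta_2$; as $B^a_{\alpha_1\beta_1}$ and $B^b_{\alpha_2\beta_2}$ are symmetric in their index pairs, the only $SU(2)_+$-invariant possibility is $\eps_{\alpha_1\beta_2}\eps_{\alpha_2\beta_1}+\eps_{\alpha_1\alpha_2}\eps_{\beta_1\beta_2}$, and evaluating the coefficient yields $\lambda_{\g}^2/\big((2\pi)^4\,3\cdot 2^5\big)$. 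I expect the main obstacle to be exactly this last bookkeeping --- keeping every factor of $2$, $\pi$ and $\i$ straight through the repeated integrations by parts, the Fourier evaluation of $I$, and the spinor contractions, so that the coefficient emerges exactly as stated; the structural steps (identifying the diagram, discarding subleading pieces, and seeing why a logarithm appears at all) are routine given the method of Section~\ref{sec:interlude} and the formulas quoted just above.
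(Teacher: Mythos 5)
Your setup coincides with the paper's: the same single $\rho$-exchange diagram, the same vertex (you use the $-\rho\,\op{tr}(F^2)$ form, the paper the equivalent $\tfrac12\d\rho\,\op{tr}(A\,\d A)$ form), the same $\log\norm{y-y'}^2$ propagator for $\rho$, the same $F\wedge\d A_{\alpha\beta}$ kernel, and the same dimensional/$SU(2)_+$ argument for why only the undifferentiated external field strengths can feed the logarithm. Where you genuinely diverge is in evaluating the resulting double integral. The paper first projects onto the Lorentz-invariant contraction of the dotted indices (the only contraction that can multiply $\log\norm{w}^2$) and then uses translation invariance of $\log\norm{w^{(34)}}^2$ to reshuffle the derivatives so that each Green's function is hit by $\tfrac12\Lap$; since $\Lap G=\delta$, both position integrals collapse and the propagator's $\log\norm{w^{(34)}}^2$ becomes $\log\norm{w^{(1)}-w^{(2)}}^2$ directly, with no convolution to compute. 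You instead compute the full convolution $I(w)=G*\log*G\propto\norm{w}^4\log\norm{w}^2+\text{poly}$ by Fourier transform and extract the log from $\mathcal D^a\mathcal D^b I$. Both routes are valid and give the same index structure; the paper's buys a completely finite, two-line localization, while yours is more systematic but forces you to confront the fact that $I(w)$ (and even $\mathcal D^a\mathcal D^b I$) is IR-divergent, so you must argue --- correctly, but it should be said explicitly --- that the IR regularization only pollutes the polynomial/rational part and leaves the coefficient of $\norm{w}^4\log\norm{w}^2$, hence of the OPE logarithm, unambiguous.

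Two small cautions on the bookkeeping you defer. First, you write $\d CS(A)=\op{Tr}(F^2)$ with the adjoint trace, whereas the paper normalizes $CS$ by the fundamental trace $\op{tr}(F^2)$; since the entire point of $\lambda_{\g}$ is to convert between $\op{Tr}(X^4)$ and $\op{tr}(X^2)^2$, using the wrong trace at the vertex would rescale the final coefficient, so this must be the fundamental trace. Second, your symmetry factor of $2$ per vertex is what cancels the $\tfrac12$ in the paper's vertex normalization; make sure you do not also double-count the overall exchange of the two vertices, which the paper absorbs into labelling them $w^{(3)},w^{(4)}$ and integrating over both.
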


Since we know that self-dual Yang-Mills with the axion field has no logarithmic OPEs, we conclude that this must be canceled by an OPE in just self-dual Yang-Mills with the opposite sign:
	\begin{equation}
B^a_{\alpha_1 \beta_1} (w^{(1)}) B^b_{\alpha_2 \beta_2}(w^{(2)} ) \sim -\frac{\lambda_{\g}^2}{(2 \pi)^4 3 \cdot 2^5   } \left(  \eps_{\alpha_1 \beta_2} \eps_{\alpha_2 \beta_1} + \eps_{\alpha_1 \alpha_2} \eps_{\beta_1 \beta_2} \right) F^a \wedge F^b  		 \log \norm{w^{(1)} - w^{(2)}}^2 + \text{ rational }
	\end{equation}

\begin{proof}
Let us compute the amplitude of the Feynman diagram in Figure \ref{Figure:BB_closedstring}.   We label the internal vertices in the diagram by $w^{(3)}$, $w^{(4)}$.  The propagator is $\frac{1}{2 \pi \i} \log \norm{w^{(34)}}^2$.  The vertices are $\lambda_{\g} C_{YM} \half \int \d \rho \op{tr} (A \d A)$ where
\begin{equation} 
	C_{YM} = \frac{1}{4 (2 \pi \i)^{3/2} \sqrt{3} }
\end{equation}
Therefore the amplitude us  
\begin{equation} 
	\lambda_{\g}^2 C_{YM}^2 \frac{1}{2 \pi \i}	\eps^{\alpha_1 \alpha_3} \eps^{\alpha_2 \alpha_4} \int_{w^{(3)}, w^{(4)} }\op{tr}( \d A_{\alpha_1 \alpha_2} (w^{(3)})  \d A^{bg} (w^{(3)} )) \op{tr}( \d A_{\alpha_3 \alpha_4}  (w^{(4)} ) \d A^{bg}(w^{(4)})) \log \norm{w^{(34)}}^2 
\end{equation}
In this expression, $A^{bg}$ denotes the background field, which we assume to be on-shell.  

This brings us to
\begin{equation} 
	\frac{\lambda_{\g}^2}{(2 \pi)^4 48 }	\int_{w^{(3)}, w^{(4)} } \int_{w^{(3)}, w^{(4)} }\left( \d A_{\alpha_1 \alpha_2} (w^{(3)})  \d A^{bg} (w^{(3)} ) \d A_{\alpha_3 \alpha_4}  (w^{(4)} ) \d A^{bg}(w^{(4)})\right) \log \norm{w^{(34)}}^2 
\end{equation}

We abuse notation slightly and write
\begin{equation} 
	F = \d A^{bg}. 
\end{equation}
The other terms in the field strength don't contribute. As, we are computing the OPE to quadratic order in the background field -- higher order terms are determined by gauge invariance.  The field strength $F$ has only anti-self dual components 
\begin{equation} 
	F = F^{\dot{\alpha} \dot{\beta}} \d w_{\alpha \dot{\alpha}} \d w_{\beta \dot{\beta}} \eps^{\alpha \beta}. 
\end{equation}
Using 
\begin{equation} 
	F \wedge \d A_{\alpha \beta} = 	 2 \d^4 x \eps_{\alpha \lambda} \eps_{\beta \mu} 	F_{\dot{\lambda} \dot{\mu}} \dpa{w_{\lambda \dot{\lambda}}} \dpa{w_{\mu \dot{\mu}} }  G(w)  
\end{equation}
we find that our amplitude is 
\begin{equation}
	\begin{split}
		\frac{4\lambda_{\g}^2 }{(2 \pi)^4 48}	\int_{w^{(3)}, w^{(4)}} F_{\dot{\alpha}_1 \dot{\beta}_1}(w^{(3)} ) \d^4 x^{(3)} \d^4 x^{(4)}  
	\dpa{w^{(3)}_{\alpha_1 \dot{\alpha}_1 }} \dpa{w^{(3)}_{\beta_1 \dot{\beta}_1 } } G(w^{(1)} - w^{(3)}) \\	
	F_{\dot{\alpha}_2 \dot{\beta}_2}(w^{(4)}) 	\dpa{w^{(4)}_{\alpha_2 \dot{\alpha}_2 }} \dpa{w^{(4)}_{\beta_2 \dot{\beta}_2 } }  G(w^{(2)} - w^{(4)})     \log \norm{w^{(34)}}^2
	\end{split}
\end{equation}
Logarithmic OPEs can only involve two derivatives in total applied to the background fields.  This means that we can re-express the amplitude by integration by parts, dropping any terms where derivatives apply to the curvature  $F$ of the background field, and we can take $F$ outside the integral.  

Also using the fact that $(\partial_{w^{(3)}_{\alpha \dot{\alpha}}} + \partial_{w^{(4)}_{\alpha \dot{\alpha}_i}} ) \log \norm{w^{(34)}}^2 = 0$, we can rewrite the amplitude as 
\begin{equation}
	\begin{split}
		F_{\dot{\alpha}_1 \dot{\beta}_1}	F_{\dot{\alpha}_2 \dot{\beta}_2} 		\frac{4\lambda_{\g}^2}{(2 \pi)^4 48 }	\int_{w^{(3)}, w^{(4)}} \d^4 x^{(3)} \d^4 x^{(4)}   \dpa{w^{(3)}_{\alpha_1 \dot{\alpha}_1}} \dpa{w^{(3)}_{\beta_2 \dot{\beta}_2 } }  G(w^{(1)} - w^{(3)}) \\ 
	\dpa{w^{(4)}_{\alpha_2 \dot{\alpha}_2 }} \dpa{w^{(4)}_{\beta_1 \dot{\beta}_1 } }  G(w^{(2)} - w^{(4)})      \log \norm{w^{(34)}}^2
	\end{split}
\end{equation}
If we write $I_{\dot{\alpha}_1 \dot{\beta}_1 \dot{\alpha}_2 \dot{\beta}_2 \alpha_1 \beta_1 \alpha_2 \beta_2}(w^{(1)} - w^{(2)})$ for the integral above, then we have shown so far that
\begin{equation} 
	B_{\alpha_1 \beta_1}(w^{(1)}) B_{\alpha_2 \beta_2}(w^{(3)}) =  I_{\dot{\alpha}_1 \dot{\beta}_1 \dot{\alpha}_2 \dot{\beta}_2 \alpha_1 \beta_1 \alpha_2 \beta_2}(w^{(1)} - w^{(2)})  F^{\dot{\alpha}_1 \dot{\beta}_1} F^{\dot{\alpha}_2 \dot{\beta}_2} .  
\end{equation}

Now, since any logarithmic terms must be Lorentz invariant, logarithmic terms in the integral above can occur only if the dotted indices are contracted by an $\eps$ tensor.   Note that there is only one way to contract the indices. Therefore, we have 
\begin{equation}
	\begin{split} 
		I_{\dot{\alpha}_1 \dot{\beta}_1 \dot{\alpha}_2 \dot{\beta}_2 \alpha_1 \beta_1 \alpha_2 \beta_2} (w^{(1)} - w^{(2)})& 
		\\
		=\frac{1}{16}   \eps_{\dot{\alpha}_1 \dot{\beta}_2}  \eps_{\dot{\alpha}_2 \dot{\beta}_1}&  \eps^{\dot{\gamma}_1 \dot{\delta}_2}    \eps^{\dot{\gamma}_2 \dot{\delta}_1} \eps_{\alpha \beta_2} \eps_{\alpha_2 \beta_1} \eps^{\gamma_1 \delta_2} \eps^{\gamma_2 \delta_1}       I_{\dot{\gamma}_1 \dot{\delta}_1 \dot{\gamma}_2 \dot{\delta}_2 \alpha_1 \beta_1 \alpha_2 \beta_2} (w^{(1)} - w^{(2)})
	\\	&+ \text{ rational } 	 
	\end{split}	
\end{equation}

Therefore  
\begin{multline} 
	 		B_{\alpha_1 \beta_1} (w^{(1)}) B_{\alpha_2 \beta_2}(w^{(2)} )
			\\ \sim \eps_{\alpha_1 \beta_2} \eps_{\alpha_2 \beta_1} F^{\dot{\alpha}_1 \dot{\beta}_1} F^{\dot{\alpha}_2 \dot{\beta}_2}  		 \frac{\lambda_{\g}^2}{(2 \pi)^4 3 \cdot 2^6   }
			 \int_{w^{(3)}, w^{(4)}}  \eps_{\gamma_1 \delta_2} \eps_{\dot{\gamma}_1 \dot{\delta}_2} \d^4 x^{(3)} \d^4 x^{(4)}  
		\dpa{w^{(3)}_{\gamma_1 \dot{\gamma}_1}} \dpa{w^{(3)}_{\delta_2 \dot{\delta}_2 } }  G(w^{(1)} - w^{(3)}) \\ 
		\eps_{\gamma_2\delta_1} \eps_{\dot{\gamma}_2 \dot{\delta}_1}	\dpa{w^{(4)}_{\gamma_2 \dot{\gamma}_2 }} \dpa{w^{(4)}_{\delta_1 \dot{\delta}_1 } }  G(w^{(2)} - w^{(4)})      \log \norm{w^{(34)}}^2  + \text{ rational }
\end{multline}
(We are implicitly symmetrizing with respect to the $\alpha_1 \beta_1$ and $\alpha_2 \beta_2$ indices on the right, to keep the expression short).

Here we see that the differential operators multiplying $G(w^{(1)} -w^{(3)})$ and $G(w^{(2)} - w^{(4)})$ are $\half \Lap$.  Since we have normalized $G$ so that $\d^4 x\Lap G(w) = \delta_{w = 0}$, we find (re-introducing colour indices and using our normalization of the scalar operator $F^a \wedge F^b$)  
\begin{equation} 
	B^a_{\alpha_1 \beta_1} (w^{(1)}) B^b_{\alpha_2 \beta_2}(w^{(2)} ) \sim \frac{\lambda_{\g}^2}{(2 \pi)^4 3 \cdot 2^5   } \left(  \eps_{\alpha_1 \beta_2} \eps_{\alpha_2 \beta_1} + \eps_{\alpha_1 \alpha_2} \eps_{\beta_1 \beta_2} \right) F^a \wedge F^b  		 \log \norm{w^{(1)} - w^{(2)}}^2 + \text{ rational }
\end{equation}
\end{proof}

We have computed the logarithmic OPE of a non-gauge invariant operator in self-dual Yang-Mills, using the fact that when we introduce the field $\rho$ it must cancel logarithmic OPEs.  This will allow us to normalize the logarithmic OPE of the stress-energy tensor. Recall that  the (trace-free) stress-energy tensor of self-dual Yang-Mills is, up to normalization,
\begin{equation} 
	T_{\alpha \dot{\alpha} \beta \dot{\beta}} = B^a_{\alpha \beta} F^a_{\dot{\alpha} \dot{\beta}}.	 
\end{equation}
In self-dual Yang-Mills, the operator $F$ has no non-trivial OPEs with itself, and the OPE with $B$ is rational.  Therefore, the only way to get a logarithmic term in the $TT$ OPE is to contract the two copies of the operator $B$. This gives us
\begin{equation}
	\begin{split} 
		 	T_{\alpha \dot{\alpha} \beta \dot{\beta}}(0) 	T_{\gamma \dot{\gamma} \delta \dot{\delta}}(w)  \sim
	\frac{\lambda_{\g}^2}{(2 \pi)^4 3 \cdot 2^4   }   \eps_{\alpha \gamma} \eps_{\beta \delta}F^a_{\dot{\alpha} \dot{\beta}}F^b_{\dot{\gamma} \dot{\delta}}   (F^a \wedge F^b)  		 \log \norm{w}^2 \\ 
		+ \text{ rational }
\end{split}
\end{equation}

\subsection{Scaling of the operator $B^2$ in self-dual Yang-Mills}
Self-dual Yang-Mills is a non-unitary CFT.  Like in any CFT, scaling on $\R^4$ is a symmetry of the theory, and so acts on the Hilbert space (or the space of local operators).  In a unitary CFT, this symmetry is a diagonalizable operator on the Hilbert space, whose spectrum gives the anomalous dimensions of operators.

In a non-unitary CFT like self-dual Yang-Mills, there is no reason for this matrix to be diagonal, and indeed it is not.  The quantum scaling operator turns out to be a sum of the classical scaling operator, plus a strictly upper-triangular matrix.  To be more precise, a local operator is a trace of a product of derivatives of $B$ and $F$.  The classical scaling operator is diagonal with eigenvalue the number of derivatives plus twice the number of $B$'s and $F$'s (as these have dimension $2$).  The quantum correction is triangular in this basis, and always decreases the number of $B$'s. 

We will not compute all the terms in the scaling operator.  Instead, we will compute the quantum correction to the scaling of the operator $B^2$. 

In general, quantum corrections to scaling of $B^2$ arise when there are logarithmic terms in the UV cut-off for the renormalization of this composite operator.

In self-dual Yang-Mills plus our axion field, the twistor space origin guarantees that there are no quantum corrections to the classical answer.  Therefore any quantum contribution to the scaling of the operator $B^2$ can be canceled by an exchange of an axion.  This means that we can compute the quantum correction to scaling by studying logarithmic counter-terms that occur in the renormalization of the operator $B^2$, in diagrams exchanging an axion.

The calculation of the logarithmic OPE given above can be repurposed easily to do this. The relevant diagram is in Figure \ref{Figure:Scaling_B}.  
\begin{figure}

\includegraphics[scale=0.22]{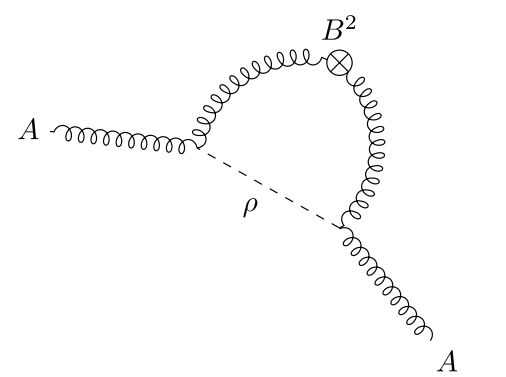}

	\caption{This diagram has a logarithmic UV divergence, contributing to the RG flow of the operator $B^2$.  
	\label{Figure:Scaling_B}}
\end{figure}
We can compute this by performing a point-splitting regularization, replacing $B^2(0)$ by 
\begin{equation} 
	B^2_{reg} = 	\frac{1}{16}  B_{\alpha {\beta}} (0) B_{{\gamma} {\mu}} (\epsilon) \eps^{{\alpha} {\gamma}} \eps^{{\beta}{\mu}}   
\end{equation}j
We already know the logarithmic term in the  OPE of $B$ with itself, giving
\begin{equation} 
	B^2_{reg} =   -\frac{\lambda_{\g}^2}{(2 \pi)^4 3 \cdot 2^4   }    F \wedge F  		 \log \eps^2  + \text{ rational }
\end{equation}
That is, renormalization of the operator $B^2$ involves a two-loop logarithmic counter-term. The matrix of anomalous dimensions involves applying the operator $\eps \partial_{\eps}$. The rational terms do not contribute, as they are canceled by the classical dimension of all operators.  The quantum contribution tells us that the flow of $B^2$ is 
\begin{equation} 
	B^2 \mapsto B^2 - \frac{\lambda_{\g}^2}{(2 \pi)^4 3 \cdot 2^3} F^2  
\end{equation}

\subsection{Computing the two-point function of the operator $B^2$}
The fact that the operator $B^2$ flows by adding on a multiple of $F^2$ gives us surprisingly strong constraints on the correlation functions of the operator $B^2$ with itself.  

Let us start by computing the logarithmic term in the two-point function $\ip{B^2(0) B^2(x)}$.   This correlation function only includes $3$-loop contributions, so \emph{a priori} should be highly non-trivial. 

One way to compute this would be to use the fact that any logarithmic terms in the self-dual YM correlation functions must be canceled by the exchange of the axion field $\rho$.  There are two diagrams involving a $\rho$ exchange which can contribute to this correlation function, depicted in Figure \ref{Figure:BBcorrelator}.  However, one of them is rational, so that the logarithmic correlation function can be computed by a single diagram.
\begin{figure}

\includegraphics[scale=0.25]{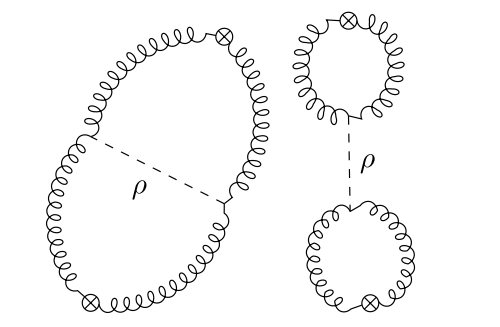}

	\caption{The two diagrams that must cancel any logarithms in the two-point function of $B^2$. The second diagram turns out to be a rational function, so does not play a role.  \label{Figure:BBcorrelator}}

\end{figure}

Instead of using this method, we will use our knowledge of the RG flow of the operator $B^2$.  We have
\begin{equation} 
	(\sum w_{\alpha \dot{\alpha}}\partial_{w_{\alpha \dot{\alpha}}} - 8) \ip{B^2(0) B^2(w)} =  - \frac{\lambda_{\g}^2}{(2 \pi)^4 3 \cdot 2^3}   \ip{ F^2(0), B^2(w) }   
\end{equation}
Now, 
\begin{equation} 
	\begin{split} 
		\ip{F^2(0), B^2(w)} &=  2^{-7} \ip{ F_{\dot{\alpha}_1 \dot{\beta}_1}(0)    B_{\alpha_1 \beta_1}(w) } \ip{ F_{\dot{\alpha}_2 \dot{\beta}_2}(0) B_{\alpha_2 \beta_2}(w) } \eps^{\dot{\alpha}_1 \dot{\alpha}_2} \eps^{\dot{\beta}_1 \dot{\beta}_2} \eps^{\alpha_1 \alpha_2} \eps^{\beta_1 \beta_2}   \\
		&= - \frac{1}{2^{11} (2 \pi)^4 \norm{w}^{8} }.	
	\end{split}
\end{equation}
Therefore,
\begin{equation} 
	(\sum w_{\alpha \dot{\alpha}}\partial_{w_{\alpha \dot{\alpha}}} - 8) \ip{B^2(0) B^2(w)} =   \frac{\lambda_{\g}^2}{(2 \pi)^8  3 \cdot 2^{14} \norm{w}^8 }    
\end{equation}
It follows that
\begin{equation} 
	\ip{B^2(0), B^2(w)} =   \frac{\lambda_{\g}^2}{(2 \pi)^8  3 \cdot 2^{14} \norm{w}^8 }   \log \norm{w} + \text{ rational }  
\end{equation}

\section{The Losev-Moore-Nekrasov-Shatashvili scheme for one-loop quantization of $WZW_4$} 
\label{section_nonlocal_vertices}

When we computed the logarithmic OPE in self-dual Yang-Mills, a key role was taken by the non-local vertex which one needs to adjoin to the theory on twistor space to cancel the gauge anomaly. This vertex is
\begin{equation} 
	\frac{1}{(2 \pi)^4 64} 	\int_{x \in \R^4} \int_{z,z' \in \CP^1} \mc{A}(x,z) \d \mc{A}(x,z)\frac{\d z \d z'}{(z - z')^2}  \mc{A}(x,z') \d \mc{A}(x,z'). \label{eqn_nonlocal_general} 
\end{equation}
However, for the twistor presentation of $WZW_4$, there is more than one choice of non-local vertex that cancels the anomaly. This is because $\mc{A}$ is required to have a zero at $z = 0$, $z = \infty$, so there is more freedom in the choice of meromorphic volume form $(z-z')^{-2} \d z \d z'$.

The most general choice of non-local vertex is of the form
\begin{equation} 
	\int_{x \in \R^4} \int_{z,z' \in \CP^1} \mc{A}(x,z) \d \mc{A}(x,z)\omega(z,z')  \mc{A}(x,z') \d \mc{A}(x,z'). 
\end{equation}
	where $\omega(z,z')$ is a meromorphic volume form on $\CP^1 \times \CP^1$, with the following properties:
\begin{enumerate}
	\item It is anti-symmetric under the exchange $z \mapsto z'$. The anti-symmetry arises because we are symmetrizing the Lie algebra indices, and 
	\item It has second-order poles on the diagonal $z=z'$, whose polar part is determined by anomaly cancellation.
	\item It can have second-order poles when $z,z'$ is zero or infinity.  These cancel the second-order zeroes in $\mc{A} \partial \mc{A}$.
	\item Since $WZW_4$ is invariant under the $U(1)$ rotating $z$, we ask that $\omega(z,z')$ also has this feature.
\end{enumerate}
Properties (2) and (3) means that $\omega(z,z')$ is well-defined up to the addition of an expression like
\begin{equation} 
	\alpha(z) \beta(z')  
\end{equation}
where $\alpha$, $\beta$ are meromorphic $1$-forms on $\CP^1$ with second-order poles at $0,\infty$.  There are three such meromorphic $1$-forms, namely $\d z$, $z^{-1} \d z$, $z^{-2} \d z$.  Anti-symmetry under the exchange of $z,z'$ and $U(1)$ invariance means that the most general two-form appearing in the non-local vertex is of the form
\begin{equation} 
	\frac{\d z \d z'}{(z-z')^2} + c_1 \d z \d z'( z^{-2} + (z')^{-2}) + c_2 \frac{\d z \d z'}{z z'}.
\end{equation}
Here $c_1,c_2$ are coupling constants.

This tells us that we should expect two potential one-loop counter-terms in $WZW_4$, with coupling constants $c_1,c_2$.  This is indeed the case. We can write them out to quartic order in the field $\phi = \log \sigma$.  They are 
\begin{equation} 
	\begin{split}
		\op{Tr}(\t^{(a} \dots \t^{d)}) \omega_{i \br{l}} \omega_{k \br{j}}   \int_{\R^4} \phi_a \phi_b \partial_{u_i} \partial_{\ubar_{\br{j}}} \phi^c \partial_{u_k} \partial_{\ubar_{\br{l}}} \phi_d\\ 
\op{Tr}(\t^{(a} \dots \t^{d)})  		 \omega_{i \br{k}} \omega_{j \br{l}}   	 \int_{\R^4} \phi_a \phi_b \partial_{u_i} \partial_{u_j}\phi^c   \partial_{\ubar_{\br{k}}}  \partial_{\ubar_{\br{l}}} \phi_d  
	\end{split}
\end{equation}
where $\omega$ is the K\"ahler form \footnote{There are of course other ways to write these expressions using integration by parts.  A certain linear combination of these terms is $SO(4)$ invariant, and is the four-derivative term added to the usual $\sigma$-model Lagrangian in the in the Skyrme model. }.


Losev, Moore, Nekrasov and Shatashvili   \cite{Losev:1995cr} showed that the one-loop counter-terms are uniquely fixed by requiring that we do not break certain symmetries present at tree-level. In this section, we will prove their result by twistor methods. We will show that the only way to preserve these symmetries is to set the coupling constants $c_1,c_2= 0$.  

This will allow us, in the next section, to show that the Green-Schwarz mechanism canceling logarithmic OPEs holds for $WZW_4$ with the group $SO(8)$.

\subsection{Infinite dimensional symmetries in $WZW_4$ from the twistor perspective}
In \cite{Losev:1995cr}, it was shown that $WZW_4$ has a very large symmetry group at the classical level.  The field $\sigma : \R^4 \to G$ can be multiplied on the left by a holomorphic function $F(u) : \R^4 \to G$, and on the right by an anti-holomorphic function $\til{F}(\br{u}) : \R^4 \to G$, and the equations of motion are preserved.

This, of course, is very similar to the large symmetry group of the familiar two-dimensional WZW model, and explains the name $WZW_4$. 

This symmetry group is easy to understand from the twistor perspective. A solution to the equations of motion of the theory on  twistor space  is a holomorphic $G$-bundle on $\PT$, trivialized at $z = 0$ and $z = \infty$.  At $z = 0,\infty$ we can multiply the trivialization by any holomorphic map from $\C^2 \to G$, where $\C^2$ is the fibre at $z = 0$ or $\infty$ of the fibration $\PT \to \CP^1$.  From the point of view of $\R^4$, changing the trivialization at $z = 0$ changes the map $\sigma : \R^4 \to G$ by left multiplying by a holomorphic map, and the change of trivialization at $\infty$ modifies it by right multiplication by an anti-holomorphic map.

We would like these symmetries to persist to the quantum level. To understand  quantum symmetries, it is better to think of coupling to a background gauge field in a gauge-invariant manner. To give an action of the group of  holomorphic maps from $\R^4 \to G$ is the same as to couple to a background gauge field
\begin{equation} 
	A_L \in \Omega^{0,1}(\R^4, \mf{g}).
\end{equation}
This gauge field must satisfy $F^{0,2}(A_L) = 0$.  In $WZW_4$, $A_L$ is coupled to the field $\sigma$ by
\begin{equation} 
	\int \omega \op{Tr} ( A_L \sigma^{-1} \partial \sigma)  . 
\end{equation}
The equations of motion of $WZW_4$ are $\omega \dbar (\sigma^{-1} \partial \sigma) = 0$, and these imply that this is invariant under gauge transformations of $A_L$. 

Similarly, the to give an action of the group of anti-holomorphic maps is to couple to a field $A_R \in \Omega^{1,0}(\R^4,\mf{g})$ in a gauge invariant way. The coupling is
\begin{equation} 
	\int \omega \op{Tr} (A_R (\dbar \sigma) \sigma^{-1} )  
\end{equation}
and again, this is gauge invariant by the equivalent formulation of the equations of motion $\omega \partial ( (\dbar \sigma)\sigma^{-1} ) = 0$. 

From the point of view of twistor space, coupling to the background gauge fields $A_L$, $A_R$ is achieved by modifying the boundary conditions at $z = 0,\infty$.  We ask that the holomorphic Chern-Simons gauge field $\mc{A}$ satisfies
\begin{equation} 
	\mc{A}_{z = 0} = A_L \ \ \mc{A}_{z = \infty} = A_R. 
\end{equation}
This makes sense, because $A_L$ is a $(0,1)$ form in the complex structure at $z = 0$, and $A_R$ is a $(0,1)$ form in that at $z = \infty$.  

The holomorphic Chern-Simons action is gauge-invariant when we impose these modified boundary conditions. Indeed, the gauge variation of the holomorphic Chern-Simons action is
\begin{equation} 
	\int \d v_1 \d v_2 \frac{\d z}{z^2} \dbar ( \chi F(\mc{A})  
\end{equation}
where $\chi$ is the gauge parameter, which we assume to vanish at $z = 0$, $z = \infty$.  Since, by assumption, $F(\mc{A})$ also vanishes at $z = 0$, $z = \infty$, integration by parts tells us that this expression vanishes.

The action is also covariant under gauge transformations of the fields $A_L$, $A_R$ at $z = 0$, $z = \infty$.  Consider the effect of a gauge transformation 
\begin{equation} 
	A_L \mapsto A_L + \dbar\psi + [A_L,\psi] 
\end{equation}
We need to show that holomorphic Chern-Simons theory does not change when we modify the boundary conditions in this way. To see this, let us extend $\psi : \R^4 \to \mf{g}$ to a map $\chi : \PT \to \mf{g}$,  with $\chi = \psi$ at $z = 0$.  The choice of extension does not matter, because any two extensions differ by one which vanishes at $z = 0$, which is therefore a gauge transformation in the bulk.  

We can transform the bulk field $\mc{A}$ by $\chi$. Under this transformation, the Chern-Simons action changes to
\begin{equation} 
	\int \d v_1 \d v_2 \frac{\d z}{z^2} \dbar (\chi F(\mc{A})).	 
\end{equation}
Integration by parts, together with the fact that $F(\mc{A}) = 0$ at $z = 0$, tells us that this is
\begin{equation} 
	\int_{z = 0} \d v_1 \d v_2 \eta \partial_z F(\mc{A}) . 
\end{equation}
The equation $\partial_z F(\mc{A}) = 0$ holds by the equations of motion of the bulk theory. 

We conclude that we can modify the boundary conditions as above, in a way both invariant under bulk gauge transformations and compatible with boundary gauge transformations. This encodes the infinite dimensional symmetry of $WZW_4$ in the twistor space picture.

\subsection{Aside: $w_{\infty} \oplus \br{w}_{\infty}$ symmetries of the closed-string sector}
As an aside, we note that a variant of this analysis applies to the closed-string sector.  In the closed-string sector, our fundamental field is a Beltrami differential $\mu$ on $\PT$, which is divergence free and which is such that the $(2,1)$-form
\begin{equation} 
	\mu \vee \d v_1 \d v_2 \frac{\d z}{z^2} 
\end{equation}
has no poles at $z = 0$, $z = \infty$.  This means that $\mu$ has a second-order zero at $z = 0$, $z = \infty$.

We can modify the boundary conditions by asking that $\mu$ approaches some fixed Beltrami differential at $0$,$\infty$, instead of tending to zero. The simplest modification is to choose a Beltrami differential
\begin{equation} 
	\mu_L = \mu^{\br{i}}_{L,j} \d \vbar_i \partial_{v_j} 
\end{equation}
at $0$ (and similarly $\mu_R$ at $\infty$).  We ask that this is divergence free,
\begin{equation} 
	\partial_{v_j}  \mu^{\br{i}}_{L,j} = 0. 
\end{equation}
We also ask that $\mu_L$ satisfies the integrability condition
\begin{equation} 
	\dbar \mu_L + \tfrac{1}{2}[\mu_L,\mu_L] =0, 
\end{equation}
so that it defines an integrable holomorphic-symplectic structure on $\C^2$. We will similarly choose a Beltrami differential $\mu_R$ at $z = \infty$. 

The modified boundary conditions are that 
\begin{equation} 
	\begin{split}
		\mu &= \mu_L + O(z^2)\\  
		\mu &= \mu_R + O(z^{-2}) 
	\end{split}
\end{equation}
Just as in Chern-Simons theory, the closed-string theory is gauge invariant with these modified boundary conditions. Further, the coupled system is invariant under the gauge transformations of the boundary Beltrami differential $\mu_L$, $\mu_R$ by divergence-free vector fields on the boundary:
\begin{equation} 
	\mu_L \mapsto \dbar V - [V,\mu] 
\end{equation}
where $V = V_i \partial_{v_i}$ is a vector field on the boundary satisfying $\partial_i V_i = 0$. 

This tells us that the four-dimensional theory corresponding to the closed-string fields can be coupled to background Beltrami differentials in a gauge invariant way.  

As in  Chern-Simons theory and $WZW_4$, these couplings reflect infinite dimensional symmetries of the four-dimensional model.  The symmetries are those gauge transformations of $\mu_L$, $\mu_R$ which preserve the zero field configuration $\mu_L = 0$, $\mu_R = 0$.  These gauge transformations are holomorphic (or anti-holomorphic) divergence-free vector fields on $\C^2$.  Thus, they form a copy of $w_{\infty}$, at $z = 0$, and its complex conjugate $\br{w}_{\infty}$, at $z = \infty$.   

It would be very interesting to explicitly verify that the four-dimensional Lagrangian associated to the closed-string fields has this $w_{\infty} \oplus \br{w}_{\infty}$ symmetry.  Probably this will fix the quartic and higher terms in the self-interaction of the K\"ahler potential that are not determined in this paper.  

\subsection{Constraining the non-local vertex using the infinite-dimensional symmetries}
We want to show that the infinite-dimensional symmetries of $WZW_4$ are violated if we add  to the non-local vertex the terms 
\begin{equation} 
	\int_{x \in \R^4} \int_{z,z' \in \CP^1} \mc{A}(x,z) \d \mc{A}(x,z)\omega(z,z')  \mc{A}(x,z') \d \mc{A}(x,z') \label{eqn_nonlocal_general2} 
\end{equation}
where $\omega(z,z')$ is a linear combination
\begin{equation} 
	c_1 \d z \d z'( z^{-2} + (z')^{-2}) + c_2 \frac{\d z \d z'}{z z'} 
\end{equation}
To see this, let us study gauge invariance of the Lagrangian where we as before ask that $\mc{A} = A_L$ at $z = 0$, $z = \infty$. 

Let us vary $\mc{A}$ by a gauge transformation $\mc{A} \mapsto \mc{A} + \dbar \chi + [\mc{A},\chi]$. We don't require that $\chi(0) = 0$, instead we also vary $A_L \mapsto A_L + \dbar \chi(0) + [A_L,\chi(0)]$.  The terms in equation \eqref{eqn_nonlocal_general2} fail to be gauge invariant under such a transformation, because they will have a boundary term at $z = 0$ (and at $z = \infty$). We will focus on the boundary terms at $z = 0$, as those at $z = \infty$ are the same.

The gauge variation of the non-local vertex associated to $\d z \d z' (zz')^{-1}$ is
\begin{equation}
	\int_{u,z,z'} \dbar \chi(z) \partial \mc{A}(z) \frac{\d z \d z'}{zz'} \mc{A}(z') \partial \mc{A}(z').  	
\end{equation}
Integration by parts shows that this is
\begin{equation}
	\int_{u,z'}  \chi(0) \partial A_L \frac{\d z'}{z'} \mc{A}(z') \partial \mc{A}(z').  	
\end{equation}
This expression is clearly non-zero.

Similarly, if we use the two-form $\d z \d z'(z^{-2} + (z')^{-2})$, we get
\begin{equation}
	\int_{u,z,z'} \dbar \chi(z) \partial \mc{A}(z) \d z \d z'(z^{-2} + (z')^{-2})  \mc{A}(z') \partial \mc{A}(z').  	
\end{equation}
Integration by parts brings this to
\begin{equation}
	\begin{split}
	\int_{u,z,z'} \partial_z (\chi(z) \partial \mc{A}(z))_{z = 0}   \d z' \mc{A}(z') \partial \mc{A}(z') \\
		=\int_{u,z,z'} \chi(0) \partial_z \partial \mc{A}(0) \d z'  \mc{A}(z') \partial \mc{A}(z') + \int_{u,z,z'}\partial_z \chi(0)  \partial A_L(0)  \d z' \mc{A}(z') \partial \mc{A}(z')  
	\end{split}
\end{equation}
The last term, involving $\partial_z \chi(0)$, tells us that the action is not gauge invariant even under purely bulk gauge transformations, which vanish at $z = 0$.

We conclude that the only non-local vertex we can use, which cancels the anomaly and is compatible with the infinite-dimensional symmetries of the model, is that associated to the two-form $\frac{\d z \d z'}{(z-z')^2}$. 

\section{The Green-Schwarz mechanism for $WZW_4$}
\label{sec:wzw4_gs}
We have  shown that the non-local vertex canceling the anomaly on twistor space must be the same one used in self-dual Yang-Mills.  This allows us to compute the two-loop OPEs of the operator $\phi = \log \sigma$.

We will show the following.
\begin{proposition}
There is a two-loop logarithmic OPE	
	\begin{equation} 
		\phi^a(0) \phi^b(u) \sim \log \norm{u} (\partial \dbar \phi^c)(0) \wedge ( \partial \dbar \phi^d) (0) \op{Tr}( \t_{(a} \dots t_{d)}). 
	\end{equation}
	\label{prop:log_wzw}	
\end{proposition}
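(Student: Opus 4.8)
The plan is to compute the two-loop OPE of $\phi = \log \sigma$ in $WZW_4$ by lifting everything to twistor space, exactly as was done for self-dual Yang-Mills in section \ref{section_spacetime_GS}. The point is that, having shown in the previous section that the non-local anomaly-cancelling vertex must be the one built from $\frac{\d z\,\d z'}{(z-z')^2}$, we know the full one-loop-corrected twistor-space Lagrangian for $WZW_4$, and hence the logarithmic part of any OPE is computed by a single class of diagrams: a diagram where the two local operators representing $\phi^a(0)$ and $\phi^b(u)$ are each connected by a propagator into the two ``halves'' of the non-local quartic vertex $\int_{x}\int_{z,z'}\mc{A}\,\d\mc{A}\,\frac{\d z\,\d z'}{(z-z')^2}\,\mc{A}\,\d\mc{A}$, with two external lines carrying on-shell background fields. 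This is the analogue of Figure \ref{figure_twistor_ope_T}.

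First I would give the twistor-space representative of the local operator $\phi^a(0)$. In the $WZW_4$ presentation, the $(0,1)$-form $\mc{A}$ on each twistor line $\CP^1_x$ encodes a holomorphic $G$-bundle trivialized at $z=0,\infty$, and the boundary value $\mc{A}_{z=0}$ is essentially $\sigma^{-1}\partial\sigma$; to linear order in $\phi$, $\mc{A}$ restricted near $z=0$ is $\pi^{0,1}\d\ubar_{\bar i}\,\partial_{u_i}\phi^a$ and the operator $\phi^a(0)$ corresponds to integrating the appropriate component over $\CP^1_0$ (as in the explicit representative $\pi^{0,1}_z J^{0,1}=\frac{1}{1+\abs z^2}\d\vbar_i J_{\ubar_i}$ used in section \ref{section_4d_computation}). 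I would then solve, in an axial gauge, for the field $\mc{A}^{(1)}$ sourced by $\phi^a(0)$ and $\mc{A}^{(3)}$ sourced by $\phi^b(u)$ — these are delta-function-supported on the relevant twistor lines, just like the expressions $\mc{A}^{(1)},\mc{A}^{(3)}$ appearing in section \ref{section_spacetime_GS}. Plugging $\mc{A}^{(1)}$, $\mc{A}^{(3)}$ and the on-shell background $\mc{A}$ into the non-local vertex gives an integral over a third space-time point $u^{(2)}$ and over $z,z'$; contour-integrating in $z,z'$ against $\frac{\d z\,\d z'}{(z-z')^2}$ (picking up only the $\d z$, $\d z'$ pieces) reduces the $z$-dependence to residues, leaving a four-dimensional convolution integral. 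By the general analysis of section \ref{sec:interlude}, the only source of a branch cut is the convolution of the scalar Green's function $\norm{x}^{-2}$ with itself, producing $\log\norm{u}$; the symmetry of the problem under $U(2)$ (and the adjoint action) forces the accompanying operator to be quadratic in the background field and of the unique allowed form $(\partial\dbar\phi^c)(0)\wedge(\partial\dbar\phi^d)(0)\,\op{Tr}(\t_{(a}\cdots\t_{d)})$, since the only $U(2)$-invariant way to soak up the two space-time indices against the K\"ahler form is through $\partial\dbar$. This establishes the stated form of the OPE.

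The main obstacle is bookkeeping: correctly tracking the on-shell reduction of the background holomorphic Chern-Simons field $\mc{A}$ to the four-dimensional current $J=\sigma^{-1}\d\sigma$, and then keeping the quadratic-in-$\phi$ terms while using the $WZW_4$ equations of motion ($\partial_{u_i}J_{\ubar_i}=0$, equivalently $\dbar(\sigma^{-1}\partial\sigma)=0$ contracted with $\omega$) to simplify. As in Proposition \ref{prop:ope_sdym}, I would not chase the overall numerical constant through this argument — that is exactly what the Green-Schwarz consistency condition (next, comparing to an axion/closed-string exchange, as in section \ref{sec:wzw4_gs} and Figure \ref{Figure_axion_exchange}) is for — but would content myself with showing the logarithm is present and has the asserted tensor structure. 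A secondary technical point to be careful about is that $\phi=\log\sigma$ is only defined in perturbation theory around the identity, so all manipulations are understood as formal power series in $\phi$; this is harmless since we only need the leading quadratic term in the background field, higher orders being fixed by the infinite-dimensional (gauge) symmetry established in the previous section.
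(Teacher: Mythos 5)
Your proposal is correct and follows essentially the same route as the paper: the paper represents $\phi^a(0)$ on twistor space as $\int_{\CP^1}\tfrac{\d z}{z}\mc{A}$, notes that the sourced fields in axial gauge are literally identical to those used for the operator $B_{12}$ in self-dual Yang--Mills, and then simply imports the result of Proposition \ref{prop:ope_sdym} (the $\log\norm{u}\,F(A)\wedge F(A)$ term), translating it via $A=J^{0,1}$ so that $F(A)$ becomes $\partial\dbar\phi$ at leading order in $\phi$. Your closing $U(2)$-invariance argument for the tensor structure is a harmless shortcut for what the paper reads off directly from $F(A)^2$, and your decision not to chase the overall constant matches the paper's treatment.
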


The first step in the proof is to calculate the field sourced by the operator $\phi(0)$. The twistor space representation of this operator is 
\begin{equation} 
	\phi(0) + O(\phi^2) = \int_{\CP^1} \frac{\d z}{z} \mc{A}. 
\end{equation}
(We are only interested in the lowest-order logarithmic terms in the OPE, when expanded as a series in $\phi$; thus adding higher-order expressions to our operator will not change the result). 

As in our discussion in the case of self-dual Yang-Mills,  the field sourced by the operator placed at $u,\ubar$ is  a field $\mc{A}$ which satisfies
\begin{equation} 
	\dbar \mc{A} = z \delta_{v_1 = u_1 + z \ubar_2} \delta_{v_2 = u_2 - z \ubar_1}.  
\end{equation}
We will place one operator at $u^{(1)}$, and one at $u^{(3)}$, and solve the source equations in an axial gauge:
\begin{equation}
\begin{split}
\mc{A}^{(3)} &= z \delta_{v_1 = u_1^{(3)} + z \ubar_2^{(3)} } \frac{1}{ v_2 - u_2^{(3)} + z \ubar_1{(3)}  }\\
\mc{A}^{(1)} &=  z\delta_{v_2 = u_2^{(1)} - z \ubar_1^{(1)} } \frac{1}{ v_1 - u_1^{(1)} - z \ubar_1{(1)}  }\\
\end{split}
\end{equation}
These are the same expressions used in the case of self-dual Yang-Mills.  The OPE calculation is then the same. In the self-dual Yang-Mills case, we found the OPE of the non-gauge invariant operator $B_{12}$ was
\begin{equation} 
	\log \norm{u^{(12)}} \left( F(A)^2\right) 
\end{equation}
We can immediately transform this into the result for $WZW_4$, noting that the non-local vertex for $WZW_4$ is the same as that for self-dual Yang-Mills upon setting $A = J^{0,1}$. This leads to the result stated in the above proposition.

\subsection{Canceling the log OPE with closed-string fields}
To show that the introduction of the closed-string fields cancels this anomaly, we will follow the analysis used for self-dual Yang-Mills. There, the field $\rho$ is coupled to the gauge field by $\int \rho F(A)^2$, and this leads to the cancellation of the log OPE.

To show that this works for $WZW_4$ as well, it suffices to show that the coupling between $\rho$ and the fields of $WZW_4$ are the same as those between self-dual Yang-Mills and $\rho$, upon setting $A = J^{0,1}$.  
The coupling is, as in \eqref{open_closed1}
\begin{equation} 
	\int \op{Tr}( J^{0,1}\partial J^{0,1}) \partial \rho. 
\end{equation}
This is the same as $2 \int CS(J^{0,1}) \d \rho$. Indeed, the cubic term in the Chern-Simons $3$-form drops out because we have a $(0,1)$ form, and the quadratic term is a form of type $(1,2)$. As such, wedging it with $\dbar \rho$ gives zero, so we can replace $\d \rho$ by $\partial \rho$.

The field sourced by $\phi$ in $WZW_4$ is, when turned into a field of self-dual Yang-Mills, the same as the field 
sourced by $B_{12}$. This tells us that the cancellation we computed for self-dual Yang-Mills applies here without any change.

\subsection{Potential cancellation of other logarithmic OPEs}
We have seen that one of the terms in the usual Green-Schwarz mechanism on twistor space leads to a Green-Schwarz like cancellation of logarithmic OPEs.  There are two other terms as part of the Green-Schwarz cancellation on twistor space.  These two other terms were not calculated explicitly in \cite{1905.09269}, rather it was shown by an inductive argument that they can be canceled by some choice of local coupling between the open and closed string fields.

These anomalies, and their cancellation, are sketched in Figure \ref{Figure:GS_other_terms}. 
\begin{figure}
\includegraphics[scale=0.25]{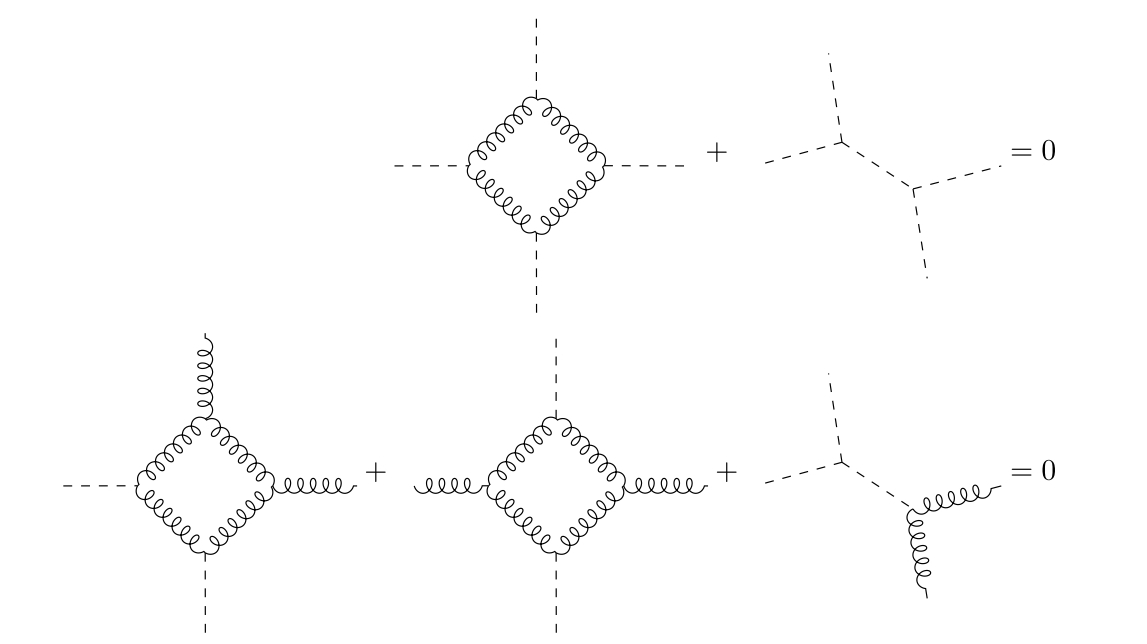}

	\caption{The two other terms in the Green-Schwarz anomaly cancellation on twistor space.  The dashed lines are closed-string fields. The second line constraints the coefficient of the self-interaction of the closed-string field, and the first line should constrain the dimension of the gauge Lie algebra to be $28$.     
	\label{Figure:GS_other_terms}}
\end{figure}

It would be very interesting, but challenging, to investigate the space-time version of these terms in the Green-Schwarz cancellation.

\section{$D$-strings and integrability in $WZW_4$}
In the beautiful paper \cite{2011.04638}, it was shown that the two-dimensional principal chiral model, for arbitrary values of the WZW term, can be obtained by dimensional reduction of the 4d WZW model on the geometry $T^2 \times \R^2$, as the radius $R$ of the $T^2$ is sent to zero.

The coefficient of the WZW term is determined by the ratio of the K\"ahler volume of $T^2$ and the Riemannian volume of $T^2$.  If they are the same, so that the $T^2$ is holomorphic,  then the resulting model is the conformal WZW model.  In the other extreme, if the $T^2$ is Lagrangian, so that its K\"ahler volume is zero, the model is the principal chiral model with no WZW term.   

Since the PCM is integrable, we can ask if the 4d uplift remains integrable.  One manifestation of integrability of the 2d theory is that there is a Lax matrix $\mc{L}(z)$. The path-ordered exponential of the Lax matrix defines a line defect which is topological, i.e.\ can be freely moved around. These topological line defects commute with each other for different values of $z$.

From the perspective of four-dimensional Chern-Simons theory \cite{1303.2632,1709.09993}, these topological line defects arise from Wilson lines.  As shown in \cite{2011.04638, 2011.05831}, holomorphic Chern-Simons on twistor space reduces to four-dimensional Chern-Simons. The uplift of the Wilson lines of four-dimensional Chern-Simons are surface defects wrapping holomorphic curves in twistor space.  

Thus, a natural four-dimensional counterpart of integrability is the presence of surface defects which wrap planes holomorphic in some complex structure $z$.  The 4d version of the statement that these are topological is that the defects vary holomorphically as a function of their position.   The 4d version of the statement that these operators commute with each other is that the OPE between these defects is non-singular for different values of $z$.   

There is an important distinction between $4d$ and $2d$ integrability. In $WZW_4$,  we necessarily have gravitational fields, which are not present in $2d$.  These gravitational fields -- together with the fact that the model arises from a string theory -- forces us to have $D$-strings, rather than simply surface defects. The distinction is that with $D$-strings, the fluctuations of the object in space-time are dynamical.

\subsection{$D$-strings and integrability}
In Witten \cite{Witten:2003nn} and Berkovits'  \cite{Berkovits:2004hg}  proposals for twistor string theory, $D$-strings on twistor space played an important role.

Here we will consider similar $D$-strings in our set-up. We find that they wrap surfaces in $\R^4$, and behave like strings in the four-dimensional model. 

Two dimensional integrable models have particles which satisfy the Yang-Baxter equation.  We expect that the strings we study in the four-dimensional model satisfy a similar Yang-Baxter equation. We derive this at a formal level, by noting that the strings do not touch each other in twistor space.  A more detailed computation of the Yang-Baxter equation is left for future work.  

Because the theory only lifts to twistor space for $G = SO(8)$, we do not expect that integrability holds for other groups\footnote{Except, as mentioned before, for the possibility of $G_2 \times G_2$.}. 

The $D$-strings we are discussing are branes in the type I topological string on twistor space. The worldvolume of the brane is a holomorphic curve in $\PT$. In the embedding of this theory in the physical type IIB string, these come from $D3$ branes on an $O7^-$ plane with $4$ $D7$'s, in an $\Omega$-background.  

Consider a curve $C$ in a Calabi-Yau threefold $X$.  The theory on a single $D1$ brane wrapping the curve is the following.  
\begin{enumerate} 
	\item 	 There are $8$ pairs symplectic bosons $\beta^{(i)}_1$, $\beta^{(i)}_2$,  for $i = 1 \dots 8$.  These are of spin $1/2$ and have the OPE
		\begin{equation} 
			\beta^{(i)}_r \beta^{(j)}_s = \delta^{ij} \eps_{rs}.  
		\end{equation}
		These are BRST reduced by the diagonal action of the group  $SL_2$.  There is a remaining $SO(8)$ flavour symmetry. 
	\item There is an additional pair of symplectic bosons $\chi_r$  transforming in the normal bundle $N$ to $C$ in $X$. (Note that $\wedge^2 N = K_C$, which is what is needed to define symplectic bosons living in a rank $2$ bundle).  These describe motion of the curve $C$ in $X$.
\end{enumerate}
If we do not consider the symplectic bosons describing the motion of $C$, the chiral algebra is that associated by the work of \cite{1412.7541} to a well-known $N=2$ superconformal theory in dimension $4$.  This is the $SU_2$ gauge theory with $4$ fundamental hypermultiplets, in which the obvious $SU_4$ flavour symmetry is enhanced to $SO(8)$.  The Higgs branch of the theory is the moduli space of charge $1$ instantons on $\R^4$ for the group $SO(8)$.  The chiral algebra is closely related to the curved symplectic bosons for this holomorphic symplectic manifold.  

This theory is coupled to the closed-string fields $A \in \Omega^{0,1}(X,\mf{so}(8))$ and $\eta \in \Omega^{2,1}(X)$ by
\begin{equation} 
	\int_{C} A J + \int_{C} \partial^{-1} \eta  + \dots 
\end{equation}
where $J$ is the current for the $SO(8)$ flavour symmetry and  $\dots$ indicates an infinite number of terms which depend on the fields $\chi_r$ moving the position of the brane. (These can be determined by taking the first two terms for the brane $C + \chi_r$ whose position has been varied by $\chi_r$, and expanding in series in $\chi_r$).    

We note that the coupling to the closed-string field $\eta$ is not quite local.  If we write the curve $C$ as the boundary of a $3$-manifold $M$, than this term can be written as $\int_M \eta$. Because $\eta$ is closed on-shell, this expression does not change if we vary $M$ continuously. 

We need to explain how to modify the theory on the $D1$ brane when we work on $\PT$.   The only thing we need to change is the behaviour of the symplectic bosons $\chi_r$ that describe the normal variation of the brane.  Take a curve $C \subset \PT$ and let $D \subset C$ be the divisor where the function $z$ is zero or infinity. (We will assume that $C$ does not lie entirely in the locus $z = 0$, $z = \infty$). The three-form $\Omega$ on $\PT$ has poles of order two at $z= 0$, $z = \infty$, so that if $N$ is the normal bundle to $C$ in $\PT$, we have an isomorphism
\begin{equation} 
	\wedge^2 N \iso K_C (2D). 
\end{equation}
Therefore the symplectic bosons describing the variation of the brane should like in $N(-D)$, as we have $\wedge^2 (N(-D)) = K_C$.  This means that position of the brane at $z = 0$, $z = \infty$ is not dynamical. 

One can of course place $D1$ branes which wrap the fibres of the map $\PT \to \R^4$.  These will correspond to certain local operators in the theory on $\R^4$.  Because the theory on the brane involves the integral of $\partial^{-1} \eta$ along the brane, the operator measures the value of the K\"ahler potential $\rho$ at a point in $\R^4$.  For these vertical $D1$ branes, the symplectic bosons describing the variation of the position live in $\Oo(-1)$, and so have no zero modes.  This means that the position of these defects is not dynamical, and they should not be thought of as $D$-strings.  

For the purposes of understanding integrable features of these theories, we are more interested in placing $D1$ branes along non-compact curves which live at a particular $z \in \CP^1$.  Given a polynomial $F(v_1,v_2)$ and $z_0 \in \CP^1$, we can place a surface defect on the locus $z = z_0$, $F(v_1,v_2) = 0$.  This gives us a surface defect in the corresponding theory on $\R^4$ which wraps the submanifold cut out by the real and imaginary parts of the equation
\begin{equation} 
	F(u_1 + z_0 \ubar_2, u_2 - z_0 \ubar_1) = 0. 
\end{equation}
This surface defect is given by coupling to the four-dimensional system a system of gauged symplectic bosons $\beta^{(i)}_r$ as above, together with a second system of symplectic bosons $\chi_1,\chi_2$ describing the normal variation of the brane.  

Let us describe how this theory is coupled to the $4d$ fields. A representative on twistor space for the field $\sigma$ is given (away from a small neighbourhood of $z = 0$) by 
\begin{equation} 
	A^{0,1} = J_{\ubar_i} \pi^{0,1} \d \ubar_i 
\end{equation}
where $\pi^{0,1}$ indicates the projection onto a $(0,1)$-form on twistor space.  We then integrate this against the two-dimensional current $J_{2d}$, viewed as a $(1,0)$ form on the brane.  Wedging $\pi^{0,1} \d \ubar_i$ with $J_{2d}$ is the same as wedging $\d \ubar_i$ with $J_{2d}$. Therefore, we find that the symplectic bosons are coupled to $\sigma$ by the expression
\begin{equation} 
	\int_{C} \op{Tr} J_{2d} J_{4d}^{(0,1)}. 
\end{equation}

Let us compute how this defect couples to the K\"ahler potential (to leading order).  From the expression on twistor space, it is clear that, to leading order, the coupling between the defect and the K\"ahler potential $\rho$ involves the integral of some derivatives of $\rho$ over the defect, and does not involve the fields on the defect.  (There are higher order terms which do involve the fields on the defect).  

We will calculate that, to leading order in $\rho$, the Lagrangian is 
\begin{equation} 
	\int_{C} \d^2 C \Lap \rho  \label{eqn_curve_coupling} 
\end{equation}
where $\d^2 C$ is the Riemannian metric on $C$ coming from the flat metric on $\R^4$.

To check this, we note that the Riemannian volume of $C$ is obtained by considering the K\"ahler form in which $C$ is holomorphic, and integrating it over $C$. The curve $C$ is holomorphic in the complex structure $z_0$ for which holomorphic coordinates are $w_1 = u_1 + z_0 \ubar_2$, $w_2 = u_2 - z_0 \ubar_1$.   Before deforming the flat metric by the $\partial \dbar \rho$, the K\"ahler form in complex structure $z_0$ is 
\begin{equation} 
	\omega_{z_0} =  \frac{1}{1 + \abs{z_0}^2 } \d w_i \d \wbar_{i}.
\end{equation}
The prefactor is there to ensure that $\omega_{z_0}^2 = \omega^2$, where $\omega$ is the initial K\"ahler form.  Therefore, the expression \eqref{eqn_curve_coupling} can be written as 
\begin{equation} 
	\int_{C} \omega_{z_0} \Lap \rho = \int_{C} \omega_{z_0} \partial_{z_0} \dbar_{z_0} \rho.	 
\end{equation}

We will verify that this is the correct coupling by computing the two-point function of the operator $\rho$ with the defect.  Since the operator $\rho$ is defined by the integral over a twistor fibre of $\partial^{-1} \eta$, we can do this by studying the field sourced by the defect on twistor space. 

The field sourced by the defect satisfies the equation
\begin{equation} 
	\dbar \eta = \delta_{C} 
\end{equation}
where $C \subset \PT$ is the curve supporting the brane on twistor space.  A solution to this equation is
\begin{equation} 
	\eta = \delta_{z = z_0} \frac{\d F(v_1,v_2) }{F(v_1,v_2)}  
\end{equation}
(where as above $C$ is the curve $z = z_0$, $F(v_1,v_2) = 0$. As a representative for $\partial^{-1} \eta$ we can take
\begin{equation} 
	\partial^{-1} \eta = \delta_{z = z_0} \log F(v_1,v_2).  
\end{equation}
To compute the two-point function of the operator $\rho(u_i,\ubar_i)$ with the defect we need to integrate this against the curve where $v_1 = u_1 + z \ubar_2$, $v_2 = u_2 - z \ubar_1$. The result of this integral is the field sourced by the defect on $\R^4$, which is 
\begin{equation} 
	\rho = \log F(u_1 + z_0 \ubar_2, u_2 - z_0 \ubar_1)	 
\end{equation}

Note that this satisfies the equations of motion. In fact, $\log F$ is a harmonic function.  To see this, we note that $\Lap \log F$ is one of the components of $\partial \dbar \log F$ where we define $\partial$ and $\dbar$ in the complex structure for which $F$ is holomorphic.

From this we can see that the term in the Lagrangian coupling $\rho$ to the defect is (to linear order)
\begin{equation} 
	\int \rho \Lap^2 \log F \d^2 u \d^2 \ubar = \int (\Lap \rho) \Lap \log F \d^2 u \d^2 \ubar 
\end{equation}
This is an integral over the curve $C$ because $\Lap \log F$ is a certain derivative of the delta-function on $C$.  

We can make this more explicit.  Let $w_1 = u_1 + z_0 \ubar_2$, $w_2 = u_2 - z_0 \ubar_1$ as above, and $\omega_{z_0}$ be the K\"ahler form in this complex structure. 
Then,  
\begin{equation} 
	 \Lap \log F \d^2 w \d^2 \wbar = \omega_{z_0} \partial_{z_0} \dbar_{z_0} \log F  
 \end{equation}
where the subscript $z_0$ indicates that $\partial$, $\dbar$ are taken in this complex structure.  Now, since $F$ is holomorphic in this complex structure,
\begin{equation} 
	\partial_{z_0} \dbar_{z_0} \log F = \delta_{C}. 
\end{equation}
Thus, the term in the Lagrangian is
\begin{equation} 
	\int_{C} \omega_{z_0} \Lap \rho,  
\end{equation}
as desired.

\subsection{The Yang-Baxter equation for defects}
So far we have described how the fields of the $D$-string couple to the four-dimensional theory. In this section we will derive, at a very formal level, the Yang-Baxter equation satisfied by the scattering of $D$-strings. I hope to come back to a more detailed discussion of this point in the future.

The derivation given here is very similar to the derivation of the Yang-Baxter equation from four-dimensional Chern-Simons theory \cite{1303.2632,1709.09993}. 

To understand the YBE, we should first consider what happens when two $D$-strings which don't touch in $\PT$ cross when viewed as defects in $\R^4$. Since the strings don't touch in $\PT$, this configuration is well defined and leads to a configuration of defects in $\R^4$. This means that, in $\R^4$, we have a pair of surface defect sand a way of coupling them at their intersection points. 

The operator coupling them is a local operator in the system consisting of the two $D$-strings and the four-dimensional gauge fields. These local operators can be computed to leading order in perturbation theory, as they arise by integrating out the fields of the theory on twistor space.  In \cite{1709.09993}, the local operator at the intersection of two lines is the $R$-matrix. The local operator at the intersection of the $D$-strings plays the same role, and can be thought of as the $R$-matrix in the four-dimensional context.

In \cite{1709.09993}, a semi-classical analysis leads us explicitly to the familiar classical $r$-matrix.  I expect similar expressions here, although I have not performed the computations in detail. Schematically, if $J_a^{(1)}$, $J_a^{(2)}$ are the currents for the $SO(8)$ symmetry of the $D$-strings, I expect that the exchange of an open-string field on twistor space should lead to a local operator which is roughly $J_a^{(1)} J_a^{(2)} \frac{1}{z_1 - z_2}$.  The context we are dealing with is quite a bit more complicated than that of \cite{1709.09993}, because of the presence of closed-string field and the fact that we also need to consider how the operator at the intersection of the two defects depends on the fields of $WZW_4$.

Now let us state the Yang-Baxter equation. Suppose we have three $D$-strings placed at values $z_1,z_2,z_3$ of the spectral parameter, and, for simplicity, placed on straight complex lines. We can take the lines to be at 
\begin{equation} 
	\begin{split} 
		v_1 &= 0 \ \ z = z_1 \\
		v_2 &= 0 \ \ z = z_2 \\
		v_1 &= c \ \ z = z_3
	\end{split}
\end{equation}
Let us denote the theory on the $D$-strings by $\mc{S}^{(i)}$, $i = 1,2,3$ and let $R^{(ij)}$ be the local operator living at the intersection of $S^{(i)}$ with $S^{(j)}$.  

Then, on $\R^4$, the $D$-strings lives on the planes
\begin{equation} 
	\begin{split}
		u_1 + z_1 \ubar_2 &= 0 \\
  		u_2 - z_2 \ubar_1 &= 0 \\
		u_1 + z_3 \ubar_2 &= c. 
	\end{split}
\end{equation}
The three intersection points of these planes are the origin and the two points
\begin{equation} 
	\begin{array}{rclrcl}
		u_1 &= &\frac{c}{1 + z_3 \zbar_2}  &   u_2 &= &\frac{z_2 \br{c} } { 1 + \zbar_3 z_2} \\  
		u_1 &=& \frac{z_1 c}{z_1 - z_3} &  u_2 &=& \frac{\br{c}}{\zbar_3 - \zbar_1} 	
	\end{array}
\end{equation}
Evidently as $c \to 0$, the three intersection points coincide. 

The statement of the Yang-Baxter equation is that \emph{there are no singularities in the system consisting of the $4d$ theory plus $D$-strings,  as $c \to 0$}.  

More concretely, we have operators $\mc{O}^{(ij)}$ in the coupled system, living at the intersections of the $D$-strings.  The Yang-Baxter equation is a statement about the a surprising lack of singularities in the OPEs of the operators $\mc{O}{(ij)}$. 

We should compare this to the ordinary Yang-Baxter equation.  There, we have three line defects which we can move past each other without introducing any singularities.  In that case, the line defects are topological and the system does not depend at all on the position of the lines. In our case, the surface defects are not topological, so there is dependence on the parameter $c$.  For both the topological YBE, and the YBE we are considering here, the main content is that the system has no singularities in the parameter $c$ controlling the position of the lines.  The only difference is the nature of the singularities: in the familiar case, the only such singularities could be discontinuities.  In our case, the only possible singularities are poles, because the system depends holomorphically on $c$.

\appendix

\section{Normalizing the open-closed coupling}
\label{sec:normalization_twistor}
In what follows, we will take $\op{Tr}$ to mean the trace in the adjoint representation and $\op{tr}$ to be that in the fundamental representation.  We fix a Lie algebra $\g$ so that $\op{Tr}(X^4)$ is proportional to $\op{tr}(X^2)^2$.  This happens for $\g = \sl_2$, $\sl_3$, $\mf{so}(8)$ or an exceptional algebra.

According to \cite{Gwilliam:2018lpo}, the anomaly associated to the rectangle diagram is
\begin{equation} 
	\frac{1}{(2 \pi \i)^3 4!  }  \int \op{Tr} \left( \chi (\partial \mc{A})^3 \right) \label{eqn_normalizationsdym} 
\end{equation}
Compared to their analysis we get an extra factor of $\half$ from the diagram automorphism when work in holomorphic Chern-Simons, so that then the normalization is  
\begin{equation} 
	\frac{1}{(2 \pi \i)^3 2 \cdot  4!  }  \int \op{Tr} \left( \chi (\partial \mc{A})^3 \right) \label{eqn_normalization_hcs} 
\end{equation}
In what follows we will normalize the open-closed coupling in the holomorphic Chern-Simons case.  The self-dual Yang-Mills case will acquire an extra factor of $\sqrt{2}$, coming from the extra factor of $2$ above.

If we normalize the kinetic term of the closed-string field to be
\begin{equation} 
\tfrac{1}{2} \int \dbar \eta \partial^{-1} \eta 
\end{equation}
and  normalize the interaction between $\eta$ and $\mc{A}$ to be
\begin{equation} 
	\lambda_{\g} C    \int \eta \op{tr}( \mc{A} \partial \mc{A} )  
\end{equation}
for some universal constant $C$ we have yet to determine.  The Lie algebra dependent constant $\lambda_{\g}$ satisfies 
where, for $X \in \mf{g}$, we have
\begin{equation} 
	\lambda_{\g}^2 \op{tr}(X^2)^2 = \op{Tr}(X^4). \label{eqn_lambda} 
\end{equation}
With these conventions, let us compute the anomaly for the exchange of closed string fields. 
\begin{figure}


\includegraphics[scale=0.21]{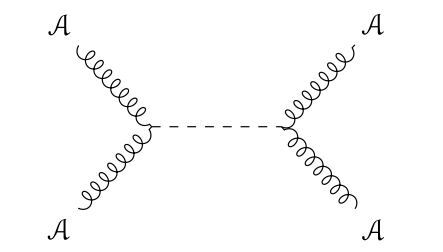}

	\caption{Two open-string fields exchanging a closed-string field gives a tree-level anomaly.  \label{Figure:closed_anomaly_app} }  
\end{figure}

Let us look at  the $\eta$- field sourced by the operator 
\begin{equation} 
	 	\lambda_{\g} C    \int \eta \op{tr}( \mc{A} \partial \mc{A} )  
\end{equation}
Varying $\eta$ and applying the operator $\partial$ gives us 
\begin{equation} 
	\dbar \eta = 	\lambda_{\g} C \op{tr} (\partial \mc{A}^2). 
\end{equation}
When we consider the diagram in Figure \ref{Figure:closed_anomaly_app}, a gauge variation sending $\mc{A} \to \mc{A} + \dbar \chi$ and then integrating by parts gives us
\begin{equation} 
	-2 (\lambda_{\g} C)^2  \int \op{tr} \left( \chi \partial \mc     {A}  \right) \op{tr} \left( (\partial \mc{A})^2   \right). 
\end{equation}
The factor of $2$ comes because there are $4$ external lines, and we can apply gauge variation to any of them, but there is a diagram automorphism switching the two vertices.  

We conclude that in the holomorphic Chern-Simons case
\begin{equation} 
		2 C^2 = \frac{1}{(2 \pi \i)^3 2 \cdot  4!  }   
\end{equation}
and in the self-dual Yang-Mills case we have 
\begin{equation} 
	2 C^2 = \frac{1}{(2 \pi \i)^3 4! }. 
\end{equation}

Now let us discuss how to determine $\lambda_{\g}$.  For all of the algebras we are considering, we know that the two sides of  equation \eqref{eqn_lambda} are proportional. To determine the constant, we only need to restrict $X$ to a particular value, which we can take to be a cocharacter.   If $w_a$ are the weights of the cocharacter $X$ in the adjoint representation, and $w_i$ are the weights in the fundamental representation, then
\begin{equation} 
	\lambda_{\g}^2 = \frac{\sum w_a^4}{ (\sum w_i^2)^2 }. 
\end{equation}

Let us do some sample computations.  Taking $\g = \mf{sl}_2$ and $X$ to be a generator of the Cartan, the weights of the fundamental representation are $1,-1$ and of the adjoint are $2,0,-2$.  Therefore
\begin{equation} 
	\lambda_{\sl_2}^2 = \frac{32 }{4   } = 8. 
\end{equation}
For $\mf{sl}_3$, we can can take $X$ to be an element of the Cartan of a copy of $\mf{sl}_2$. Then the non-zero weights of the fundamental representation are $1,-1$ and of the adjoint are $2,-2,1,-1,1,-1$.  Therefore,
\begin{equation} 
	\lambda_{\sl_3}^2 = \frac{  36   } { 4  } = 9.  
\end{equation}

Next, for $\mf{so}(8)$, let us decompose $\C^8$ into isotropic subspaces $\C^4_+ \oplus \C^4_-$. We can choose $X$ to be a cocharacter so that $\C^4_{\pm}$ have weights $\pm 1$.  Then, by construction, in the fundamental we have four weights each of $\pm 1$. In the adjoint, we have six weights each of $\pm 2$ and $16$ weight zero vectors. 

We conclude that
\begin{equation} 
	\lambda_{\so_8}^2 = \frac{ 96  } { 64  } = \frac{3} {2}  . 
\end{equation}

Putting this together, we find that (for instance) the coupling between the open and closed fields for $\g = \so(8)$ holomorphic Chern-Simons is normalized as 
\begin{equation} 
	\frac{\sqrt{3}}{ \sqrt{2} \sqrt{ (2 \pi \i)^3 4 \cdot  4! } }    \int \eta \op{tr}( \mc{A} \partial \mc{A} ) =  \frac{1}{8 (2 \pi \i)^{3/2} }   \int \eta \op{tr}( \mc{A} \partial \mc{A} )  .
\end{equation}

\section{Canceling the anomaly by a non-local term}
\label{sec:nonlocal}
In this appendix we  will verify that the expression
\begin{equation}
		\frac{1}{(2 \pi)^4 64  }   \int_{x \in \R^4} \int_{z,z' \in \CP^1} \op{Tr}_{adj} \left( \mc{A}(x,z) \d \mc{A}(x,z)\frac{\d z \d z'}{(z - z')^2}  \mc{A}(x,z') \d \mc{A}(x,z') \right)
\end{equation}
cancels anomaly on twistor space for either  holomorphic Chern-Simons theory, and the same expression multiplied by $2$ cancels the anomaly for self-dual Yang-Mills. The anomaly for holomorphic CS is
\begin{equation} 
		\frac{1}{(2 \pi\i)^3 48 }  \int_{\PT}\op{Tr}_{adj} \chi (\partial \mc{A})^3. 
\end{equation}

More formally, we can write the non-local vertex as follows. Consider the manifold
\begin{equation} 
	X = \PT \times_{\R^4} \PT 
\end{equation}
with coordinates $(x,z,z')$ where $x$ is a coordinate on $\R^4$ and $z,z'$ on $\CP^1$. The space $X$ consists of points in $\R^4$ with a pair of points $z,z'$ on the corresponding $\CP^1$ in $\PT$.  Let $\pi_1,\pi_2$ be the two projections of $X$ to $\PT$.

Since 
\begin{equation} 
	X = \R^4 \times \CP^1 \times \CP^1 
\end{equation}
we can give $X$ to natural complex structures, one by identifying $\R^4$ and the first copy of $\CP^1$ with twistor space, and one using the second copy of $\CP^1$. We let $\partial$, $\dbar$ be the operators on forms on $X$ associated to the first complex structure, where $\R^4$ is given a complex structure associated to $z' \in \CP^1$. We let $\partial'$, $\dbar'$ be the corresponding operators in the other complex structure. 

The one-loop anomaly for holomorphic Chern-Simons on twistor space is canceled by the  expression
\begin{equation} 
	\int_{X} \op{Tr}_{adj} \pi_1^\ast \left(  \mc{A} (x,z) \partial \mc{A}(x,z)\right)  \frac{\d (z-z')}{z - z'}\pi_2^\ast \left(  \partial \mc{A}(x,z') \mc{A}(x,z') \right)    + \dots 
\end{equation}
where $\dots$ indicates terms which are quintic and higher in $\mc{A}$ and which which involve $3$ or more points in $\CP^1_x$. These terms don't enter our discussion.   $\op{Tr}_{adj}$ indicates the trace in the adjoint representation, and $\pi_i^\ast$ indicates pull-back from one of the two copies of $\PT$. (In what follows we will drop the colour factors and the trace from the notation, as they don't play a role). 

We are interested in the gauge variation of this under linearized gauge transformations $\mc{A} \mapsto \mc{A} + \dbar \chi$.  Under such a transformation, the integral varies by
\begin{equation}
	\begin{split}
		& 2 \int_{X}    \partial \mc{A} (x,z) \dbar \chi (x,z)  \frac{\d z \d z'}{(z - z')^2} \partial'  \mc{A}(x,z') \mc{A}(x,z') \\
		&+2 \int_{X}    \partial \dbar \chi (x,z)  \mc{A} (x,z)  \frac{\d z \d z'}{(z - z')^2} \partial'  \mc{A}(x,z') \mc{A}(x,z')  
	\end{split}
\end{equation}
(We have used the symmetry between $z,z'$ here).  These two expressions are in fact the same, as we can use integration by parts, together with the fact that $\partial ( (z-z')^{-2} \d z \d z' ) = 0$, to equate them.  We thus find the gauge variation is
\begin{equation} 
	4 \int_{X}    \partial \mc{A} (x,z) \dbar \chi (x,z)  \frac{\d z \d z'}{(z - z')^2} \partial'  \mc{A}(x,z') \mc{A}(x,z').  
\end{equation}
Next, we can integrate by parts to move the operator $\dbar$ around. Since our fields are on-shell,  we can drop the term containing $\dbar \mc{A}$. This gives us
\begin{equation} 
	-4 \int_{X}    \partial \mc{A} (x,z) \chi (x,z) \dbar \left( \frac{\d z \d z'}{(z - z')^2}\right)  \partial'  \mc{A}(x,z') \mc{A}(x,z').  
\end{equation}
Next, we note that 
\begin{equation} 
	\dbar \frac{\d z \d z'} {(z-z')^2} = 2 \pi \i \partial_{z'} \d z \delta_{z' = z}. 
\end{equation}
(At this point, we could also have used a point-splitting regulator where $\abs{z-z'} = \eps$. If we did this, the integration by parts would produce a boundary term which is an integral over this counter, and we would find the same result).

Then, our gauge variation becomes
\begin{equation}
	\begin{split}
		-8 \pi \i \int_{\PT}&    \partial \mc{A} \chi  \d z \partial_z \left( \partial  \mc{A}  \mc{A}  \right) \\
		=& -8 \pi \i \int_{\PT}    \partial \mc{A} \chi  \d z (\partial_z  \partial  \mc{A} ) \mc{A}  \\
		&+ 8 \pi \i \int_{\PT}    \partial \mc{A} \chi    \partial  \mc{A} ) \d z \partial_z \mc{A} . 
	\end{split}
\end{equation}
The term with two copies of $\mc{A}$ which are not differentiated cancels. This is because $\mc{A} \wedge \mc{A}$ is anti-symmetric in the colour indices, but we are symmetrizing them. This means the gauge variation is
\begin{equation}
	\begin{split}
	 + 8 \pi \i \int_{\PT}&    \partial \mc{A} \chi    \partial  \mc{A} ) \d z \partial_z \mc{A} \\
		&= \frac{8 \pi \i}{3} \chi (\partial \mc{A})^3.
	\end{split}
\end{equation}
which is proportional to the anomaly.

\section{Quantization to of type I topological strings on twistor space}
\label{section_twistor_quantization}
In \cite{1905.09269} we showed that holomorphic Chern-Simons with gauge group $SO(8)$, coupled to type I Kodaira-Spencer theory, admits a canonically defined quantization on $\C^3$. Here we will generalize this result to twistor space with the boundary conditions discussed earlier, and also show that the quantum theory is compatible with the symmetries of twistor space.  
\begin{theorem} 
Holomorphic Chern-Simons theory with gauge group $SO(8)$ coupled to type I Kodaira- Spencer theory admits a canonically defined quantization at all orders in the loop expansion on $ \PT$, with these boundary conditions.

Further, the action of the group $GL_2 \ltimes \C^4$ on $\PT$ acts holomorphically on the theory, where the multiples of the identity in $GL_2$ scale the coupling constant.
\end{theorem}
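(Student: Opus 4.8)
The plan is to run the standard Costello--Gwilliam obstruction-theoretic argument for BV quantization, in the same form already used in \cite{1905.09269} for $\C^3$, and to upgrade it to twistor space by combining locality of anomalies with the weight grading coming from the $\C^\times$-action that scales the $\Oo(1)^2$ fibres. Write $\E$ for the space of BV fields of the coupled open--closed theory on $\PT$ with the prescribed boundary conditions --- $\boldsymbol{A} \in \Omega^{0,\ast}(\PT, \so(8)(-D_0 - D_\infty))[1]$ and $\boldsymbol{\eta} \in \Ker\partial \subset \Omega^{2,\ast}(\PT)[1]$ --- with the classical action $S$ of Section \ref{section_typeI_intro}. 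By the general theory of \cite{costello2011renormalization}, given a quantization at order $\hbar^n$, a quantization at order $\hbar^{n+1}$ exists precisely when a canonical obstruction class in $H^1(\Ool(\E), \{S,-\})$ vanishes, and the set of extensions is then a torsor for $H^0(\Ool(\E), \{S,-\})$. So the whole statement --- existence at all orders, and canonicity --- follows from two cohomological facts in the relevant weight: that the degree-$1$ obstruction cohomology vanishes, and that the degree-$0$ cohomology is spanned by at most the coupling-constant rescaling, which we fix once and for all by a choice of normalization.

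The next step is to reduce these cohomology computations to the ones already known on $\C^3$. Anomalies and counterterms are local functionals, hence supported on the diagonal; away from the divisors $D_0 \cup D_\infty$ the form $\Omega = z^{-2}\,\d z\,\d v_1\,\d v_2$ is a nowhere-vanishing holomorphic volume form, so $\PT \setminus (D_0 \cup D_\infty)$ is Calabi--Yau and the germ of the deformation--obstruction complex there is exactly that of holomorphic Chern--Simons plus type I Kodaira--Spencer on $\C^3$, whose cohomology was computed in \cite{1905.09269}: the obstruction group vanishes for $SO(8)$ and the deformation group is accounted for by the string coupling. What remains is the contribution of local functionals supported near $D_0 \cup D_\infty$. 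Here I would run a Mayer--Vietoris / local-cohomology comparison between $\Ool$ on $\PT$, $\Ool$ on $\PT \setminus (D_0 \cup D_\infty)$, and the local cohomology along the divisors, and combine it with the weight grading: the interaction vertices all carry definite weight (the open--closed vertex $\int \eta\,\op{tr}(\mc{A}\partial\mc{A})$, the cubic Kodaira--Spencer term, etc.\ are weight zero because the prescribed vanishing orders of the fields against the order-two poles of $\Omega$ cancel), the coupling constant $\lambda$ has weight $-2$, and an obstruction at loop order $n$ has a definite weight fixed by $n$. A term supported near a divisor would be built from the fields restricted to the divisor, and the order-of-vanishing boundary conditions force any such candidate into weights excluded by the loop/weight bookkeeping. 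This is the step I expect to be the main obstacle: it is not a citation but a genuine (if routine in spirit) computation of the Dolbeault cohomology of the relevant twisted bundles on $\PT$ in a neighbourhood of $D_0$ and $D_\infty$, which must be shown to produce no new obstruction classes and no new deformations beyond the $\C^3$ answer.

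Finally I would establish the $\GL_2 \ltimes \C^4$-equivariance. The translation subgroup $\C^4$ together with the rotation subgroup $SU(2) \subset \GL_2$ admit a compact real form, and one can choose the gauge fixing --- a metric on $\PT$ --- to be invariant under this compact group and under translations; as in the proof of the one-loop scale-invariance theorem above, a classical symmetry realized by a compact group together with translations, with an invariant gauge and cut-off, is never anomalous, so these symmetries lift to the quantization automatically. For the remaining $\C^\times$ of homotheties in $\GL_2$ --- which also rescales $\Omega$, hence $\lambda$, with weight $-2$ --- I would invoke the lemma proved earlier in the excerpt: the obstruction to a holomorphic $\C^\times$-action equals the obstruction to an $S^1$-action, and compact-group symmetries are unobstructed, so the holomorphic $\C^\times$-action persists quantum mechanically. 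This is precisely the Proposition stated above. Assembling the three subgroups and checking compatibility of their lifts reduces, again by the torsor statement, to the vanishing of a low-degree equivariant cohomology group, which is controlled by the same weight argument; this completes the proof.
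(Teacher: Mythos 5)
Your overall strategy --- order-by-order obstruction theory, vanishing of $H^1$ and control of $H^0$ of the local functionals, reduction of the bulk computation to the known $\C^3$ answer of \cite{1905.09269}, and unobstructedness of the symmetry --- is exactly the skeleton of the paper's proof, and you correctly locate the crux at the divisors $z=0,\infty$. But the grading you propose to close that step is not the one that does the work. The $\C^\times$ scaling the $\Oo(1)^2$ fibres (under which $\lambda$ has weight $-2$) assigns weight zero to $z$, so it cannot distinguish a boundary-supported candidate $\int z^{k}\d z\,\d v_1\d v_2\,(\cdots)$ with $k<0$ from one with $k\ge 0$; your ``loop/weight bookkeeping'' in that grading therefore does not exclude new anomalies living on the divisor. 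The paper instead uses the charge under the \emph{rotation} $z\mapsto \mu z$: after rescaling to $\til{\boldsymbol{A}}=\boldsymbol{A}/z$, the open-string interaction has charge $-1$, the open-closed and closed kinetic terms charge $0$, and the closed cubic term charge $+1$, so an $L$-loop functional (in the string-theoretic loop count) has charge exactly $L\ge 0$; and any local functional of charge $\ge 0$ necessarily carries $z^k\d z$ with $k>0$, hence already lies in the subcomplex of functionals regular at $z=0$. Thus the bulk-defect complex and the bulk complex agree in all charges that can occur, and no Mayer--Vietoris or Dolbeault local-cohomology computation near the divisors is needed --- the resolution is more elementary than the one you anticipate. (Incidentally, your assertion that the cubic Kodaira--Spencer vertex is weight zero is off in the relevant grading, and the paper also phrases the $\C^3$ input uniformly in $N$ via $OSp(8+2N\mid 2N)$ to avoid trace relations; you should say which form of the cited cohomology vanishing you are using.)

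On equivariance your route diverges more substantially: you choose an invariant metric for the compact part and then patch in the homotheties via the holomorphic-versus-$S^1$ lemma, deferring the compatibility of the lifts to an unspecified equivariant cohomology group. The paper handles all of $GL_2\ltimes\C^4$ at once, without any invariant gauge choice: it forms the locally constant family of quantizations over the contractible simplicial set $\mc{GF}$ of K\"ahler metrics, uses uniqueness of the quantization to produce an equivalence $\mc{T}\simeq g^\ast\mc{T}$ for every $g$, uses vanishing of the degree-$\le 0$ cohomology to make these equivalences unique up to contractible choice (hence coherent and smooth in $g$), and finally trivializes the anti-holomorphic part of the Lie algebra action by the same obstruction vanishing. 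This is cleaner than the compact-form argument because the full group preserves no metric, and it is what actually delivers the ``holomorphic action'' clause of the statement. Your proposal is salvageable, but as written the two load-bearing steps --- exclusion of boundary anomalies and coherence of the group action --- are the ones left open or carried out with the wrong tools.
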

This theorem is rigorous in the framework of \cite{costello2011renormalization, costello2021factorization} , and so in particular produces a factorization algebra on $\PT$.

The proof of this theorem is by a careful analysis of the cohomology groups which contain possible anomalies or counter-terms.  In \cite{1905.09269}, we found a surprising cancellation of anomalies between open and closed-string sectors. We will find that a version of this statement continues to hold when we have the boundary conditions.

\subsection{Green-Schwarz mechanism with boundary}
One can ask whether the Green-Schwarz anomaly cancellation continues to hold when we impose these boundary conditions. This is a slightly delicate question, as in other circumstances boundary conditions can introduce anomalies. 

We find that there are no extra anomalies coming from the boundary conditions, leading to the following result.
\begin{theorem} 
Holomorphic Chern-Simons theory with gauge group $SO(8)$ coupled to type I Kodaira- Spencer theory admits a canonically defined quantization at all orders in the loop expansion on $ \PT$, with these boundary conditions. 
\end{theorem}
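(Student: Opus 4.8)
The plan is to run the order-by-order BV obstruction theory of \cite{costello2011renormalization, costello2021factorization} for the coupled open--closed theory on $\PT$, separating the genuinely new input --- the Dirichlet-type conditions at the divisors $D_0 = \{z=0\}$ and $D_\infty = \{z=\infty\}$ --- from the bulk behaviour, which is controlled by the $\C^3$ computation of \cite{1905.09269}. Recall the mechanism: given a quantization of the classical BV theory modulo $\hbar^{n+1}$, the obstruction to extending it modulo $\hbar^{n+2}$ is a degree-$1$ class in the reduced local cochain complex $\Ool(\E)$ with differential $\dbar + \{S,-\}$, and the set of extensions, once the obstruction vanishes, is a torsor over the degree-$0$ cohomology of the same complex. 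So two statements must hold at every loop order: (i) the obstruction vanishes --- for the one-loop box anomaly $\int \op{Tr}(\chi (\partial\mc{A})^3)$ this happens precisely because of the $SO(8)$ identity $\op{Tr}(X^4) = \lambda_{\g}^2 \op{tr}(X^2)^2$ together with the tree-level closed-string exchange $\int \eta\,\op{tr}(\mc{A}\partial\mc{A})$, and this must be seen to persist to higher loops; and (ii) the degree-$0$ cohomology is spanned only by rescalings of the classical action, so there are no genuine moduli --- this is the content of the word ``canonical''.

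To make these groups computable I would reduce $\Ool(\E)$ on $\PT$, via formal geometry / Gelfand--Fuks as in \cite{costello2011renormalization}, to a fibrewise Lie-algebra-cohomology object, and then use locality of anomalies and counterterms to split every class into an interior piece supported on $\PT \setminus (D_0 \cup D_\infty)$ and a boundary piece supported on $D_0 \cup D_\infty$. On the interior the fields, the meromorphic volume form $\Omega$ and all interactions coincide with those of holomorphic Chern--Simons coupled to type I Kodaira--Spencer on a Calabi--Yau threefold, so the interior contributions to $H^0$ and $H^1$ are literally those of \cite{1905.09269}: the only interior one-loop anomaly is the box class, it is cancelled for $SO(8)$ by the Green--Schwarz term, the higher-loop obstruction groups vanish, and $H^0$ is exhausted by the coupling-constant rescaling.

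The main obstacle will be the boundary analysis: showing that asking $\mc{A}$, the Beltrami differential $\boldsymbol{\mu}$ and the gauge parameters to vanish to first, respectively second, order along $D_0$ and $D_\infty$ introduces no extra boundary anomaly and no extra boundary counterterm. Boundary local functionals live on $D_0 \sqcup D_\infty$, each a copy of $\C^2$ (the fibre of $\Oo(1)^2 \to \CP^1$ over $z=0$ or $z=\infty$), and are built from jets of the fields transverse to the divisor; the prescribed vanishing orders constrain these normal jets sharply, the leading surviving open-string datum at $D_0$ being the boundary field $A_L = \mc{A}|_{z=0}$. One must run the Gelfand--Fuks computation for the residual boundary theory --- a $(0,\ast)$-current on $\C^2$ valued in $\mf{so}(8)$ coupled to the restriction of the closed-string $(2,\ast)$-form --- and verify that its cohomology in the relevant ghost numbers vanishes, or that any surviving class is again of Green--Schwarz type and is killed by the boundary restriction of the open--closed coupling. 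Two technical points carry this: first, the bookkeeping already sketched in the body, that the second-order pole of $\Omega$ exactly balances the zeros of the fields so that the cubic closed-string interaction picks up $z^{6}\cdot z^{-4} = z^{2}$ near $D_0$, with the analogous balance for the open--closed term, so no interaction acquires a pole or a forced zero at the divisor; and second, the verification that the would-be boundary anomaly, which by locality and $U(2)$-invariance can only be one fixed low form on $\C^2$, in fact evaluates to zero on-shell. I expect this last evaluation --- a short but genuine residue / Feynman-diagram computation on the divisor --- to be the crux; everything else adapts \cite{1905.09269}.

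Finally, the equivariance under $GL_2 \ltimes \C^4$ stated in the companion theorem follows formally once the canonical quantization is in hand, using the Lemma proved above: a holomorphic action of a complex Lie algebra on the quantum theory is obstructed exactly when an action of a real form is, and the real form $\mf{u}(2)\ltimes\R^4$ has no perturbative anomaly --- the translations $\R^4$ can be kept manifest by a translation-invariant gauge and cutoff, $U(2)$ is compact so its classical action survives quantization, and the two are compatible since $U(2)$ acts on $\R^4$. Hence the counterterm-unique, box-anomaly-cancelled quantization automatically carries the holomorphic $GL_2 \ltimes \C^4$ action, and in particular the central $\C^\times$ scaling the coupling constant, which is the periodic-RG statement.
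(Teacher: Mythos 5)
Your overall framework (order-by-order BV obstruction theory, with the bulk contribution controlled by the $\C^3$ computation of \cite{1905.09269}) matches the paper's, but the step you yourself identify as the crux --- showing that the Dirichlet conditions at $D_0$ and $D_\infty$ contribute no new anomalies or counter-terms --- is exactly the step you do not carry out, and your proposed route to it (a Gelfand--Fuks computation for a residual boundary theory on $\C^2$ plus an explicit residue/Feynman evaluation on the divisor) is not how the difficulty is actually resolved. The paper avoids any boundary computation by a charge argument: writing the open-string field as $\boldsymbol{\til{A}} = \mbf{A}/z$, the open-string Lagrangian has weight $-1$ under the $\C^\times$ rotating $z$, the closed-string kinetic term and open--closed coupling have weight $0$, and the closed-string cubic term has weight $+1$; consequently an $L$-loop contribution to an anomaly or counter-term must have charge exactly $L$. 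After passing to $\dbar$-cohomology every candidate Lagrangian has the form $\int z^k\,\d z\,\d v_1 \d v_2\, D_1\boldsymbol{\til{A}}\cdots D_{l+m}\boldsymbol{\eta}$, and positivity of the charge forces $k>0$; hence the subcomplex of Lagrangians regular at $z=0$ (i.e.\ $k\ge 0$) is \emph{identical}, in charges $>0$, to the complex on the bulk with the divisor removed, whose cohomology vanishes by \cite{1905.09269}. The charge-zero piece is likewise isomorphic to the bulk one and consists only of the known single- and double-trace one-loop classes handled by the Green--Schwarz mechanism. So there is nothing left to evaluate on the divisor; your plan, by contrast, leaves open the possibility of a nonzero boundary class and would require you to actually prove acyclicity of a boundary complex you have not pinned down. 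Note also the paper's explicit caution that the cohomology of a subcomplex can be \emph{larger} than that of the ambient complex, which is precisely why the isomorphism of complexes (rather than a mere inclusion plus an interior/boundary splitting) is the essential point.

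Two further ingredients of the paper's proof are absent from your sketch. First, the cohomology computation imported from \cite{1905.09269} is only tractable because the problem is posed uniformly in $N$ for $OSp(8+2N\mid 2N)$, which eliminates trace relations; without this the claim that ``$H^0$ is exhausted by the coupling-constant rescaling'' is not justified. Second, the constraint $d-m-n\ge -1$ relating the dimension of a Lagrangian to its number of traces plus closed-string fields (a purely topological fact about the associated surface) is needed to restrict which cohomology classes can actually appear as anomalies or counter-terms; your argument does not invoke any analogue of this. Your concluding paragraph on $GL_2\ltimes\C^4$-equivariance addresses a different (stronger) statement than the one quoted, and in any case the one-loop real-form lemma does not by itself give the all-orders quantum group action; the paper obtains that from uniqueness of the quantization over the space of gauge-fixing conditions.
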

This theorem is rigorous in the framework of \cite{costello2011renormalization,costello2021factorization}, and so in particular produces a factorization algebra on $\PT$.

The proof of this theorem is by a careful analysis of the cohomology groups which contain possible anomalies or counter-terms.  In \cite{1905.09269}, we found a surprising cancellation of anomalies between open and closed-string sectors. We will find that a version of this statement continues to hold when we have the boundary conditions. 

\begin{proof}
Since the construction of a quantization is a local question -- more precisely, the collection of possible quantizations forms a homotopy sheaf on space-time \cite{costello2011renormalization} -- we can analyze each boundary condition separately. 

In \cite{1905.09269}, we considered quantization not just of the $SO(8)$ holomorphic Chern-Simons theory, but more generally of the $OSp(8 + 2N \mid 2N)$ theory for varying $N$. More precisely, we framed the problem of constructing the quantum open-closed string theory with gauge group $OSp(8 + 2 N \mid 2 N)$ for all $N$, in a compatible way. The compatibility arises from the natural embeddings of $OSp(8+2M \mid M)$ into $OSp(8 + 2N \mid N)$ when $M < N$.  If we can construct the quantum system in this uniform way for all $N$, we can in particular restrict to a quantization for $SO(8)$.

It was helpful to phrase the problem in this uniform-in-$N$ way because trace relations at finite $N$ make the cohomology computations much more difficult.  If we work uniformly in $N$, trace relations go away. 

	We solved the uniform-in-$N$ quantization problem on $\C^3$ by obstruction theory. Let us sketch the argument.  

	We will work order-by-order in a slightly non-standard loop expansion.  For the purposes of this proof, what we will call loop number is determined as follows.  Take a connected Feynman diagram $\gamma$ of the open-closed theory. Since we are working with a string field theory, we can view it as being associated to the topological type of a unoriented connected surface, $\Sigma(\gamma)$. This is obtained in the standard way by viewing the closed-string vertex as a pair of pants and the closed-string propagator as a cylinder, and using the $SO$ version of the double line notation to interpret the open-string sector as a Riemann surface. 

	The closed-string external lines of the graph $\gamma$ are viewed as punctures on $\Sigma(\gamma)$, i.e.\ they have been removed from the surface.   If the surface $\Sigma(\gamma)$ has Euler characteristic $\chi$ (where the punctures are not considered part of the surface), then the loop number is $1 - \chi$.

	More prosaically, with this convention, a diagram with $k$ closed-string vertices, $l$ open-string vertices, $m$ open-closed vertices, $n$ open-string propagators and $p$ closed-string propagators has loop number $n-l + k +1$.  Note that the two terms in the Green-Schwartz cancellation appear at loop number one, because one has a pair of open-closed vertices connected by a closed-string propagator, and the other has four open-string vertices and four open-string propagators. 

The dimension of a Lagrangian $\mc{L}$ determines where it can appear in the loop expansion. Expressions of dimension $3(L-1)$ appear at $L$ loops.  Here, we are treating the closed-string field with its natural dimension when viewed as a $(2,1)$-form $\eta$.  If we do this, then the closed-string kinetic term $\int \dbar \eta \partial^{-1} \eta$ is dimensionless, as is the open-closed interaction $\eta A \partial A$. The closed-string interaction is of dimension $3$ and the entire open-string Lagrangian is of dimension $-3$.    

	Since we are working at large $N$, so that trace relations are not imposed, it makes sense to ask how many traces in the open-string fields are involved in any Lagrangian.  A Lagrangian with $k$ traces can appear as a counter-term or anomaly for diagrams $\gamma$ where the surface $\Sigma(\gamma)$ has $k$ boundary components. 

	The computation in \cite{1905.09269} was of the cohomology of the space of possible Lagrangians, where the differential is the BV anti-bracket with the classical Lagrangian.  The computation used a spectral sequence\footnote{A standard technique for computing cohomology where we write the differential as a sum of two terms one of which is diagonal in some basis, and the other is upper-triangular.  The diagonal term dominates and the first term in the spectral sequence is obtained by taking cohomology with respect to it.} where, in the first term, we only consider terms in the differential which preserve the sum of the number of traces plus the number of closed string fields.  We found that in dimension $> 0$ the cohomology vanished.  

	This cohomology has two gradings: the dimension $d$, and the sum of the number $m$ of traces with the number $n$ of closed-string fields.  It is important to note that only Lagrangians with $d - m - n \ge -1$  can contribute as anomalies or counter-terms.  This is a purely combinatorial fact. To see this, consider a connected Feynman diagram $\gamma$ and corresponding connected surface $\Sigma(\gamma)$, with $m$ boundaries, $n$ punctures, and $\chi(\Sigma(\gamma)) = 1-d$. Fill in all the punctures and boundaries with discs, to get $\Sigma'$. The Euler characteristic of $\Sigma'$ is $\chi(\Sigma') = \chi(\Sigma) + m + n$.  Since $\chi(\Sigma') \le 2$, we have
\begin{equation} 
	 m + n \le 1 + d 
\end{equation}

	Now let us consider the case where we have the three-form $z^{-2} \d v_1 \d v_2 \d z$, and we work near $z = 0$. 
	Let $\boldsymbol{\til{A}}=\mbf{A}/z$ be the super field for the open string fields. This is the natural open string field because $\mbf{A}$ vanishes at $z=0$.  We let $\boldsymbol{\eta}$ that for the closed string fields. The components of $\boldsymbol{\eta}$ are
	\begin{equation} 
		\boldsymbol{\eta}_{12} = \iota_{\partial_{v_1}} \iota_{\partial_{v_2}} \boldsymbol{\eta} \ \ \boldsymbol{\eta}_{iz} =   \iota_{\partial_{v_1}} \iota_{\partial_{v_2}} \boldsymbol{\eta}
	\end{equation}
	Note that $\boldsymbol{\eta}_{12}$ is of charge $0$ under rotation of $z$, whereas  $\boldsymbol{\til{A}}$ and $\boldsymbol{\eta}_{i z}$  are both  of charge $-1$.   

The Lagrangian takes the form
	\begin{equation} 
		\begin{split}		\tfrac{1}{2}	\int \boldsymbol{\til{A}} \dbar \boldsymbol{\til{A}} \d v_1 \d v_2 \d z + \tfrac{1}{6}	\int \boldsymbol{\til{A}}[\boldsymbol{\til{A}},\boldsymbol{\til{A}}] z \d v_1 \d v_2 \d z \\ + \int \boldsymbol{\eta}_{12} \boldsymbol{\til{A}} \partial_z (z \boldsymbol{\til{A}}) \d v_1 \d v_2 z \d z \\
			+ \int \boldsymbol{\eta}_{i z} \boldsymbol{\til{A}} \partial_{v_j}\boldsymbol{\til{A}} \d v_1 \d v_2 z^2 \d z \\
			+ \int \dbar \boldsymbol{\eta} \partial^{-1} \boldsymbol{\eta} \\
			+ \int \boldsymbol{\eta}_{12} \boldsymbol{\eta}_{1z} \boldsymbol{\eta}_{2z} \d v_1 \d v_2 z^2 \d z
		\end{split} \label{eqn_oc_lagrangian}
	\end{equation}
	Note that the open-string Lagrangian is of weight $-1$ under rotation of $z$ (where $z$ has weight $1$ and $\boldsymbol{\til{A}}$ has weight $-1$). The closed-string kinetic term and the terms which couple open and closed fields are of weight $0$. The closed string interaction is of weight $1$.

	Therefore, as on $\C^3$, a Feynman diagram of $L$ loops (using our modified notion of loop number) must be of charge $L$ under rotation of $z$.

Further, charge zero is only achieved when for the kind of diagrams that appear in the Green-Schwarz cancellation: one-loop diagrams built from open-string vertices and propagators, or diagrams where two open-closed vertices exchange a closed-string propagator.

	If we work near $z = 0$ but remove the divisor where $z = 0$,  we can perform the coordinate change $z \mapsto z^{-1}$ to bring us to the previous situation with volume form $\d v_1 \d v_2 \d z$.   The analysis from \cite{1905.09269} then applies, and we find that there are no possible Lagrangians of non-zero charge under rotation of $z$.  The Green-Schwarz anomaly is the only uncharged Lagrangian of ghost number $1$, and there are no uncharged Lagrangians of ghost number $0$ or less.

We need to show that the sub-complex consisting of Lagrangians where we include the defect has the same cohomology. (Recall that the cohomology of a subcomplex can be larger than that of the complex it is contained in).

	The analysis above means that   we need to compare the cochain complex of possible Lagrangians and counter-terms in the bulk-defect system (including $z = 0$) to those of the bulk system (with $z = 0$ removed), only in charges $\ge 0$. 

The computation of possible Lagrangians is a cohomological problem, and, as such, can be analyzed by means of a spectral sequence.  To describe the possible Lagrangians, we will first pass to the cohomology with respect to the BV bracket with the kinetic term in the actions.  This acts by $\dbar$ on all operators.  Passing to this cohomology means that only holomorphic derivatives can appear.  Further, all the $\d \zbar$, $\d \vbar_i$  in the Dolbeault complexes of the fields $\boldsymbol{A}$, $\boldsymbol{\eta}$ are simply wedged together. This is because  the terms involving $\partial_{\zbar}$, $\partial_{\vbar_i}$  or contraction with $\partial_{\zbar}$, $\partial_{\vbar_i}$ cancel in the cohomology of the $\dbar$ operator. 

Now consider  the most general Lagrangian $\mc{L}(\boldsymbol{A},\boldsymbol{\eta})$, after passing to $\dbar$ cohomology as above.    It takes the form 
	\begin{equation} 
		\int z^k \d z\d v_1 \d v_2  D_{1} \boldsymbol{\til{A}} \dots D_{l} \boldsymbol{\til{A}} D_{l+1} \boldsymbol{\eta} \dots D_{l+m} \boldsymbol{\eta} \label{eqn_general_lagrangian} 
	\end{equation}
	where the $D_i$ are holomorphic differential operators, that is polynomials in $\partial_{v_i}$ and $\partial_z$. Further, $k \ge 0$.

	If we include the defect at $z = 0$, then $k \ge 0$; removing this defect allows us to include Lagrangians with $k < 0$. 

Lagrangians of this form form a cochain complex, where the differential is the BV anti-bracket with the interacting terms in the Lagrangian we start with. Because we are writing this in terms of the fields $\boldsymbol{\til{A}}$, the interacting term for the open-string fields involves a $z$.  This term in the differential imposes gauge invariance (cohomologically).  The Dirichlet defect at $z = 0$ is present only in the way we have multiplied the interaction term in holomorphic Chern-Simons by $z$. While operators living at $z = 0$ may fail to be gauge invariant, some version of gauge invariance is still present for operators that are integrated over $z$.

We will show that in charge $\ge 0$, Lagrangians in the bulk and bulk-defect system are the same. Since the cohomology of the bulk complex vanishes in charge $> 0$, the same will hold for the bulk-defect complex.

	To see that the two complexes are the same in charge $> 0$, consider the general expression of equation \eqref{eqn_general_lagrangian}.  Suppose that this Lagrangian is of charge $\ge 0$. All expressions in the Lagrangian except $z^k \d z$ derivatives contribute charge $\le 0$.  Therefore, to have non-negative charge, we must have $k > 0$. 

In the bulk system, the cohomology vanishes for charge $> 0$. This argument implies that it vanishes for the bulk-defect system too.

Let us now consider the charge zero complex. Feynman diagrams which can contribute at charge $0$ are  one loop and have only open-string external lines. This means they are either single trace or double trace.  In the bulk system, there are two possible Lagrangians at ghost number one (i.e.\ these are potential anomalies). These are 
\begin{equation} 
	\int \op{tr} \left( \boldsymbol{A} \partial \boldsymbol{A}  \right) \op{tr}   \left( \partial  \boldsymbol{A} \partial \boldsymbol{A}  \right) \ \ \int \op{tr} \left( \boldsymbol{A} (\partial \boldsymbol{A})^3 \right). 
\end{equation}
Here trace is taken in the fundamental representation.  Both anomalies are coordinate independent, so that this expression makes sense in the neighbourhood of the defect at $z = 0$. The anomaly which involves two traces always cancels by the Green-Schwarz mechanism. The anomaly involving only one trace vanishes only for $SO(N+8 \mid N)$.

To complete the proof, we need to show that there are no extra anomalies that occur because of the boundary condition at $z = 0$.  But, as we have already seen, the charge zero complex including $z = 0$ is isomorphic to that when we remove this locus, so there are no extra anomalies (or possible one loop counter-terms).

We can apply this argument in a neighbourhood of $z=\infty$ as well, to construct a unique quantization of the coupled system in this region.  The two quantum systems on the overlapping region agree, because in this region we know the quantization is unique.  We conclude that there is a unique quantization on all of $\PT$.

Now let us prove that the action of the group $G = GL_2 \ltimes \C^4$, which is a symmetry of the classical system, persists at the quantum level.  This is a consequence of the cohomology vanishing.  Proving this statement carefully  requires a little work, however, because this symmetry group does not preserve the gauge-fixing condition.  We need to formulate carefully what it means to have a symmetry of a theory which does not preserve the gauge-fixing condition.

This was done in \cite{costello2011renormalization}.  The idea is to think of a family of theories over the space of gauge fixing conditions, which is locally constant.  This concept was explained in \cite{costello2011renormalization}, and we will review some details. There is a space (we view it as a simplicial set) of gauge fixing conditions $\mc{GF}$ whose points are K\"ahler metrics on $\PT$.  An $n$-simplex in this simplicial set is a smooth family of K\"ahler metrics on $\PT$ parameterized by the $n$-simplex $\tr^n$.  

In \cite{costello2011renormalization}, the concept of QFT over the base dg ring $\Omega^\ast(\mc{GF})$ was formulated.  Any such field theory gives rise to a factorization algebra on $\PT$ over the base ring $\Omega^\ast(\mc{GF})$. Passing to cohomology, it gives a locally-constant family of factorization algebras on $\PT$, parameterized by $\mc{GF}$.  Since $\mc{GF}$ is contractible, this is the same as a since factorization algebra on $\PT$.  

The fact that the family of factorization algebras parameterized by $\mc{GF}$ is locally constant means that, if we take a path between two gauge fixing conditions, we get an isomorphism between the factorization algebras obtained by quantizing the theories in the two gauges.   

The obstruction theory arguments of \cite{costello2011renormalization} imply that, since $\mc{GF}$ is contractible and has no cohomology, the cohomology groups controlling quantizations of the classical theory theories over the base ring $\Omega^\ast(\mc{GF})$ is the same as that of quantizing the theory with a fixed gauge. 

Therefore, there is a unique quantum theory over the base ring $\Omega^\ast(\mc{GF})$. In particular, this shows that the factorization algebra on $\PT$ built from the quantum theory is independent of the choice of gauge.

Next, let us turn to the symmetries.  The group $G = GL_2 \ltimes \C^4$ acts on the simplicial set $\mc{GF}$ of gauge fixing conditions.  Thus, it makes sense to ask for a quantum theory with an action of $G$: it means that we take a theory over the base ring $\Omega^\ast(\mc{GF})$ with a $G$-action covering the natural $G$-action on $\Omega^\ast(\mc{GF})$.  

Once we have formulated the notion of symmetry which does not preserve the gauge fixing condition, it is completely formal to prove that the classical symmetries extend to the quantum level. We will do this by obstruction theory. The only fact we will need is that the cohomology of the obstruction-deformation complex is zero in degrees $\le 0$. 

Let us denote by $\mc{T}$ the quantum theory over the base ring $\Omega^\ast(\mc{GF})$. Given an element $g \in G$, we can form a new theory $g^\ast \mc{T}$. We need to give an equivalence of theories $\mc{T} \simeq g^\ast \mc{T}$.  Since, classically, $G$ is a symmetry, $\mc{T}$ and $g^\ast \mc{T}$ are the same at the classical level, so that they are both quantizations of the classical theory.  

However, we have seen that there is a unique quantization, by cohomology vanishing.  It follows that there is an equivalence 
$\mc{T} \simeq g^\ast \mc{T}$.  (By equivalence we are using the fact proved in \cite{costello2011renormalization} that the collection of quantizations of a fixed classical theory forms a simplicial set in a natural way, where one-simplices are equivalences, two-simplices are equivalences between equivalences, and so on).

Since the cohomology groups of the obstruction-deformation complex vanish in degrees $\le 0$, the equivalence $\mc{T} \simeq g^\ast \mc{T}$ is unique up to contractible choice. It follows from this that if $g_1,g_2 \in G$, the two equivalences we get between $\mc{T}$ and $g_1^\ast g_2^\ast \mc{T}$ coincide.  

The family of theories $g^\ast \mc{T}$ depends smoothly on $g \in G$, and so defines a family of theories over the base ring $\cinfty(G) \what{\otimes} \Omega^\ast(\mc{GF})$.  Classically this family of theories is a trivial family over $G$. The obstruction argument above applies without any change to show that we can get an equivalence $g^\ast \mc{T} \sim \mc{T}$ of theories which varies smoothly with $g$.

This shows that we get a $G$-equivariant factorization algebra on $\PT$ from the quantum theory. The final step is to show that, at the infinitesimal level, the action of the anti-holomorphic part of the Lie algebra is homotopically trivial. This is true, because it is true classically and all obstructions to finding a trivialization at the quantum level also vanish. 
\end{proof}

\section{Calculation of logarithmic OPEs from twistor space}
\label{sec:twistor_ope}
In this appendix I will give a proof of propositions \ref{prop:log_wzw} and \ref{prop:ope_sdym}.

Both propositions are basically the same. We take points $u^{(i)}$ on $\R^4$, for $i = 1,2,3$ and $(0,1)$ forms on twistor space given by
\begin{equation}
\begin{split}
	\mc{A}^{(3)} &= \frac{1}{2 \pi \i} z \delta_{v_1 = u_1^{(3)} + z \ubar_2^{(3)} } \frac{1}{ v_2 - u_2^{(3)} + z \ubar_1{(3)}  }\\
	\mc{A}^{(1)} &= \frac{1}{2 \pi \i} z\delta_{v_2 = u_2^{(1)} - z \ubar_1^{(1)} } \frac{1}{ v_1 - u_1^{(1)} - z \ubar_1{(1)}  }\\
\end{split}
\end{equation}
These are the fields sourced by operators in either self-dual Yang-Mills or $WZW_4$ on $\R^4$.

We then insert these fields into two legs of the non-local vertex
\begin{equation}
			\frac{1}{ 8 \pi \i   (4 \pi)^4 }  \mc{A}^a(z) \partial \mc{A}^b (z) \frac{\d z \d z'}{(z - z')^2} \mc{A}^c(z') \partial \mc{A}^d(z') \op{Tr}( \t^{(a} \dots \t^{d)}). 
\end{equation}
We will insert the fields sourced by the operators into the legs which to not have a de Rham operator applied to them. This is legitimate, because, by integration by parts, we can move the operator $\partial$ from one of the two legs associated to coordinate $z$ to the other.  This choice adds an overall factor of $4$.

Thus, we need to compute (dropping the Lie algebra indices)
\begin{equation}
			\frac{1}{ 2 \pi \i   (4 \pi)^4 }  \mc{A}^{(1)}(z) \partial \mc{A} (z) \frac{\d z \d z'}{(z - z')^2} \mc{A}^{(3)}(z') \partial \mc{A}(z')
\end{equation}
Before we proceed further, we need to determine the background field $\mc{A}$.

Let us first discuss the difference between the computation in the case of $WZW_4$ or self-dual Yang-Mills.  Every on-shell field configuration of self-dual Yang-Mills can be realized from one of $WZW_4$ \cite{2011.04638}, by
\begin{equation} 
	A =  J^{0,1}. \label{eqn_sdym_wzw} 
\end{equation}
The equations of motion of $WZW_4$, which state that $\omega \wedge \d J^{0,1} = 0$, are equivalent to saying that this gauge field satisfies the self-dual Yang-Mills equation $F(A)_+ = 0$.

From the twistor space perspective, the fields of $WZW_4$ are, as we have discussed, $(0,1)$ forms which vanish at $z = 0$, $z = \infty$. The fields of self-dual Yang-Mills are simply $(0,1)$-forms.  We can turn the twistor representative of a field of $WZW_4$ into one of self-dual Yang-Mills by the natural inclusion from $(0,1)$ forms which vanish at $0,\infty$ into all $(0,1)$-forms. This map corresponds, in four dimensions, to the map \eqref{eqn_sdym_wzw}.  

This discussion tells us that we lose no generality by focusing on the $WZW_4$ case, and inserting a background field $\mc{A}(z)$ appropriate to the twistor representation of $WZW_4$. 

Let us now make and justify some simplifying assumptions which will allow us to write down an explicit field $\mc{A}(z)$.  For $WZW_4$,  we  are computing the OPE to $\phi = \log \sigma$: we will find that
\begin{equation} 
	\phi^a(0) \phi^b(x) \sim \hbar^2 \op{Tr} (\t^{(a} \dots t^{d)}) \partial_{u_i} \partial_{\ubar_j} \phi^c (0) \partial_{u_k} \partial_{\ubar_l} \phi^d(0) \eps^{ik} \eps^{jl} \log \norm{x} + \dots \label{eqn_ope_form} 
\end{equation}
where $\dots$ indicates terms of higher order in $\phi$, terms which are non-singular as $\norm{x} \to 0$, and terms with no branch cuts as $\norm{x} \to 0$. 

In this approximation, we need only compute $\d \mc{A}(z)$ to linear order in $\phi$.  

Similarly, for self-dual Yang-Mills, we will compute the output of the OPE to quadratic order in the gauge field $A$.  This is sufficient, because we know we must end up with a gauge invariant operator, and the cubic and higher order terms in $A$ are determined by gauge invariance. In this case, we can again write $A = \d \phi$ where $\phi^a$ is again harmonic. 

Since we are interested in an OPE of the form \eqref{eqn_ope_form}, we will be taking one $u$ and one $\ubar$ derivative of $\phi$. We can detect an OPE of this form by inserting a background field $\phi$ which is a quadratic function of $u,\ubar$ (linear in each variable).  We can assume that $\phi$ is of this form, and take for a twistor representative of $\phi$ the expression
\begin{equation} 
	\mc{A}(z) =  \frac{v_i v_j \abs{z}^2  \d \zbar }{(1 + \abs{z}^2)^4}\label{twistor_harmonic} 
\end{equation}

There are $3$ such expressions, which correspond to the three quadratic harmonic functions $u_1\ubar_2$, $u_2 \ubar_1$, $u_1 \ubar_1 - u_2 \ubar_2$.  We can detect the OPE we are interested in by taking the field at the vertex with coordinate $z$ to be the expression \eqref{twistor_harmonic} with $i,j = 1$, and that at coordinate $z'$ by taking the same expression with $i,j = 2$:
\begin{equation}
	\begin{split}
		\mc{A}(z) &=  \frac{v_1^2 \abs{z}^2  \d \zbar }{(1 + \abs{z}^2)^4}\\
	\mc{A}(z') &=  \frac{v_2^2 \abs{z'}^2  \d \zbar' }{(1 + \abs{z'}^2)^4}
	\end{split}
\end{equation}
This means that $\d \mc{A}(z)$ and $\d \mc{A}(z')$, which are the expressions that appear in our computation, are
\begin{equation} 
	 \begin{split}
		 \d \mc{A}(z) &= 2 (\d u_1 + z \d \ubar_2)  \frac{(u_1 + z \ubar_2) \abs{z}^2  \d \zbar }{(1 + \abs{z}^2)^4}\\
		 \mc{A}(z') &=  2 (\d u_2 - z' \d \ubar_1) \frac{(u_2 - z'\ubar_1) \abs{z'}^2  \d \zbar' }{(1 + \abs{z'}^2)^4}\\
	\end{split}
\end{equation}
Note that we have not included any terms involving $\d z$ and $\d z'$, as there is already a $\d z \d z'$ in the vertex.

With the values described above, we need to perform the integral
\begin{equation} 
	\int_{z,z',u^{(2)},\br{u}^{(2)}} \mc{A}^{(1)}(z) \partial \mc{A} (z) \frac{\d z \d z'}{(z - z')^2} \mc{A}^{(3)}(z') \partial \mc{A}(z')\label{eqn_local_vertex_integral} 
\end{equation}

Now let us turn to the computation. The non-local vertex is integrated over pairs of points in twistor space which both lie on the curve corresponding to $u^{(2)} \in \R^4$. We then integrate over the $u^{(2)}, \ubar^{(2)}$ variables.  Therefore we can set 
\begin{equation} 
	\begin{split} 
		v_1 &= u_1^{(2)} + z \ubar_2^{(2)} \\
		v_2 &= u_2^{(2)} - z \ubar_1^{(2)} \\
		v'_1 &= u_1^{(2)} + z' \ubar_2^{(2)} \\
		v'_2 &= u_2^{(2)} - z' \ubar_1^{(2)}  
	\end{split}
\end{equation}
where we have used $v,v'$ to indicate the coordinates at the two points in twistor space, where the third coordinate at each point is $z,z'$. 

This allows us to express the contribution from the fields $\mc{A}^{(1)}$,$\mc{A}{(3)}$ in terms of the $u$ coordinates. At the vertex with coordinates $z',v'$,  $\mc{A}^{(3)}$ contributes a $\delta$-function at 
\begin{equation} 
	\delta^{(2)}_{v_1' = u_1^{(3)} + z' \ubar_2^{(3)} } 
\end{equation}
This $\delta$-function is viewed as $(0,1)$ form, namely the function multiplied by $\d \vbar'_1$ (a priori we would also have a term involving $\d \zbar'$ but this drops out because $\mc{A}(z')$ contains a $\d \zbar'$). Further, it is a $\delta$-function enforcing two real, or one complex, constraint.   

When translating to the coordinates $u^{(2)}$, $z'$ it becomes
\begin{equation} 
	\frac{\d \ubar_1^{(2)} + \zbar' \d u_2^{(2)}}{\abs{u_2^{(23)}}^2}   \delta_{ z' = -\frac{ u_1^{(23)}  }{\ubar_2^{(23)}}  } 
\end{equation}
where $u_i^{(23)}= u_i^{(2)} - u_i^{(3)}$.  The factor of $\abs{\ubar_2^{(23)}}^{-2}$ comes from the Jacobian factor from the change of coordinates in the $\delta$-functions.

Similarly, associated to the vertex with coordinate $z$, we have the product of the delta functions $\delta_{v_2 = u_2^{(1)} - z \ubar_1^{(1)}}$ and $\delta_{v_2 = u_2^{(2)} - z \ubar_1^{(2)} }$.  This gives
\begin{equation} 
	\frac{\d \ubar^{(2)}_2 - \zbar \d u^{(2)}_1}{\abs{u_1^{(21)}}^2} \delta_{z = \frac{u_2^{(12)}}{\ubar_1^{(12)}}}  
 \end{equation}

Including all the terms gives in equation \eqref{eqn_local_vertex_integral} 
\begin{equation} 
	\begin{split}
		\int_{z,z',u^{(2)},\ubar^{(2)} }  \d z \d \zbar \d z' \d \zbar'
		&\delta_{z = \frac{u_2^{(12)}}{\ubar_1^{(12)}}}	\delta_{ z' = -\frac{ u_1^{(23)}  }{\ubar_2^{(23)}}  }
		\frac{zz' }{(z-z')^2}  \frac{1}{\abs{u_2^{(23)}}^2 \abs{u_1^{(21)}}^2} \\	
		&\frac{1}{ u_1^{(21)} + z \ubar_2^{(21)}}\frac{1}{ u_2^{(23)} - z' \ubar_1^{(23)}}\\
		& (u_1^{(2)} + z \ubar_2^{(2)})(u_2^{(2)} - z' \ubar_1^{(2)})\frac{\abs{z}^2 \abs{z'}^2}{(1 + \abs{z}^2)^4 (1 + \abs{z'}^2)^4} \\
		&(\d u^{(2)}_1 + z \d \ubar_2^{(2)} )(\d u^{(2)}_2 - z' \d \ubar_1^{(2)})	(\d \ubar^{(2)}_2 - \zbar \d u^{(2)}_1)(\d \ubar^{(2)}_1 + \zbar' \d u^{(2)}_2)\label{nonlocal_allterms} 
	\end{split} 
\end{equation} 
Let us simplify this expression a bit by noting that
\begin{equation}
	\begin{split}
	(\d u^{(2)}_1 + z \d \ubar_2^{(2)} )(\d u^{(2)}_2 - z' \d \ubar_1^{(2)})	(\d \ubar^{(2)}_2 - \zbar \d u^{(2)}_1)(\d \ubar^{(2)}_1 + \zbar' \d u^{(2)}_2) \\
	= -(1 + \abs{z}^2) (1 + \abs{z'}^2) \d u^{(2)}_1 \d \ubar_1^{(2)} \d u^{(2)}_2 \d \ubar_2^{(2)} 
	\end{split}
\end{equation}
Therefore we can replace the factors of $(1 + \abs{z}^2)^{-4}$ and $(1+\abs{z'}^2)^{-4}$ by $(1+\abs{z}^2)^{-3}$,$(1+\abs{z'}^2)^{-3}$ in \eqref{nonlocal_allterms}. 		

As $z$ is a $\delta$-function at $u_2^{(21)}/\ubar_1^{(21)}$ and $z'$ is a $\delta$-function at $-u_1^{23}/\ubar_2^{23}$, we simply have to insert these values for $z,z'$ in the expression above. Running through the factors one by one, we have
\begin{equation}
\begin{split}
	\frac{1}{\abs{u_2^{(23)}}^2 \abs{u_1^{(21)}}^2} &= \frac{1}{\abs{u_2^{(23)}}^2 \abs{u_1^{(21)}}^2} \\	
\frac{z z'}{(z-z')^2} &= \frac{1}{z - z'} \frac{1}{1/z'  - 1/z} \\
	&=\frac{ \ubar_2^{(23)} \ubar_1^{(12)} u_1^{(23)} u_2^{(12)} }{ (\ubar_2^{(23)} u_2^{(12)} + u_1^{(23)} \ubar_1^{(12)} )^2 } \\
	\frac{1}{ u_1^{(21)} + z \ubar_2^{(21)}}&=   \frac{\ubar_1^{(21)}}{\norm{u^{(12)}}^2} \\
	\frac{1}{ u_2^{(23)} - z' \ubar_1^{(23)}}	&=\frac{\ubar_2^{(23)}}{\norm{u^{(23)}}^2} \\
		\frac{\abs{z}^2}{(1 + \abs{z}^2)^3} &= \frac{ \abs{u_1^{(12)}}^4 \abs{u_2^{(12)}}^2} {\norm{u^{(12)}}^6}\\ 
	\frac{\abs{z'}^2}{(1+ \abs{z'}^2)^3} &= \frac{ \abs{u_2^{(23)}}^4\abs{u_1^{(23)} }^2 }{\norm{u^{(23)}}^6} \\ 
	u_1^{(2)} + z \ubar_2^{(2)} &=  \frac{ u_1^{(2)} \ubar_1^{(21)} + \ubar_2^{(2)} u_2^{(21)}  }{\ubar_1^{(21)}} \\
	u_2^{(2)} - z' \ubar_1^{(2)} &= \frac{u_2^{(2)} \ubar_2^{(23)} + \ubar_1^{(2)} u_1^{(23)} }{\ubar_2^{(23)}} 	
\end{split}
\end{equation}

Now let us further simplify the expression by choosing explicit values for $u^{(1)}$ and $u^{(3)}$, which are the positions of the two operators whose OPE we are studying. We set $u^{(1)} = 0$, $u_2^{(3)} = 0$, $u_1^{(3)} = c$ where $c$ is real.  We then drop the indices labeling which vertex, and let $v$ be the unit vector $(1,0,0,0)$, so $u^{(3)} = c v$.  Then, in the above expressions, we replace each occurrence of $u^{(2)}$ or $u^{(21)}$ by $u$, replace $u_2^{(23)}$ by $u_2$, $u_1^{(23)}$ by $u_1 - c$, and $\ubar_1^{(23)}$ by $\ubar_1 - c$.

The result is the product of 
\begin{equation}
\begin{split}
	\frac{1}{\abs{u_1}^2 \abs{u_2}^2}	\frac{ \ubar_2  \ubar_1  (u_1-c)u_2  }{ (\ubar_2  u_2  + (u_1 - c) \ubar_1  )^2 } 
   \frac{\ubar_1 }{\norm{u }^2} 
\frac{\ubar_2 }{\norm{u-c v}^2} 
 \frac{ \abs{u_1 }^4 \abs{u_2 }^2} {\norm{u }^6} 
 \frac{ \abs{u_2 }^4\abs{u_1 - c }^2 }{\norm{u - c v}^6}  
	\frac{ \norm{u}^2 }{\ubar_1 } 
	\frac{u_2  \ubar_2  + \ubar_1  (u_1- c) }{\ubar_2 } 
\end{split}
\end{equation}
which simplifies to
\begin{equation} 
	\begin{split}
\frac{ \ubar_1 (u_1-c)   \abs{u_1 - c }^2  \abs{u_1 }^2 \abs{u_2 }^6} {\norm{u }^6  \norm{u - c v}^8  (\ubar_2  u_2  + (u_1 - c) \ubar_1  )}  
	\end{split}
\end{equation}
Putting all these things together, we find we need to compute the integral 
\begin{equation} 	
	\int_{\norm{u} \le 1} \d^2 u_1 \d^2 u_2 \frac{ \ubar_1 (u_1-c)   \abs{u_1 - c }^2  \abs{u_1 }^2 \abs{u_2 }^6} {\norm{u }^6  \norm{u - c v}^8  (\ubar_2  u_2  + (u_1 - c) \ubar_1  )}  
\end{equation}
We have an IR cut-off where we integrate over the domain $\norm{u} \le 1$.  

The integral is tricky to compute exactly. However, we are only interested in the logarithmic singularities as $c \to 0$.  These we can compute by examining the logarithmic derivative of the integral:
\begin{equation} 	
		c \frac{\d}{\d c} \int_{\norm{u} \le 1} \d^2 u \d^2 \frac{ \ubar_1 (u_1-c)   \abs{u_1 - c }^2  \abs{u_1 }^2 \abs{u_2 }^6} {\norm{u }^6  \norm{u - c v}^8  (\ubar_2  u_2  + (u_1 - c) \ubar_1  )}  
\end{equation}
Note that the integrand is unchanged when we scale $c$ and $u,\ubar$ by the same factor. This means that
\begin{equation} 
	\begin{split}
		\int_{\norm{u} \le 1} c \frac{\d}{\d c} \d^2 u \d^2 \frac{ \ubar_1 (u_1-c)   \abs{u_1 - c }^2  \abs{u_1 }^2 \abs{u_2 }^6} {\norm{u }^6  \norm{u - c v}^8  (\ubar_2  u_2  + (u_1 - c) \ubar_1  )} \\ 
		= \int_{\norm{u} \le 1} \left( u_i \frac{\d}{\d u_i} +  \ubar_i \frac{\d}{\d \ubar_i}  \right)  \d^2 u_1 \d^2 u_2 \frac{ \ubar_1 (u_1-c)   \abs{u_1 - c }^2  \abs{u_1 }^2 \abs{u_2 }^6} {\norm{u }^6  \norm{u - c v}^8  (\ubar_2  u_2  + (u_1 - c) \ubar_1  )}   
	\end{split}
\end{equation}
Stokes' theorem then recasts the logarithmic derivative as
\begin{equation}
	\begin{split}
	\int_{\norm{u} = 1} (\eps^{ij} u_i \d u_j \d^2 \ubar + \eps^{ij} \d^2 u \ubar_i \d \ubar_j)     \frac{ \ubar_1 (u_1-c)   \abs{u_1 - c }^2  \abs{u_1 }^2 \abs{u_2 }^6} {\norm{u }^6  \norm{u - c v}^8  (\ubar_2  u_2  + (u_1 - c) \ubar_1  )}\\ 
		= \int_{\norm{u} = 1} \d Vol_{S^3}  \frac{ \ubar_1 (u_1-c)   \abs{u_1 - c }^2  \abs{u_1 }^2 \abs{u_2 }^6} {\norm{u }^6  \norm{u - c v}^8  (\ubar_2  u_2  + (u_1 - c) \ubar_1  )}  
	\end{split}
\end{equation}
The integral over the sphere $\norm{u} = 1$ can be expanded in series in $c$. Only the constant term will contribute a logarithm in the OPE. The other terms will give expressions that converge as $c \to 0$.  Sending $c \to 0$, we conclude that the logarithmic term in the OPE is
\begin{equation} 
	\begin{split}
		& \log c \lim_{c \to 0} \int_{\norm{u} =1} \d Vol_{S^3}  \frac{  \abs{u_1 }^6 \abs{u_2 }^6} {  \norm{u}^{16}  }   
	\\
		&	= \log c \int_{\norm{u} =1} \d Vol_{S^3} \abs{u_1}^6 \abs{u_2}^6 
	\end{split}
\end{equation}
which is clearly non-zero, as it is the integral of a non-negative function against the volume form on the sphere.

We conclude that there is a non-zero logarithmic divergence, as desired.

\section{Normalizing the two-point function}
We will work in coordinates $w_{\alpha \dot{\alpha}}$ so that
\begin{equation} 
	\begin{split} 
		w_{1\dot{1}} &= u_1\\
		w_{1 \dot{2}} &= u_2 \\
		w_{2 \dot{1}} &= \ubar_2\\
		w_{2 \dot{2}} &= -\ubar_1
	\end{split}
\end{equation}
We have
\begin{equation}
	\begin{split} 
	\d^4 w &= 4 \d^4 x \ \ \norm{w}^2 =   -\half{\eps}^{\alpha\beta} \eps{\dot{\alpha} \dot{\beta}} w_{\alpha \beta} w_{\dot{\alpha}} \dot{\beta}  \\   \Lap &= \partial_{x_i}^2 = -2   \eps_{\alpha \beta} \eps_{\dot{\alpha} \dot{\beta}} \partial_{w_{\alpha \dot{\alpha}}} \partial_{w_{\beta \dot{\beta}}} 
	\end{split}
\end{equation}
	Consider the field sourced by $B^{\alpha \beta}$.  This is a one-form $A^{\alpha \beta} $  which satisfies
\begin{equation} 
	(\d A^{\alpha \beta} )_+ =  \tfrac{1}{4} \eps_{\dot{\alpha} \dot{\beta}} \iota_{\partial_{\alpha \dot{\alpha}} } \iota_{\partial_{\beta \dot{\beta}}} \delta_{w = 0} 
\end{equation}
	where we view $\delta_{w = 0}$ as a $4$-form.

	Lowering the indices of $A^{\alpha \beta}$ using the $\eps$ tensor, and translating the $\delta$-function viewed as a four-form into that viewed as a scalar by using the volume form $ \d^4 x = \frac{1}{4} \d^4 w$, the equation becomes
	\begin{equation} 
		(	\d A_{\alpha \beta} )_+ =  \tfrac{1}{16} \d w_{\alpha \dot{\alpha}} \d w_{\beta \dot \beta} \eps^{\dot{\alpha} \dot{\beta}}  \delta_{w = 0} 	 
	\end{equation}

This can be solved by	
	\begin{equation} 
		A_{\alpha \beta} =  C \d^\ast  \d w_{\alpha \dot{\alpha}} \d w_{\beta \dot{\beta}} \eps^{\dot{\alpha} \dot{\beta} }  G(w) 
	\end{equation}
	where $G(w)$ is the Green's function, $C$ is a constant to be determined, and  $\d^\ast$ is normalized so that it takes the form
	\begin{equation} 
		\d^\ast =   \iota_{\partial_{w_{\alpha \dot{\alpha}} }} \partial_{w_{\beta \dot{\beta}}} \eps_{\alpha \beta} \eps_{\dot{\alpha} \dot{\beta}}  
	\end{equation}
	More precisely, the Green's function is 
	\begin{equation} 
		G(w) = \frac{1}{4 \pi^2 \norm{w}^2} =  \frac{1}{4 \pi^2 \norm{u}^2} = \frac{1}{4 \pi^2 \norm{x}^2}  
	\end{equation}
	It satisfies	
	\begin{equation} 
		\Lap G = \delta_{w = 0} / \d^4 x
	\end{equation}
		To determine the constant $C$, we take the indices $\alpha,\beta = 1$ so that
	\begin{equation}
		\begin{split} 
		A_{11} &= 2 C \d^\ast \d u_1 \d u_2 \frac{1}{4 \pi^2 \norm{u}^2} \\
			&= 2 C  (\d u_2 \partial_{\ubar_1} - \d u_1 \partial_{\ubar_2})   \frac{ 1  }{4 \pi^2 \norm{u}^2} .
		\end{split}	
	\end{equation}
	It is clear that the self-dual part of $\d A_{11}$ is  
	\begin{equation}
		\begin{split}
			2 C \d u_1 \d u_2 ( \partial_{u_1} \partial_{\ubar_1} + \partial_{u_2} \partial_{\ubar_2} ) \frac{1}{4 \pi^2 \norm{u}^2} 	&= \frac{1}{2} C \d u_1 \d u_2 \delta_{u,\ubar = 0} / \d^4 x\\
			& = \frac{1}{4} C \d w_{1 \dot{\alpha}} \d w_{1 \dot{\beta}} \eps^{\dot{\alpha} \dot{\beta}} \delta_{w = 0}/\d^4 x. 
		\end{split}
	\end{equation}
	Therefore, we should take $C = \frac{1}{4}$, and we find
	\begin{equation} 
		\begin{split} 
			A_{\alpha \beta} &=  -\frac{1}{16 \pi^2}  \d^\ast  \d w_{\alpha \dot{\alpha}} \d w_{\beta \dot{\beta}} \eps^{\dot{\alpha} \dot{\beta} } \frac{1}{\norm{w}^2 }\\ 
			&=	-\frac{1}{8 \pi^2}  \d^\ast  \d w_{\alpha \dot{\alpha}} \d w_{\beta \dot{\beta}} \eps^{\dot{\alpha} \dot{\beta} } \frac{1}{ \eps^{\alpha\beta} \eps{\dot{\alpha} \dot{\beta}} w_{\alpha \beta} w_{\dot{\alpha}} \dot{\beta} } \\
			&= \frac{1}{8 \pi^2  (\eps^{\alpha\beta} \eps{\dot{\alpha} \dot{\beta}} w_{\alpha \beta} w_{\dot{\alpha}} \dot{\beta})^2 }      \left( w_{\alpha \dot{\alpha}} \d  w_{\beta \dot{\beta}} \eps^{\dot{\alpha} \dot{\beta}} +    w_{\beta \dot{\alpha}} \d  w_{\alpha \dot{\beta}} \eps^{\dot{\alpha} \dot{\beta}}    \right)\\ 
			&= \frac{1}{32 \pi^2 \norm{w}^4}   \left( w_{\alpha \dot{\alpha}} \d  w_{\beta \dot{\beta}} \eps^{\dot{\alpha} \dot{\beta}} +    w_{\beta \dot{\alpha}} \d  w_{\alpha \dot{\beta}} \eps^{\dot{\alpha} \dot{\beta}}    \right) 
		\end{split}
	\end{equation}
	where $\norm{w}^2 = -\half \eps^{\alpha \beta} \eps^{\dot{\alpha}\dot{\beta}} w_{\alpha \dot{\alpha}} w_{\beta \dot{\beta}}$ as before.

From this we see that
	\begin{equation}
		\begin{split} 
			(\d A_{\alpha \beta} )_- = &-\frac{1}{16 \pi^2 \norm{w}^6} \eps^{\lambda \gamma} \eps^{\dot{\lambda} \dot{\gamma}}  w_{\lambda \dot{\lambda}} \d w_{\gamma \dot{\gamma}}   \left( w_{\alpha \dot{\alpha}} \d  w_{\beta \dot{\beta}} \eps^{\dot{\alpha} \dot{\beta}} +    w_{\beta \dot{\alpha}} \d  w_{\alpha \dot{\beta}} \eps^{\dot{\alpha} \dot{\beta}}    \right)   		
		\end{split}	
	\end{equation}
	Contracting this with $\frac{1}{4} \iota_{\partial_{\mu \dot{\mu}}} \iota_{\partial_{\nu \dot{\nu}}} \eps^{\mu \nu}$ gives us 
	\begin{equation} 
		\frac{1}{32 \pi^2 \norm{w}^6 } \left(  w_{\beta \dot{\lambda}} w_{\alpha \dot{\alpha}}  \eps^{\dot{\lambda} \dot{\mu}} \eps^{\dot{\alpha} \dot{\nu}}  + w_{\alpha \dot{\lambda}} w_{\beta \dot{\alpha}}  \eps^{\dot{\lambda} \dot{\mu}} \eps^{\dot{\alpha} \dot{\nu}}       \right)  
	\end{equation} 
We conclude that the two-point function is
	\begin{equation} 
		\ip{ F_{\dot{\alpha} \dot{\beta}} B_{\alpha \beta} } = 		 \frac{1}{32 \pi^2 \norm{w}^6 } \left(  w_{\beta \dot{\beta}} w_{\alpha \dot{\alpha}} + w_{\alpha \dot{\beta}} w_{\beta \dot{\alpha}} \right) 
	\end{equation}
For any two anti-self-dual two-forms $F_1$, $F_2$, we have
\begin{equation}
	\begin{split} 
		F_1 \wedge F_2 &= 4 \d^4 w \eps_{\dot{\alpha}_1 \dot{\beta}_1} \eps_{\dot{\alpha}_2 \dot{\gamma}_2} F_1^{\dot{\alpha}_1 \dot{\beta}_1} F_2^{\dot{\gamma}_1 \dot{\gamma}_2}     \\ 	
		&= 16 \d^4 x \eps_{\dot{\alpha}_1 \dot{\beta}_1} \eps_{\dot{\alpha}_2 \dot{\gamma}_2} F_1^{\dot{\alpha}_1 \dot{\beta}_1} F_2^{\dot{\gamma}_1 \dot{\gamma}_2}      	 
	\end{split}	
\end{equation}
Therefore,
\begin{equation}
	\begin{split} 
		F \wedge \d A_{\alpha \beta} &= 16 \d^4 x  F^{\dot{\alpha} \dot{\beta}} \ip{   \ip{ F_{\dot{\alpha} \dot{\beta}} B_{\alpha \beta} }}   \\
		&= \frac{1}{ \pi^2 \norm{w}^6 } \d^4 x  F^{\dot{\alpha} \dot{\beta}}     w_{\beta \dot{\beta}} w_{\alpha \dot{\alpha}} \\ 
		&=  2 \d^4 x  \eps_{\alpha \lambda} \eps_{\beta \mu} 	F_{\dot{\lambda} \dot{\mu}} \dpa{w_{\lambda \dot{\lambda}}}  \dpa{w_{\mu \dot{\mu}} }  G(w) 
\end{split}
\end{equation}

\bibliographystyle{JHEP}

\bibliography{final.bib}

\end{document}